\def\maketag@@@#1{\hbox{\m@th\normalfont\normalsize#1}}
\numberwithin{equation}{section}
\DeclareMathOperator{\abs}{abs}
\newcommand\overmat[2]{%
  \makebox[0pt][l]{$\smash{\color{black}\overbrace{\phantom{%
    \begin{matrix}#2\end{matrix}}}^{\text{\color{black}#1}}}$}#2}
\def\cvd{~\vbox{\hrule\hbox{%
     \vrule height1.3ex\hskip0.8ex\vrule}\hrule } }
\newcommand{\Names}{Moshe Goldberg, Daniel Hershkowitz, and Daniel Szyld}
\newcommand{\Title}{Maximum Absolute Determinants of Upper Hessenberg Bohemian Matrices}
\newtheorem{remark}[theorem]{Remark}
\newtheorem{conjecture}[theorem]{Conjecture}
\newtheorem{problem}[theorem]{Problem}
\newtheorem{notation}[theorem]{Notation}
\newtheorem{observation}[theorem]{Observation}
\newcommand{\reals}{\mathbb{R}}
\begin{document}

\bibliographystyle{plain}

%  Leave these commented lines here
%\input{ELAheader-template.tex}
% ELA insert correct page number
\setcounter{page}{1}

\thispagestyle{empty}

%Insert the title of the paper
 \title{\Title}

\author{Jonathan P. Keating\thanks{Mathematical Institute, University of Oxford, Andrew Wiles Building, Oxford, OX2 6GG, UK (keating@maths.ox.ac.uk)}
% Remember to put \and between any two authors
\and
Ahmet Abdullah Kele\c{s}\thanks{Department of Mathematics, Bilkent University, 06800 Bilkent, Ankara, Turkey (abdullah.keles@ug.bilkent.edu.tr)} }

\markboth{\Names}{\Title}

\maketitle

\begin{abstract}
A matrix is called Bohemian if its entries are sampled from a finite set of integers. We determine the maximum absolute determinant of upper Hessenberg Bohemian Matrices for which the subdiagonal entries are fixed to be $1$ and upper triangular entries are sampled from $\{0,1,\dots,n\}$, extending previous results for $n=1$ and $n=2$ and proving a recent conjecture of Fasi \& Negri Porzio \cite{8}. Furthermore, we generalize the problem to non-integer-valued entries.
\end{abstract}

\begin{keywords}
Upper Hessenberg, Bohemian Matrix, Maximum Absolute Determinant.
\end{keywords}
\begin{AMS}
11C20, 15A15, 15B36. 
\end{AMS}

%%%%%%%%%%%%%%%%%%%%%%%%%%%%%%%%%%%%%%%%%%%%%%%%%%%%%%%%%%%%%
\section{Introduction} \label{intro-sec}
Matrices whose entries are from a small subset of the integers are said to be Bohemian, an abbreviation of \textit{BOunded HEight Matrix of integers}. These matrices appear in many different contexts, including adjacency matrices of graphs \cite{10}, Hadamard matrices \cite{1,2}, random discrete matrices \cite{4} and alternating sign matrices \cite{3}.  They have been studied for over a century and remain a subject of active research.  The website \cite{7} provides a comprehensive overview of recent results and open problems.   

Recently, Chan {\it et al.} investigated several properties of the characteristic polynomials of upper Hessenberg Bohemian matrices \cite{5}, and Thornton {\it et al.} obtained a number of results concerning the distribution of their eigenvalues, characteristic heights, and maximum absolute determinant values \cite{6}. These papers state several conjectures on the values of the determinants of Bohemian matrices \cite{7}, many of which have recently been solved and generalised by Massimiliano Fasi and Gian M. N. Porzio \cite{8}. 
  
One of these conjectures, which is a refinement of a result of Li Ching \cite{9}, was until now lacking a solution that could be generalised. We here provide a generalisable solution for that problem and explore an extension of it.  Specifically, we focus on the maximum absolute determinant of upper Hessenberg matrices with fixed subdiagonal entries and a given population $[0,t]$ of upper triangular entries. Our calculations provide a more comprehensive understanding of the behaviour of the determinants of these kind of matrices and include special cases that had previously been solved by other approaches.  \bigskip

\section{Results}
 In 1993, Ching proved the following theorem.  
\begin{theorem}\label{theorem2.1}
The maximum absolute determinant of an $n \times n$ upper-Hessenberg matrix with upper triangular entries from $\{0,1\}$ and subdiagonal entries fixed at $1$ is given by the Fibonacci sequence. \cite{9} 
\end{theorem}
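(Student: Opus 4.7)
The plan is to expand $\det H_n$ along the last column, read off a clean linear recursion for the sequence of leading principal minor determinants $D_k := \det(H_n[1{:}k,\,1{:}k])$ (with $D_0 := 1$), exhibit an explicit extremal matrix realising the Fibonacci values, and close the matching upper bound by strong induction on $n$. For the expansion: when we delete row $k$ and column $n$ from $H_n$, the resulting minor is block upper-triangular with top-left block $H_n[1{:}k{-}1,\,1{:}k{-}1]$ of determinant $D_{k-1}$ and lower-right $(n-k)\times(n-k)$ block that is itself upper Hessenberg with unit subdiagonal and leading column $(1,0,\ldots,0)^\top$; peeling off the leading $1$'s shows this block has determinant $1$. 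Laplace expansion then gives
\begin{equation}
D_n = \sum_{k=1}^{n} (-1)^{n-k}\,h_{kn}\,D_{k-1}.
\end{equation}

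For the lower bound I would exhibit the matrix $H^\ast$ defined by $h^\ast_{jk} = 1$ exactly when $k-j$ is even (together with the prescribed subdiagonal of $1$'s). Substituting into the recursion gives $D_k(H^\ast) = \sum_{j\equiv k\pmod{2}} D_{j-1}(H^\ast)$; comparing $D_k$ with $D_{k-2}$ yields the telescoping identity $D_k - D_{k-2} = D_{k-1}$ for $k\geq 3$, which together with $D_1(H^\ast) = D_2(H^\ast) = 1$ gives $D_k(H^\ast) = F_k$.

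For the matching upper bound $|D_n|\leq F_n$ I would proceed by strong induction on $n$. Given the first $n-1$ columns, the entries $h_{kn}\in\{0,1\}$ of the last column can be chosen independently of one another to maximise $|D_n|$, and this maximum is explicitly
\begin{equation}
\max_{h_{\cdot n}\in\{0,1\}^n} |D_n| = \tfrac{1}{2}\Bigl(\sum_{k=1}^{n}|D_{k-1}| + \bigl|\sum_{k=1}^{n}(-1)^{n-k} D_{k-1}\bigr|\Bigr),
\end{equation}
so the task reduces to bounding this right-hand side over all valid matrices of size $n-1$. The hard part is that the naive triangle bound $\sum_k |D_{k-1}| \leq F_{n+1}-1$ already exceeds the target $F_n$, so one must additionally exploit the cancellation in the alternating signed sum and the fact that $D_0,D_1,\ldots,D_{n-1}$ arise from a single matrix and are therefore strongly correlated. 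The main obstacle will be designing the correct strengthened inductive invariant; my candidate is the joint bound $\sum_{j=0}^m|D_j| + \bigl|\sum_{j=0}^m (-1)^{m-j}D_j\bigr|\leq 2F_{m+1}$, which $H^\ast$ attains with equality, and whose inductive step I expect to close using the recursion $R_m = D_m - R_{m-1}$ for the signed partial sum $R_m := \sum_{j=0}^m (-1)^{m-j}D_j$ together with a case analysis on the signs of $D_m$ and $R_{m-1}$.
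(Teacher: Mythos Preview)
Your approach is correct and genuinely different from the paper's. The paper does not prove Theorem~\ref{theorem2.1} directly but obtains it as the $s=t=1$ specialisation of Theorem~\ref{theorem2.6}; there the argument expands along the \emph{first} column (twice, giving \eqref{eq:1}), pairs consecutive terms, and bounds each pair via a four-case analysis (Lemma~\ref{lemma3.2}) to produce a recursive upper bound on $M_n$ that matches the target recurrence. Your last-column expansion together with the invariant $S_m+|R_m|\le 2F_{m+1}$, where $S_m=\sum_{j\le m}|D_j|$ and $R_m=\sum_{j\le m}(-1)^{m-j}D_j$, is slicker for the $\{0,1\}$ case and in fact needs no sign case-analysis at all. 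The inductive step closes more simply than you anticipate: writing $c_k=(-1)^{m-k}D_{k-1}$, the choices $h_{km}\in\{0,1\}$ partition the $c_k$'s into those with $h_{km}=1$ (summing to $D_m$) and those with $h_{km}=0$ (summing to $-R_m$, since $R_m=D_m-R_{m-1}$ and $R_{m-1}=\sum_k c_k$), whence $|D_m|+|R_m|\le\sum_k|c_k|=S_{m-1}$ by the triangle inequality on each block. Thus $S_m+|R_m|\le 2S_{m-1}$, and the auxiliary bound $S_{m-1}\le F_{m+1}$ (obtained by summing the already-established $|D_j|\le F_j$) finishes the step. The trade-off is that the paper's pairing argument generalises verbatim to upper entries in $[0,t]$ with subdiagonal $s\in(0,t]$, whereas your partition identity relies on the entries being exactly $0$ or $1$ and would need substantial reworking to recover the full Theorem~\ref{theorem2.6}.
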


Recently Fasi and Porzio have established the following theorem, proving a result conjectured by Thornton \cite{7}: 

 \begin{theorem}\label{theorem2.2}
 The maximum absolute determinant of an $n \times n$ upper-Hessenberg matrix with upper triangular entries from $\{0,1,2\}$ and subdiagonal entries fixed at $1$ is given by the following sequence.  Let $M_n$ denote the maximum absolute determinant among these $n \times n$ matrices, then $M_1=2$, $M_2=4$, and $M_n=2\cdot M_{n-1}+M_{n-2}$ for all $n \geq 3$. \cite{8}
 \end{theorem}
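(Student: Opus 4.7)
The plan is to prove the theorem in three phases: derive a clean determinantal recursion, exhibit an explicit optimal family (lower bound), and establish a matching upper bound via a coupled two-variable induction. First I expand the $n\times n$ determinant along its last column. Because the subdiagonal of $H$ is all $1$'s, the minor obtained by deleting row $k$ and column $n$ is block upper triangular with an identity-diagonal lower-right block, and so has determinant equal to the principal minor $d_{k-1}$ (with $d_0=1$). This gives
\[
d_n \;=\; \sum_{k=1}^{n}(-1)^{n+k}\,h_{kn}\,d_{k-1}.
\]
Substituting $e_k:=(-1)^k d_k$ collapses the alternating sign into a single minus sign, yielding $e_n=-\sum_{k=1}^{n}h_{kn}e_{k-1}$ with $e_0=1$ and $|e_k|=|d_k|$.

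For the lower bound I exhibit an explicit matrix built column by column: in column $k$ put the value $2$ in positions $k,k-2,k-4,\dots$ and $0$ in the remaining above-diagonal positions. A direct check using the recursion shows by induction that all principal minors are positive and satisfy $d_k=M_k$, so in particular $|d_n|\ge M_n$.

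The main work is the upper bound. Since $h_{kn}\in\{0,1,2\}$, optimising the last column for fixed first $n-1$ columns gives
\[
|e_n|\;\le\;2\max(P_n,N_n),\qquad P_n:=\sum_{k=0}^{n-1}e_k^+,\;\; N_n:=\sum_{k=0}^{n-1}e_k^-,
\]
where $x^\pm:=\max(0,\pm x)$. The updates $P_{n+1}=P_n+e_n^+$, $N_{n+1}=N_n+e_n^-$, combined with the one-step bounds $e_n^+\le 2N_n$ and $e_n^-\le 2P_n$, imply that $(P_n,N_n)$ evolves from $(P_1,N_1)=(1,0)$ under the two admissible moves $(P,N)\mapsto(P+2N,N)$ or $(P,N)\mapsto(P,N+2P)$. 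I then prove by induction (with the convention $M_0:=1$) the coupled invariant
\[
\max(P_n,N_n)\le\tfrac{M_n}{2}\qquad\text{and}\qquad\min(P_n,N_n)\le\tfrac{M_{n-1}}{2},
\]
via a short case analysis on the move taken and on whether $P\le N$ or $N\le P$; the recursion $M_{n+1}=2M_n+M_{n-1}$ makes each subcase balance, with equality achieved precisely when the relevant subcase is tight. Combined with $|d_n|=|e_n|\le 2\max(P_n,N_n)$, this yields $|d_n|\le M_n$, matching the construction. The main obstacle is spotting this twofold invariant: the single bound $\max(P_n,N_n)\le M_n/2$ is not self-sustaining — one update can push it to $3M_n/2$ — so the auxiliary bound on $\min(P_n,N_n)$ is essential to close the induction.
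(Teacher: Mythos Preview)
Your argument is correct. The determinantal recursion $d_n=\sum_{k=1}^{n}(-1)^{n+k}h_{kn}d_{k-1}$ is the standard last-column expansion for Hessenberg matrices, the sign change $e_k=(-1)^kd_k$ is legitimate, and the coupled invariant $\max(P_n,N_n)\le M_n/2$, $\min(P_n,N_n)\le M_{n-1}/2$ does close under each of the two possible updates (the four subcases all reduce to $M_{n-1}\le M_n$ or to the defining recursion $M_{n+1}=2M_n+M_{n-1}$), so the upper bound goes through. Your explicit matrix is exactly the paper's $U_n(1,2)$.

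The route, however, is genuinely different from the paper's. The paper does not prove Theorem~\ref{theorem2.2} in isolation but obtains it as the $t=2$, $s=1$ specialisation of Theorem~\ref{theorem2.6}. There the determinant is expanded twice (first column, then first rows) to produce the grouped form \eqref{eq:1}; Lemma~\ref{lemma3.2} bounds each \emph{pair} of consecutive terms by $t^2M_{n-k}+ts^2M_{n-k-1}$, and the resulting telescoped inequality \eqref{eq:5} is compared with an auxiliary sequence $K_n$ satisfying the target recursion. Your method instead tracks the running positive and negative partial sums of the $e_k$ and controls them via a two-component invariant. The payoff of your approach is a shorter, more combinatorial argument for the $\{0,1,2\}$ case with no need for the four-way case analysis of Lemma~\ref{lemma3.2}; the payoff of the paper's approach is that the pairing in \eqref{eq:1} carries the subdiagonal parameter $s$ cleanly, so the same proof gives the full range $0<s\le t$ with essentially no extra work, whereas your $P_n,N_n$ bookkeeping would acquire $s^{n-k}$ weights and the clean coupled invariant would not survive unchanged.
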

  
It is a natural question what happens if the upper triangular population is $\{0,1,\dots,n\}$. Fasi and Porzio stated the following conjecture in this context. 
\begin{conjecture}\label{conjecture2.3}
The maximum absolute determinant of an $n \times n$ upper-Hessenberg matrix with upper triangular entries from $\{0,1,\dots,d\}$ and subdiagonal entries fixed at $1$ is given by the following generalized Fibonacci sequence.  Let $M_n$ denote the maximum absolute determinant among these $n \times n$ matrices, then $M_1=d$, $M_2=d^2$, $M_n=d\cdot M_{n-1}+M_{n-2}$ for all $n \geq 3$. \cite{8}
\end{conjecture}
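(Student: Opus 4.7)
The plan is to prove the conjecture by strong induction on $n$. The base cases $n=1$ and $n=2$ follow from direct inspection: for $n=1$ the maximum is $d = M_1$, and for $n=2$ the quantity $|h_{11} h_{22} - h_{12}|$ is maximised at $h_{11}=h_{22}=d$, $h_{12}=0$, yielding $d^2 = M_2$.

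For the inductive step, expand $\det(H)$ along the first column: only the $(1,1)$- and $(2,1)$-entries of that column are nonzero, so $\det(H) = h_{11}\det(A) - \det(B)$, where $A$ is the $(n-1)\times(n-1)$ submatrix obtained by deleting row $1$ and column $1$ of $H$, and $B$ is the one obtained by deleting row $2$ and column $1$. A direct check shows that $A$ and $B$ are upper Hessenberg matrices of the same type as $H$ of size $n-1$, and that they agree outside their first rows---the first row of $A$ is row $2$ of $H$ from column $2$ onward, while the first row of $B$ is row $1$ of $H$ from column $2$ onward. Since the determinant is linear in each row, maximising $|\det(H)|$ jointly over $h_{11}$ and the two first rows---independent entries of $H$---yields
\[
\max |\det(H)| \;=\; d^{2}\max(P,N) + d\min(P,N),
\]
where $P$ and $N$ are the sums of the positive parts and of the negative parts of the cofactors expanded along the shared bottom $n-2$ rows. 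By the inductive hypothesis on size $n-1$, the quantity $d\max(P,N)$, which is precisely the maximum of $|\det(A')|$ over all first rows of an $(n-1)\times(n-1)$ Hessenberg matrix $A'$ of our type with the prescribed bottom rows, is at most $M_{n-1}$.

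The main obstacle is the subsidiary inequality $d\min(P,N)\leq M_{n-2}$. I plan to
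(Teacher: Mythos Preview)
Your reduction is sound: after the first-column expansion you correctly identify that $A$ and $B$ share their bottom $n-2$ rows, and optimising over $h_{11}$ and the two free first rows does give
\[
\max_{h_{11},\,h_{1\cdot},\,h_{2\cdot}}|\det(H)|=d^{2}\max(P,N)+d\min(P,N).
\]
The bound $d\max(P,N)\le M_{n-1}$ is also fine. The proposal, however, stops at the crucial step $d\min(P,N)\le M_{n-2}$, and you give no indication of how to prove it. Here is how it closes. Write $c_j$ for the signed cofactors of the common bottom block, so that $c_j=(-1)^{j+1}H_{j+2}(H)$ in the paper's notation. For $\varepsilon\in\{+,-\}$ let $A'_\varepsilon$ be the $(n-1)\times(n-1)$ Hessenberg matrix with the same bottom $n-2$ rows and first row $r^{(\varepsilon)}_j=d\cdot[\operatorname{sgn} c_j=\varepsilon]$; then $\det(A'_+)=dP$ and $\det(A'_-)=-dN$. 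Since $c_1$ has a single sign (or is zero), at least one of $r^{(+)}_1,r^{(-)}_1$ vanishes. For that choice, the first column of $A'_\varepsilon$ is $(0,1,0,\dots,0)^T$; expanding along it yields an $(n-2)\times(n-2)$ matrix that is again upper Hessenberg with subdiagonal $1$ and entries in $[0,d]$. Hence one of $dP,dN$ is bounded by $M_{n-2}$, and therefore $d\min(P,N)\le M_{n-2}$. Together with the explicit matrix $U_n$ of the paper (which you should also exhibit to get the matching lower bound), the induction closes.

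This route is genuinely different from the paper's. The paper proves the more general Theorem~\ref{theorem2.6} (the conjecture is the case $t=d$, $s=1$) via a \emph{double} Laplace expansion, then in Lemma~\ref{lemma3.2} bounds \emph{consecutive pairs} of terms $(a_{11}a_{2k}-sa_{1k})H_{k+1}-(a_{11}a_{2,k+1}-sa_{1,k+1})sH_{k+2}$, leading to a multi-term inequality \eqref{eq:5} that is then matched against an identity for the target sequence (Lemmas~\ref{lemma3.3}--\ref{lemma3.5}). Your sign-splitting argument is shorter and more conceptual for $s=1$: the point is that the ``wrong-sign'' cofactors can be packaged into a single $(n-1)\times(n-1)$ determinant whose first entry is forced to be $0$, dropping the size by one more. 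What the paper's pairing buys is robustness: it goes through uniformly for all $0<s\le t$, whereas your reduction exploits $s=1$ (more precisely $s\le t$) only implicitly through the range $[-d,d^2]$ of $h_{11}h_{2,j+1}-h_{1,j+1}$, and it is not obvious how to adapt the $r_1=0$ trick when $s\neq 1$.
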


We discuss the following, further generalisation of this problem.  

\begin{problem}\label{problem2.4}
What is the maximum absolute determinant of an $n \times n$ upper-Hessenberg matrix with upper triangular entries drawn from $[0,t]$, $t > 0$, and subdiagonal entries taking a fixed value $s \in \reals$?
\end{problem}

  If $s$ is negative the problem is relatively straightforward and the result can be stated as the following theorem, whose proof may be found in \cite{8}. 
  
  \begin{theorem}\label{theorem2.5}
  The maximum absolute determinant of an $n \times n$ upper-Hessenberg matrix with upper triangular entries from $[0,t]$ and subdiagonal entries fixed at $s < 0$ is given by $t\cdot (t-s)^{n-1}$ for all $n \in \mathbb{N}$.
  \end{theorem}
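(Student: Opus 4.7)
My plan is to prove the bound and exhibit an extremal matrix separately, and to use induction on $n$ for the bound.

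For attainment, the natural candidate extremal matrix is $A^\star$, the matrix whose entire upper triangular part is $t$ (so the $(i,j)$ entry equals $t$ for all $i \le j$ and equals $s$ on the subdiagonal). I would evaluate $\det(A^\star)$ by the determinant--preserving row operations $R_i \leftarrow R_i - R_{i+1}$ for $i=1,\dots,n-1$, performed from the top down. Every entry strictly to the right of the diagonal becomes $t-t=0$, the subdiagonal entries remain $s$, and each diagonal entry becomes $t-s$ except $A^\star_{n,n}=t$. The result is a lower bidiagonal matrix whose determinant is the product of its diagonal, namely $t(t-s)^{n-1}$.

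For the upper bound, I induct on $n$; the case $n=1$ is trivial. Given an arbitrary admissible $A$, Laplace expansion along the first column collapses to only two nonzero contributions, since $A_{i,1}=0$ for $i\ge 3$, giving
\[
\det(A) \;=\; A_{1,1}\,\det(M_{1,1}) \;-\; s\,\det(M_{2,1}),
\]
where $M_{1,1}$ and $M_{2,1}$ denote the usual minors. The key structural observation is that both minors are themselves $(n-1)\times(n-1)$ upper Hessenberg matrices with subdiagonal $s$ and upper triangular entries in $[0,t]$: for $M_{1,1}$ this is immediate, and for $M_{2,1}$ one checks that deleting row $2$ and column $1$ shifts the subdiagonal of $A$ into the subdiagonal of $M_{2,1}$ while the new $(1,1)$ entry becomes $a_{1,2}\in[0,t]$, so the hypothesis on the entries is preserved. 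The triangle inequality combined with $A_{1,1}\le t$, $|s|=-s$ (using $s<0$), and the induction hypothesis then yields
\[
|\det(A)| \;\le\; t\cdot t(t-s)^{n-2} + (-s)\cdot t(t-s)^{n-2} \;=\; t(t-s)^{n-2}(t-s) \;=\; t(t-s)^{n-1},
\]
matching the value achieved by $A^\star$.

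The main obstacle I anticipate is the structural verification that $M_{2,1}$ really lies in the same admissible class as $A$; without this the induction collapses. The sign condition $s<0$ enters in a crucial but elementary way in the final estimate, since it converts the triangle-inequality bound $t+|s|$ into exactly the factor $t-s$ appearing in the target formula. One should also note why the argument fails for $s\ge 0$: then $|s|=s$, so the analogous bound becomes $t(t+s)^{n-1}$, which in general is \emph{not} attained by any admissible matrix and in fact overshoots the true maximum, reflecting the genuine extra difficulty of Conjecture~\ref{conjecture2.3}.
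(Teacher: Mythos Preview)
Your argument is correct. The paper does not actually prove Theorem~\ref{theorem2.5}; it simply cites \cite{8} for the proof, so there is no in-paper argument to compare against directly. Your approach---Laplace expansion along the first column, combined with the structural check that both minors $M_{1,1}$ and $M_{2,1}$ remain admissible $(n-1)\times(n-1)$ upper Hessenberg matrices with subdiagonal $s$ and upper-triangular entries in $[0,t]$---is clean and complete, and the row-reduction evaluation of $\det(A^\star)$ is correct as stated.
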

  
 However the proof of this theorem does not extend to $s > 0$.  
 
 We here solve Problem \ref{problem2.4} in various regimes when $s > 0$.  The first case we consider is when $s \leq t $, the second is $t\cdot(1+\epsilon) \geq s \geq t$ (where $\epsilon$ is a sufficiently small positive number depending on the dimension of the matrix) and the third case is $s > t \cdot \dfrac{4}{5} \cdot n^2$.  Our main results are contained in the following theorems:  
  \begin{theorem}\label{theorem2.6}
  The maximum absolute determinant of an $n \times n$ upper-Hessenberg matrix with entries from the interval $[0,t]$ and subdiagonal entries fixed at $s$, such that $t \geq s > 0 $, is given by the following sequence.  Let $M_n$ denote the maximum absolute determinant among these $n \times n$ matrices; then $M_1=t$, $M_2=t^2$ and $M_n=tM_{n-1}+s^2M_{n-2}$ for all $n>2$ in $\mathbb{Z}$.
  \end{theorem}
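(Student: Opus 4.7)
I plan to prove the theorem in two parts: first exhibit an extremal matrix $A^{*}_n$ with $\det A^{*}_n=M_n$, then prove the matching upper bound by strong induction on $n$.

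For the lower bound, let $A^{*}_n$ be the admissible matrix with $a_{i,j}=t$ whenever $j-i$ is a non-negative even integer, $a_{i,j}=s$ if $j=i-1$, and $a_{i,j}=0$ otherwise. Expanding along the first column gives $\det A^{*}_n=t\det A^{*}_{n-1}-s\det[A^{*}_n]_{2,1}$; the minor $[A^{*}_n]_{2,1}$ has a single nonzero entry ($s$) in its first column, and expanding along that column reduces the minor to $-s\det A^{*}_{n-2}$. Thus $\det A^{*}_n=t\det A^{*}_{n-1}+s^{2}\det A^{*}_{n-2}$, and together with the base cases this gives $\det A^{*}_n=M_n$.

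For the upper bound I induct on $n$; the cases $n=1$ and $n=2$ are immediate from the non-negativity of the upper entries. For $n\ge 3$, let $A$ be an admissible matrix and write $f_k$ for the determinant of the leading $k\times k$ principal submatrix of $A$. The identity
\[\det A = a_{n,n}\, f_{n-1} - s\det B,\]
where $B$ is the $(n-1)\times(n-1)$ upper Hessenberg matrix whose first $n-2$ columns coincide with those of $A$ and whose last column is $(a_{1,n},\ldots,a_{n-1,n})^{T}$, is the crucial input. Its virtue is that, once the first $n-2$ columns of $A$ are fixed, the three scalars $f_{n-1}$, $\det B$ and $a_{n,n}$ depend on pairwise disjoint sets of free entries and may be varied independently. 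The column-$(n-1)$ cofactor expansion for $f_{n-1}$ (and the analogous expansion of $\det B$) shows that both scalars range over the same interval $[L,U]$, with $U+|L|=tS'$ for $S':=\sum_{j=0}^{n-2}s^{n-2-j}|f_j|$.

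Optimising $a_{n,n}f_{n-1}-s\det B$ over these three independent degrees of freedom yields the uniform bound $\max|\det A|\le tX+sY$, where $X:=\max(U,|L|)$ and $Y:=\min(U,|L|)$; this one line absorbs every sign configuration. By the inductive hypothesis applied to the leading $(n-1)\times(n-1)$ submatrix, $X\le M_{n-1}$. The remaining ingredient is the arithmetical identity $tT_n=M_{n-1}+sM_{n-2}$ for $T_n:=\sum_{j=0}^{n-2}s^{n-2-j}M_j$ (with $M_0:=1$), which I will prove by induction on $n$ from the one-step relation $T_n=M_{n-2}+sT_{n-1}$ and the defining recurrence for $M_n$. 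Since $X+Y=tS'\le tT_n=M_{n-1}+sM_{n-2}$, writing $tX+sY=(t-s)X+s(X+Y)$ and substituting the two bounds gives
\[\max|\det A|\le(t-s)M_{n-1}+s\bigl(M_{n-1}+sM_{n-2}\bigr)=tM_{n-1}+s^{2}M_{n-2}=M_n,\]
completing the induction.

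The main obstacle, I expect, is getting past the naive bound $|\det A|\le(s+t)M_{n-1}$ that follows from bounding $|a_{n,n}f_{n-1}|$ and $|s\det B|$ separately; this overshoots $M_n$ by $s(t-s)M_{n-2}$. Closing the gap requires the joint control of $f_{n-1}$ and $\det B$ afforded by their shared dependence on the first $n-2$ columns, together with the identity $tT_n=M_{n-1}+sM_{n-2}$, which encodes the self-similarity of the $M_n$ sequence.
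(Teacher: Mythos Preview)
Your argument is correct and takes a genuinely different route from the paper. The paper expands $\det A$ twice (first column, then first rows of the minors), groups the resulting terms in pairs, and proves a technical lemma (its Lemma~3.2) bounding each pair; this yields an inequality $M_n\le t^2M_{n-2}+ts^2M_{n-3}+t^2s^2M_{n-4}+\cdots$, which is then matched against an identity satisfied by the target sequence $K_n$. Your approach instead expands along the last row and exploits the single structural observation that, once the first $n-2$ columns are frozen, $f_{n-1}$ and $\det B$ are the \emph{same} linear functional evaluated on two disjoint copies of $[0,t]^{n-1}$. This immediately gives the sharp bound $tX+sY$ without any case analysis on signs, and the closing identity $tT_n=M_{n-1}+sM_{n-2}$ (valid for $n\ge 3$) replaces the paper's Lemmas~3.3--3.4. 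The net effect is a cleaner one-step induction: where the paper needs a multi-term recursion and a separate auxiliary sequence $K_n$ to close the gap, your decomposition $(t-s)X+s(X+Y)$ isolates exactly the two quantities ($X$ and $X+Y$) for which the inductive hypothesis gives tight control. The paper's approach, on the other hand, makes the sign structure of the cofactor expansion more explicit, which is what it later leverages for the $s>t$ regime; your method is more economical for the present theorem but less obviously extensible in that direction.
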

 
 \begin{remark}\label{remark2.7}
 {\rm Note that the determinant is a linear function with respect to each entry. Hence, for the problem of the matrix with maximum absolute determinant, the cases when the upper triangular population is $\{0,1,\dots,d\}$ and $[0,d]$ are equivalent. So, if we set $t=d\in \mathbb{N}$ and $s=1$, then we prove Conjecture \ref{conjecture2.3} as a corollary of Theorem \ref{theorem2.6}.}
 \end{remark}
  
\begin{theorem}\label{theorem2.8}
The maximum absolute determinant of an $n \times n$ upper-Hessenberg matrix with entries from the interval $[0,t]$, subdiagonal entries fixed at $s$ and $n\geq 4$ such that $t\cdot(1+\epsilon(n)) \geq s \geq t$ where $\epsilon(n)$ is a sufficiently small positive number depending on $n$ is given by the following sequence.  Let $M_n$ denote the maximum absolute determinant among these $n \times n$ matrices; then  $M_4=3s^2t^2$, $M_5=s^4t+4s^2t^3$ and $M_n=tM_{n-1}+s^2M_{n-2}$ for all $n>5$ in $\mathbb{Z}$.
\end{theorem}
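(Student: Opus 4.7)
The plan is to follow the template of the proof of Theorem~\ref{theorem2.6}, modifying only the base cases where the ``alternating'' structure that is optimal for $s \le t$ is superseded. Throughout I exploit the multilinearity of $\det$ in each entry and the fact that a linear function on $[0,t]$ attains its extrema at $\{0,t\}$: this restricts attention to matrices whose upper-triangular entries lie in $\{0,t\}$.

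For the base case $n = 4$ I would exhibit the candidate
$$\begin{pmatrix} t & t & t & 0 \\ s & 0 & 0 & t \\ 0 & s & 0 & t \\ 0 & 0 & s & t \end{pmatrix}$$
and verify by expansion along the first row that its determinant equals $3 s^2 t^2$: the three nonzero $3 \times 3$ minors each evaluate to $\pm s^2 t$, and the cofactor signs conspire so that the three contributions add. An analogous construction yields a $5 \times 5$ matrix with determinant $s^4 t + 4 s^2 t^3$. To show these values are optimal, I would expand the determinant along the first column and partition the remaining cases by the value of the $(1,1)$-entry, and if necessary of one or two further entries; in each case the residual determinant is an explicit multilinear polynomial in the $\{0,t\}$-entries whose maximum on the window $s \in [t, t(1+\epsilon(n))]$ can be compared term-by-term to the claimed value.

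For the inductive step $n \geq 6$, I would use the classical Hessenberg cofactor expansion along the last column,
$$p_n \;=\; \sum_{k=1}^{n} (-s)^{n-k}\, a_{k,n}\, p_{k-1},$$
where $p_j$ denotes the determinant of the leading $j \times j$ principal submatrix and $p_0 = 1$. The inductive hypothesis supplies $|p_j| \le M_j$ for every $j < n$. The critical observation is that the signs of $p_0, p_1, \ldots, p_{n-1}$ are not independent: they are all determined by a single $(n-1) \times (n-1)$ upper-Hessenberg submatrix. For $n \geq 6$ and $s$ in the prescribed window, one shows that any sign pattern saturating $|p_j| = M_j$ for more than two indices $k$ is infeasible, so the maximum of $|p_n|$ is realized by the two terms at $k = n$ (contributing $t\, M_{n-1}$) and $k = n-2$ (contributing $s^2\, M_{n-2}$). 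The matching lower bound comes from embedding the optimal $(n-2) \times (n-2)$ matrix in the top-left corner and extending by two ``alternating'' rows and columns that realize precisely these two terms.

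The principal obstacle is the sign-compatibility argument and the quantitative estimate that fixes $\epsilon(n)$. For $n = 4$ and $n = 5$ the non-alternating configurations genuinely improve on the Theorem~\ref{theorem2.6} value as soon as $s > t$, and \emph{three} terms in the Hessenberg expansion cooperate; but for $n \geq 6$ the alternating structure, fed by the \emph{new} base values $M_4, M_5$, reclaims optimality only in a neighbourhood of $s = t$ that shrinks with $n$. Quantifying this shrinkage amounts to comparing, for each $n$, the top coefficients of $M_n$ (viewed as a polynomial in $s$) against those of a finite list of competing constructions that become optimal at larger $s$; carrying this comparison out uniformly in $n$ is where I anticipate the proof requiring the most delicate bookkeeping.
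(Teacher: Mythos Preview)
Your lower-bound construction is fine: your $4\times 4$ candidate is exactly the paper's $U_4^{(rc)}$, and in general $U_n^{(rc)}$ (obtained from the alternating matrix $U_n$ by swapping the first two rows and the last two columns) realises the claimed values and satisfies the recurrence $|U_n^{(rc)}|=t|U_{n-1}^{(rc)}|+s^2|U_{n-2}^{(rc)}|$. The problem is the upper bound. Your inductive step rests on the assertion that ``any sign pattern saturating $|p_j|=M_j$ for more than two indices is infeasible,'' but you give no mechanism for this, and the delicate inequalities that made the analogous argument work in Theorem~\ref{theorem2.6} (Lemma~\ref{lemma3.2}) used $t\geq s$ in an essential way; once $s>t$ those estimates can reverse. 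Nor does your last-column expansion $p_n=\sum_k(-s)^{n-k}a_{k,n}p_{k-1}$ obviously isolate the combination $tM_{n-1}+s^2M_{n-2}$: the $k=n-2$ term contributes $s^2 a_{n-2,n}p_{n-3}$, not $s^2 M_{n-2}$.

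The paper avoids this entirely with a continuity argument. Since the determinant depends continuously on $s/t$, any matrix $R_n$ that becomes the maximiser immediately above $s/t=1$ must already attain the maximum at $s=t$. But Li Ching's 1993 result classifies \emph{all} maximisers in $\mathcal{G}_1^{n\times n}(\{0,1\})$: there are exactly four, namely $U_n,U_n^{(r)},U_n^{(c)},U_n^{(rc)}$. So $R_n$ is one of these four, and a direct computation (equations \eqref{e:13}--\eqref{e:17}) shows $|U_n^{(rc)}|$ strictly dominates the other three once $s>t$. This reduces the upper bound to comparing four explicit determinants rather than controlling an $n$-term alternating sum, and it is the step your proposal is missing.
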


\begin{theorem}\label{theorem2.9}
The maximum absolute determinant of an $n \times n$ upper-Hessenberg matrix with entries from the interval $[0,t]$ and subdiagonal entries fixed at $s$ such that $\dfrac{s}{t} >  \dfrac{4}{5} \cdot n^2$ is given by $s^{n-1}t+ \Big\lfloor \dfrac{n}{2}\Big\rfloor \Big\lfloor\dfrac{n-1}{2} \Big\rfloor s^{n-3}t^3$.
\end{theorem}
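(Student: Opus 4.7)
The plan is to exploit the composition expansion of the determinant. Every nonzero term in the Leibniz expansion of $\det H$ corresponds uniquely to a composition $n=b_1+\cdots+b_m$: the associated permutation partitions $\{1,\dots,n\}$ into consecutive blocks $[a_i,a_i+b_i-1]$ with $a_i=1+\sum_{j<i}b_j$, acts on each block as the cycle $(a_i,a_i+b_i-1,\dots,a_i+1)$, and so uses $b_i-1$ copies of $s$ (from the subdiagonal) and one upper-triangular entry $h_i:=H[a_i,a_i+b_i-1]$ per block, with total sign $(-1)^{n-m}$. Hence
\[
\det H=\sum_{(b_1,\dots,b_m)\models n}(-1)^{n-m}\,s^{n-m}\prod_{i=1}^{m}h_i,\qquad h_i\in[0,t].
\]
The leading term $(-1)^{n-1}s^{n-1}h_{1,n}$ comes from the composition $(n)$; compositions with an odd number of parts share its sign, while compositions with an even number of parts carry the opposite sign.

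For the lower bound I would construct an explicit optimal matrix $H^{\star}$. Encode three-part compositions by their pair of cut points $(c_1,c_2)=(b_1,b_1+b_2)\in\{(x,y):1\le x<y\le n-1\}$, and partition $\{1,\dots,n-1\}=A\sqcup B$ with $A=\{1,\dots,\lfloor(n-1)/2\rfloor\}$ and $B=\{\lfloor(n-1)/2\rfloor+1,\dots,n-1\}$. Set $H^{\star}[1,n]=t$, and for each pair $(c_1,c_2)\in A\times B$ set $H^{\star}[1,c_1]=H^{\star}[c_1+1,c_2]=H^{\star}[c_2+1,n]=t$; leave every other upper-triangular entry at $0$. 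A case-by-case check on the formula above verifies that exactly the $|A||B|=\lfloor n/2\rfloor\lfloor(n-1)/2\rfloor$ three-part compositions indexed by $A\times B$ satisfy $h_1h_2h_3=t^3$, while every two-part composition and every composition with $m\ge 4$ has some $h_i=0$ (the two-part ones vanish because $A\cap B=\emptyset$; the higher ones vanish because any interior cut point would have to lie in both $A$ and $B$). Together with the leading term, this gives $|\det H^{\star}|=s^{n-1}t+\lfloor n/2\rfloor\lfloor(n-1)/2\rfloor s^{n-3}t^3$.

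For the upper bound I would work from the same expansion. By multilinearity it suffices to consider $H$ with entries in $\{0,t\}$. Since odd-$m$ compositions match the main term's sign and even-$m$ compositions oppose it, $|\det H|$ is maximised by activating as many odd-$m$ compositions as possible while keeping even-$m$ contributions zero. The combinatorial core is the $m=3$ case: if $S$ is the set of activated three-part compositions and no two-part composition $(k,n-k)$ is active, then the encoding above forces the sets $C_i=\{c_i:(c_1,c_2)\in S\}$ to be disjoint subsets of $\{1,\dots,n-1\}$, so $|S|\le|C_1||C_2|\le\lfloor n/2\rfloor\lfloor(n-1)/2\rfloor$ by an AM-GM bound on $|C_1|+|C_2|\le n-1$. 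Activating one more three-part composition forces a two-part composition to be active at cost $s^{n-2}t^2$, and since $s>t$ this dominates the gain $s^{n-3}t^3$; analogous sign-matched reasoning rules out trades between $m=5,7,\dots$ and $m=4,6,\dots$. The hypothesis $s/t>\tfrac{4}{5}n^2$ is then used to absorb the tail $\sum_{m\ge 5}\binom{n-1}{m-1}s^{n-m}t^m$ uniformly against $s^{n-3}t^3$.

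The principal obstacle is making the trade-off argument quantitatively tight under the explicit threshold $s/t>\tfrac{4}{5}n^2$. One has to preclude elaborate configurations that willingly pay a penalty at $m=2$ or $m=4$ in order to activate many compositions at $m=3,5,7,\dots$ simultaneously; the clean cut-point bipartite structure handles the $m=3$ vs $m=2$ trade-off on the nose, but combining it with the higher-order tail estimate to pin down the precise constant $\tfrac{4}{5}$ in the hypothesis is where I expect the real technical work to concentrate.
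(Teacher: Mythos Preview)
Your composition expansion is exactly the paper's expansion (5.11), and your $H^{\star}$ is the paper's matrix $W_n$; the verification that all $m\ge 4$ compositions vanish in $H^{\star}$ is correct. Your upper bound on the number of active three-part compositions via $C_1\cap C_2=\emptyset$ and $|C_1|\,|C_2|\le\lfloor n/2\rfloor\lfloor(n-1)/2\rfloor$ is precisely the content of the paper's Lemma~5.6, where the same disjointness is dressed up as a chessboard colouring; your phrasing is in fact cleaner.

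The real gap is where you locate it, but it is sharper than a question of constants. Your tail estimate $\sum_{m\ge 5}\binom{n-1}{m-1}s^{n-m}t^m<s^{n-3}t^3$ does hold when $s/t>\tfrac{4}{5}n^2$ (this is the paper's inequality (5.28)), but it only disposes of candidates with $b_3\le\lfloor n/2\rfloor\lfloor(n-1)/2\rfloor-1$: for those, the loss of at least one unit of $s^{n-3}t^3$ swallows the entire odd-$m$ tail. When $b_3=\lfloor n/2\rfloor\lfloor(n-1)/2\rfloor$ exactly, however, the odd-$m$ tail is pure \emph{gain}, and you must show it is cancelled by the even-$m$ terms. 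Your ``analogous sign-matched reasoning'' for $m=4$ versus $m=5$ would need an inequality of the shape $s^{n-4}t^4>\sum_{m\ge 5}\binom{n-1}{m-1}s^{n-m}t^m$, and that forces $s/t\gtrsim n^4$, not $n^2$; the paper says this explicitly around (5.29).

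What the paper supplies, and you are missing, is a two-step structural argument. First, it shows that $b_2=0$ together with $b_3=\lfloor n/2\rfloor\lfloor(n-1)/2\rfloor$ pins the matrix down to a rigid shape: the first row, last column, and the central $k\times k$ block are completely determined, leaving only two small triangular blocks of free entries (display (5.22)). Second---and this is the decisive lemma you do not have---for matrices of that shape the paper proves $b_{m+1}\le n\,b_m$ for every $m\ge 4$ (Lemma~5.9), via an explicit at-most-$n$-to-one map from nonvanishing $(m{+}1)$-part compositions to nonvanishing $m$-part ones. Pairing consecutive terms then gives
\[
-b_4 s^{n-4}t^4+b_5 s^{n-5}t^5-b_6 s^{n-6}t^6+\cdots
=-\sum_{j\ge 2}s^{\,n-2j-1}t^{\,2j}\bigl(s\,b_{2j}-t\,b_{2j+1}\bigr)\le 0
\]
as soon as $s/t\ge n$, well inside the stated threshold. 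So the constant $\tfrac{4}{5}$ is governed entirely by the four coarse inequalities (5.25)--(5.28), while the delicate equality case is handled by this structural decay estimate rather than by any global trade-off.
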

  
 Throughout this paper we use the following definitions and notation. 
   
 \begin{definition}\label{definition2.10}
{\rm The set of all $n \times n$ upper Hessenberg matrices whose subdiagonal entries are fixed at $s$ and upper triangular entries are from the set $P$ is denoted by $\mathcal{G}_s^{n \times n} (P)$.}
 \end{definition}

 \begin{definition}\label{definition2.11}
{\rm For a given $n \times n$ matrix $A$, denote the determinant of the bottom-right $(n+1-k) \times (n+1-k)$ part of $A$ by $H_k(A)$, and for convenience set $H_{n+1}(A) \coloneqq 1$ for any $n \times n$ matrix $A$.}
 \end{definition}

\begin{notation}\label{notation2.12}
{\rm For a given matrix $A$, when referring to the matrix itself we use square brackets and when referring to the determinant of $A$ we use straight brackets. For instance, $A=\begin{bmatrix}1&2\\1&1\end{bmatrix}$, $\det(A)=|A|=\begin{vmatrix}1&2\\1&1\end{vmatrix}=-1$. To avoid confusion, we use $\abs(\cdot)$ for absolute value sometimes.}
\end{notation}
\bigskip
\section{Case I}
Firstly we deal with the case $t \geq s > 0$ of Problem \ref{problem2.4}. To prove Theorem \ref{theorem2.6} we introduce several lemmas. For convenience, set $M_0 \coloneqq 1$. 
\begin{lemma}\label{lemma3.1}
$M_n \geq t \cdot M_{n-1}$, $\forall n \in \mathbb{Z}^+$.
\end{lemma}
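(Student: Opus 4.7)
The plan is to prove the inequality by exhibiting, for each $n \geq 1$, an explicit matrix in $\mathcal{G}_s^{n \times n}([0,t])$ whose absolute determinant equals $t \cdot M_{n-1}$. Since $M_n$ is defined as the maximum over this class, such a construction immediately yields the stated bound.

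The construction is the following. Let $A \in \mathcal{G}_s^{(n-1) \times (n-1)}([0,t])$ be a matrix attaining $|\det(A)| = M_{n-1}$ (such a matrix exists because the class is compact and the determinant is continuous). I would then define the $n \times n$ matrix
\[
B \;=\; \begin{bmatrix} t & 0 & 0 & \cdots & 0 \\ s & & & & \\ 0 & & & A & \\ \vdots & & & & \\ 0 & & & & \end{bmatrix},
\]
that is, place $A$ as the bottom-right $(n-1) \times (n-1)$ block, put $t$ in the $(1,1)$ position, put the required subdiagonal value $s$ in the $(2,1)$ position, and fill the rest of the first row and first column with zeros. One must then verify that $B$ lies in $\mathcal{G}_s^{n \times n}([0,t])$: the subdiagonal of $B$ consists of $s$ in position $(2,1)$ together with the subdiagonal of $A$; the entries strictly below the subdiagonal vanish by construction and by the upper Hessenberg structure of $A$; and each upper triangular entry is either $t$, $0$, or an upper triangular entry of $A$, all of which lie in $[0,t]$.

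Because the first row of $B$ has only one nonzero entry, Laplace expansion along row $1$ gives $\det(B) = t \cdot \det(A)$, hence $|\det(B)| = t \cdot M_{n-1}$. Maximality of $M_n$ then yields $M_n \geq t \cdot M_{n-1}$. The base case $n=1$ is automatic from the convention $M_0 = 1$ and the easy observation $M_1 = t$.

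I do not anticipate a genuine obstacle for this lemma; the only thing to be careful about is that the bookkeeping of which entries are forced by the upper Hessenberg and subdiagonal constraints is done correctly, and that one checks the construction is admissible under the hypothesis $t \geq s > 0$ (which guarantees $s \in [0,t]$, so all entries of $B$ and $A$ remain within the prescribed interval).
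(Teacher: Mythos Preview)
Your proof is correct and follows essentially the same approach as the paper: both construct an $n\times n$ matrix by placing a maximizing $(n-1)\times(n-1)$ matrix in the bottom-right block, putting $t$ in position $(1,1)$ and zeros in the rest of the first row, and then expanding along the first row (or column) to obtain determinant $t\cdot M_{n-1}$. Your write-up is in fact more careful than the paper's, explicitly checking admissibility of the constructed matrix; the only superfluous remark is the one about $s\in[0,t]$, since by Definition~\ref{definition2.10} the subdiagonal entries are fixed at $s$ independently of the population $P$, so no such constraint is needed.
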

\begin{proof}
Suppose that maximum absolute determinant for $(n-1) \times (n-1)$ matrices is attained by a matrix $B_{(n-1)\times(n-1)}$. Then the following inequality holds trivially 
  \begin{align*}
  M_n \geq abs \Bigg(\begin{vmatrix}\begin{array}{c|cccc}
  t&0&\cdots&&0 \\ \hline s&&&& \\ 0&&B&& \\ \vdots&&&& \\ 0&&&&
  \end{array}\end{vmatrix} \Bigg)=t\cdot M_{n-1}.
  \end{align*}
\end{proof}
  
  For the next lemma, we start by writing the determinant for any matrix $A \in \mathcal{G}_s^{n \times n} ([0,t])$ using Laplace expansion twice, firstly for the first column of the matrix $A$ and secondly for the first rows of the resulting two matrices.
\begin{align}
|A|&= \footnotesize \begin{vmatrix} a_{11} & a_{12} & a_{13}& \cdots  & a_{1(n-1)} & a_{1n} \\ s & a_{22} & a_{23} & \cdots & a_{2(n-1)} & a_{2n} \\ 0 & s & a_{33} & \cdots & a_{3(n-1)} & a_{3n} \\  0 & 0 & s & \ddots & a_{4(n-1)} & a_{4n}\\ \vdots & \ddots & \ddots &  \ddots & \ddots & \vdots \\ 0 & 0 & 0 &   \cdots & s & a_{nn} \end{vmatrix} =a_{11}\begin{vmatrix}   a_{22} & a_{23} & \cdots & a_{2(n-1)} & a_{2n} \\  s & a_{33} & \cdots & a_{3(n-1)} & a_{3n} \\   0 & s & \ddots & a_{4(n-1)} & a_{4n}\\ \vdots & \ddots &  \ddots & \ddots & \vdots \\  0 & 0 &   \cdots & s & a_{nn} \end{vmatrix}-s \cdot \begin{vmatrix}   a_{12} & a_{13} & \cdots & a_{1(n-1)} & a_{1n)} \\  s & a_{33} & \cdots & a_{3(n-1)} & a_{3n} \\   0 & s & \ddots & a_{4(n-1)} & a_{4n}\\ \vdots & \ddots &  \ddots & \ddots & \vdots \\  0 & 0 &   \cdots & s & a_{nn} \end{vmatrix}=
   \nonumber\\ &=(a_{11}a_{22}-s\cdot a_{12})\cdot \footnotesize\begin{vmatrix} a_{33} & \cdots & a_{3n-1} & a_{3n} \\  s & \ddots & a_{4(n-1)} & a_{4n}\\ \vdots &  \ddots & \ddots & \vdots \\   0 &   \cdots & s & a_{nn} \end{vmatrix}+(-1)^1(a_{11}a_{23}-s\cdot a_{13}) \cdot \footnotesize\begin{vmatrix} \begin{array}{c|cccc} s & a_{34} & a_{35} & \cdots  & a_{3n} \\  \hline 0 & a_{44} & a_{45} & \cdots  & a_{4n} \\   0 & s & a_{55} & \ddots & a_{5n}\\ \vdots & \ddots &  \ddots & \ddots & \vdots \\  0 & 0 &   \cdots & s & a_{nn} \end{array}   \end{vmatrix}+
\nonumber\\
 & +(-1)^2(a_{11}a_{24}-s\cdot a_{14}) \cdot \footnotesize \begin{vmatrix} \begin{array}{cc|ccc} s & a_{33} & a_{35} & \cdots  & a_{3n} \\ 0 & s & a_{45} & \cdots  & a_{4n} \\  \hline   0 & 0 & a_{55} & \ddots & a_{5n}\\ \vdots & \ddots &  \ddots & \ddots & \vdots \\  0 & 0 &   \cdots & s & a_{nn} \end{array}   \end{vmatrix}+\cdots+ (-1)^{n-2} (a_{11}a_{2n}-s\cdot a_{1n}) \cdot \footnotesize \begin{vmatrix}s& a_{33} &a_{34} & \cdots & a_{3(n-1)}\\ 0 & s & a_{44} & \cdots & a_{4(n-1)} \\ 0& 0 & s & \cdots & a_{5(n-1)} \\ \vdots & \ddots & \ddots & \ddots & \vdots \\ 0 &0 & \cdots & 0 & s \end{vmatrix}=  \hspace{5cm}
\nonumber\end{align}
\begin{align} {\scriptstyle
= \bigg[ (a_{11}a_{22}-s \cdot a_{12})H_3(A)-(a_{11}a_{23}-s \cdot  a_{13})H_4(A) \cdot s\bigg]+\bigg[ (a_{11}a_{24}-s \cdot a_{14})H_5(A)\cdot s^2-(a_{11}a_{25}-s \cdot a_{15})H_6(A)\cdot s^3\bigg]+ \cdots
} 
\nonumber\\ \label{eq:1}
{\scriptstyle \cdots+\bigg[ (a_{11}a_{2(n-1)}-s \cdot a_{1(n-1)})H_n(A) \cdot s^{n-3}-(a_{11}a_{2n}-s \cdot a_{1n})H_{n+1}(A) \cdot s^{n-2}\bigg] }
\end{align}
   (The last term depends on the parity of $n$, if $n$ is even, it is ${\scriptstyle \bigg[(a_{11}a_{2n}-s \cdot a_{1n})H_{n+1}(A) \cdot s^{n-2}\bigg] }$.)

 Now we state a new lemma which is inspired by the previous expansion. 
\begin{lemma}\label{lemma3.2}
For all k in $\{2,3,\dots,n-1\}$ we have the following inequality: \begin{equation}\label{eq:2}
  \bigg| (a_{11}a_{2k}-s \cdot a_{1k}) \cdot H_{k+1}(A)-(a_{11}a_{2(k+1)}-s\cdot a_{1(k+1)})\cdot  s \cdot H_{k+2}(A)  \bigg| \leq t^2 \cdot M_{n-k}+t  s^2 \cdot M_{n-k-1}
  \end{equation}
  and also, for the case when $n$ is even: \begin{equation}\label{eq:3}
  \bigg| (a_{11}a_{2n}- s \cdot  a_{1n}) \cdot H_{n+1}(A) \bigg| \leq  t^2 \cdot M_0
  \end{equation} 
\end{lemma}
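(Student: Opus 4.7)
I treat the two inequalities separately.

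For \eqref{eq:3}, since $H_{n+1}(A)=1$ by convention, the expression reduces to $|a_{11}a_{2n}-sa_{1n}|$. Since $a_{11}a_{2n}\leq t^{2}$ and $sa_{1n}\leq st\leq t^{2}$ (using $s\leq t$), the quantity is at most $t^{2}=t^{2}M_{0}$.

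For \eqref{eq:2} the plan proceeds as follows. \emph{Vertex reduction.} The submatrices defining $H_{k+1}(A)$ and $H_{k+2}(A)$ involve only rows and columns of index $\geq k+1$, so they do not contain the five entries $a_{11},a_{1k},a_{2k},a_{1(k+1)},a_{2(k+1)}$ (since $k\geq 2$). The left-hand side of \eqref{eq:2} is therefore multilinear in these five entries over $[0,t]^{5}$, so its maximum absolute value is attained when each equals $0$ or $t$; I split on $a_{11}$. If $a_{11}=0$, the quantity collapses to $|-sa_{1k}H_{k+1}(A)+s^{2}a_{1(k+1)}H_{k+2}(A)|\leq st\,M_{n-k}+s^{2}t\,M_{n-k-1}\leq t^{2}M_{n-k}+s^{2}tM_{n-k-1}$, using $st\leq t^{2}$. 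If $a_{11}=t$, set $Y=ta_{2k}-sa_{1k}$ and $Z=ta_{2(k+1)}-sa_{1(k+1)}$, so both lie in $[-st,t^{2}]$ and the target becomes $|Yh_{1}-sZh_{2}|$ where $h_{j}=H_{k+j}(A)$.

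The key structural input is the first-column Laplace expansion of the $(n-k)\times(n-k)$ block defining $h_{1}$,
\[
h_{1}=a_{(k+1)(k+1)}\,h_{2}-sR,
\]
where $R$ is the determinant of an $(n-k-1)\times(n-k-1)$ upper Hessenberg Bohemian matrix, hence $|R|\leq M_{n-k-1}$. Since $a_{(k+1)(k+1)}\in[0,t]$, this forces the sign coupling $h_{2}\geq 0\Rightarrow h_{1}\geq -sM_{n-k-1}$ and $h_{2}\leq 0\Rightarrow h_{1}\leq sM_{n-k-1}$; in particular, whenever $h_{1}$ and $h_{2}$ have opposite signs, $|h_{1}|\leq sM_{n-k-1}$. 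For fixed $h_{1},h_{2}$ the target is linear in each of $Y,Z$, so its extremal absolute value is attained at a corner of $[-st,t^{2}]^{2}$. When $h_{1},h_{2}$ have the same sign, the corner $(Y,Z)=(t^{2},-st)$ gives exactly $t^{2}|h_{1}|+s^{2}t|h_{2}|\leq t^{2}M_{n-k}+s^{2}tM_{n-k-1}$, and the remaining three corners are dominated by this via the inequality $M_{n-k}\geq sM_{n-k-1}$, which follows from Lemma~\ref{lemma3.1} and $s\leq t$. When $h_{1},h_{2}$ have opposite signs, the bound $|h_{1}|\leq sM_{n-k-1}$ above makes both $|Yh_{1}|$ and $|sZh_{2}|$ at most $st^{2}M_{n-k-1}$, so the triangle inequality gives $|Yh_{1}-sZh_{2}|\leq 2st^{2}M_{n-k-1}$, and this is $\leq t^{2}M_{n-k}+s^{2}tM_{n-k-1}$ by $(t-s)^{2}\geq 0$ combined with Lemma~\ref{lemma3.1}.

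\emph{Main obstacle.} A direct triangle inequality delivers only $t^{2}M_{n-k}+st^{2}M_{n-k-1}$, which exceeds the target when $t>s$. The decisive step is the sign coupling above: $h_{1}$ cannot simultaneously be large in absolute value and have sign opposite to $h_{2}$, and this is precisely what transfers the missing factor of $s/t$ onto the $M_{n-k-1}$ coefficient.
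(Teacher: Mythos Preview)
Your proof is correct and rests on the same structural fact the paper uses: the Laplace expansion $h_{1}=a_{(k+1)(k+1)}h_{2}-sR$ with $|R|\leq M_{n-k-1}$, together with Lemma~\ref{lemma3.1} and the endpoint inequality $(t-s)^{2}\geq 0$. The organization differs. The paper sets $x=a_{11}a_{2k}-sa_{1k}\in[-st,t^{2}]$ and $y=-s(a_{11}a_{2(k+1)}-sa_{1(k+1)})\in[-st^{2},s^{2}t]$, then checks the four corners of that rectangle; three corners fall to the triangle inequality plus Lemma~\ref{lemma3.1}, while the hard corner $(x,y)=(t^{2},-st^{2})$ is handled by substituting the expansion and bounding $|a_{(k+1)(k+1)}-s|\leq\max\{t-s,s\}$. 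You instead split on $a_{11}\in\{0,t\}$ (which incidentally handles more cleanly the dependence of $x$ and $y$ through the shared entry $a_{11}$), and when $a_{11}=t$ you repackage the same expansion as a \emph{sign coupling}: if $h_{1},h_{2}$ have opposite signs then $|h_{1}|\leq sM_{n-k-1}$. Your same-sign case then corresponds to the paper's easy corners, and your opposite-sign bound $2st^{2}M_{n-k-1}\leq t^{2}M_{n-k}+s^{2}tM_{n-k-1}$ is exactly the paper's Case~4.2 inequality. So the mathematical content is identical; your sign-coupling formulation is a pleasant conceptual gloss on why the naive triangle-inequality bound $t^{2}M_{n-k}+st^{2}M_{n-k-1}$ can be tightened to $t^{2}M_{n-k}+s^{2}tM_{n-k-1}$.
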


\begin{proof}
For \eqref{eq:2}, it is enough to show the following:  \begin{equation}\label{eq:4}
   \smash{\displaystyle \max_{x \in [-st,t^2], y \in [-st^2,s^2t]}}|x \cdot H_{k+1}(A)+y \cdot H_{k+2}(A)| \leq t^2M_{n-k}+ts^2M_{n-k-1}
  \end{equation} 
  
It suffices to check four extreme cases of $x$ and $y$, i.e, $$(x,y)\in \{(-st,s^2t),(-st,-st^2), \\ (t^2,s^2t), (t^2,-st^2) \}$$

Using $t \geq s > 0$, the triangle inequality, the definition of $M_n$, and Lemma \ref{lemma3.1} we get the following inequalities for these four cases:

\begin{itemize}
\item[\textit{1.}] $ (x,y)=(-st,s^2t)$:
\begin{align*}
|(-st)H_{k+1}(A)+(s^2t)H_{k+2}(A)| &\leq (st)|H_{k+1}(A)|+(s^2t)|H_{k+2}(A)| \leq \\ & \leq (st)M_{n-k}+(s^2t)M_{n-k-1} \leq \\ & \leq t^2 M_{n-k}+s^2tM_{n-k-1} \  \checkmark  
\end{align*}

\item[\textit{2.}] $ (x,y)=(-st,-st^2)$:
\begin{align*}
|(-st)H_{k+1}(A)+(-st^2)H_{k+2}(A)| & \leq (st)|H_{k+1}(A)|+(st^2)|H_{k+2}(A)| \leq \\ & \leq (st)M_{n-k}+(st^2)M_{n-k-1}=\\ & = (st)M_{n-k}+ts(t-s)M_{n-k-1}+ts^2M_{n-k-1} \leq \\ & \leq (st)M_{n-k} + s(t-s)M_{n-k} + ts^2M_{n-k-1}= \\ & = (2ts-s^2)M_{n-k} + ts^2M_{n-k-1} \leq \\ & \leq t^2M_{n-k} + ts^2M_{n-k-1}  \ \checkmark
\end{align*}

\item[\textit{3.}] $ (x,y)=(t^2,s^2t)$:
\begin{align*}
|(t^2)H_{k+1}(A)+(s^2t)H_{k+2}(A)| & \leq t^2|H_{k+1}(A)|+s^2t|H_{k+2}(A)| \leq \\ & \leq  t^2M_{n-k}+s^2tM_{n-k-1} \  \checkmark
\end{align*}

\item[\textit{4.}] $ (x,y)=(t^2,-st^2)$:

$$|(t^2)H_{k+1}(A)+(-st^2)H_{k+2}(A)|=$$ 
\begin{align*} 
&= t^2\cdot \abs\Bigg( \begin{vmatrix} a_{(k+1)(k+1)} & a_{(k+1)(k+2)} & \cdots   & a_{(k+1)n} \\ s & a_{(k+2)(k+2)} & \cdots & a_{(k+2)n} \\ \vdots & \ddots  & \ddots & \vdots \\ 0  &   \cdots & s & a_{nn} \end{vmatrix} -s \cdot \begin{vmatrix} a_{(k+2)(k+2)} & a_{(k+2)(k+3)} & \cdots   & a_{(k+2)n} \\ s & a_{(k+3)(k+3)} & \cdots & a_{(k+3)n} \\ \vdots & \ddots  & \ddots & \vdots \\ 0  &   \cdots & s & a_{nn} \end{vmatrix} \Bigg)= \\ & = t^2 \cdot \abs\Bigg( \big(a_{(k+1)(k+1)}-s \big)H_{k+2}(A)- s \cdot \begin{vmatrix} a_{(k+1)(k+2)} & a_{(k+1)(k+3)} & \cdots   & a_{(k+1)n} \\ s & a_{(k+3)(k+3)} & \cdots & a_{(k+3)n} \\ \vdots & \ddots  & \ddots & \vdots \\ 0  &   \cdots & s & a_{nn} \end{vmatrix} \Bigg) \leq  \\ & \leq   t^2 \cdot \abs\Bigg( \big(a_{(k+1)(k+1)}-s\big)H_{k+2}(A)\Bigg)+t^2 s \cdot \abs\Bigg( \begin{vmatrix} a_{(k+1)(k+2)} & a_{(k+1)(k+3)} & \cdots   & a_{(k+1)n} \\ s & a_{(k+3)(k+3)} & \cdots & a_{(k+3)n} \\ \vdots & \ddots  & \ddots & \vdots \\ 0  &   \cdots & s & a_{nn} \end{vmatrix} \Bigg) \leq  
\\
&\leq t^2 \cdot \abs\Bigg( \big(a_{(k+1)(k+1)}-s\big)H_{k+2}(A)\Bigg)+t^2s \cdot M_{n-k-1} = \\ &
 =  t^2 \cdot \big|a_{(k+1)(k+1)}-s\big| \cdot \big|H_{k+2}(A)\big| + t^2s \cdot M_{n-k-1} \leq \\ & \leq  t^2 \cdot \big|a_{(k+1)(k+1)}-s\big| \cdot M_{n-k-1} + t^2s  \cdot M_{n-k-1} \leq  \\ & \leq t^2 \cdot \max \{t-s,s\} \cdot M_{n-k-1} + t^2s  \cdot M_{n-k-1} \hspace{20cm}
\end{align*} 
 \begin{itemize}
 \item[\textit{4.1}] If $t-s= \max \{t-s,s\}$:  \begin{align*}
 |t^2H_{k+1}(A)-st^2H_{k+2}(A)| &\leq  t^2 \cdot (t-s) \cdot M_{n-k-1} + t^2s  \cdot M_{n-k-1}=\\&= t^3M_{n-k-1} \leq \\ & \leq t^2 M_{n-k} \leq \\ & \leq t^2M_{n-k}+ts^2M_{n-k-1} \ \checkmark
 \end{align*}
 \item[\textit{4.2}] If $s= \max \{t-s,s\}$: \begin{align*}
 |t^2H_{k+1}(A)-st^2H_{k+2}(A)| &\leq t^2 \cdot s \cdot M_{n-k-1} + t^2s  \cdot M_{n-k-1}= \\ & = (2t^2s-ts^2)M_{n-k-1}+ts^2M_{n-k-1} \leq \\ & \leq (2ts-s^2)M_{n-k}+ ts^2M_{n-k-1} \leq \\ & \leq t^2M_{n-k}+ ts^2M_{n-k-1} \  \checkmark
 \end{align*}
 
\end{itemize} 

\end{itemize}

Hence \eqref{eq:4} is done. And \eqref{eq:3} is trivial.
\end{proof}

 Using the triangle inequality and Lemma \ref{lemma3.2} in \eqref{eq:1} yields the following inequality for $M_n$: 
\begin{equation} \label{eq:5}
M_n \leq t^2M_{n-2}+ts^2M_{n-3}+t^2s^2M_{n-4}+ts^4M_{n-5}+t^2s^4M_{n-6}+ts^6M_{n-7}+\cdots
\end{equation}
Note as well that we have $M_0=1$, $M_1=t$, $M_2=t^2$. 

 Define a new sequence $(K_n)_{n \geq 0}$ in the following way: $K_0=1$, $K_1=t$, $K_2=t^2$ and $K_n=tK_{n-1}+s^2K_{n-2}$, for all $n \geq 3$. We state two simple lemmas concerning this sequence. 
\begin{lemma}\label{lemma3.3}
$K_n=tK_{n-1}+ts^2K_{n-3}+ts^4K_{n-5}+\cdots$ for all $n \geq 1$.
\end{lemma}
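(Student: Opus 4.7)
The plan is to prove the identity by straightforward induction on $n$, exploiting the two-term recurrence $K_n = tK_{n-1} + s^2 K_{n-2}$ together with the fact that the claimed expansion for $K_{n-2}$ already contains precisely the ``tail'' we need. The statement should be read with the understanding that the sum on the right-hand side terminates as soon as the $K$-index would become negative, so that for odd $n = 2m+1$ the last term is $ts^{2m}K_{0}$ and for even $n = 2m$ the last term is $ts^{2m-2}K_{1}$.

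First I would dispose of the base cases $n=1$ and $n=2$ by direct computation: for $n=1$ the right-hand side is simply $tK_0 = t = K_1$, and for $n=2$ it is $tK_1 = t^2 = K_2$. (Note that for $n=2$ the identity does \emph{not} come from applying the recurrence $K_n = tK_{n-1} + s^2K_{n-2}$, which fails at $n=2$ since $K_2 = t^2 \neq t^2 + s^2$; this is why separate base cases are needed.)

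For the inductive step, fix $n \geq 3$ and assume the identity holds for all positive integers smaller than $n$. Apply the defining recurrence to obtain
\begin{equation*}
K_n \;=\; tK_{n-1} + s^2 K_{n-2}.
\end{equation*}
Since $n-2 \geq 1$, the inductive hypothesis applies to $K_{n-2}$, giving
\begin{equation*}
K_{n-2} \;=\; tK_{n-3} + ts^2 K_{n-5} + ts^4 K_{n-7} + \cdots
\end{equation*}
with the sum terminating at $K_0$ or $K_1$ depending on the parity of $n-2$. Multiplying this expression by $s^2$ yields
\begin{equation*}
s^2 K_{n-2} \;=\; ts^2 K_{n-3} + ts^4 K_{n-5} + ts^6 K_{n-7} + \cdots,
\end{equation*}
and substituting back produces exactly the desired identity for $K_n$.

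The only subtle point, and the place where one must be careful, is bookkeeping the termination of the alternating-index sum across the parities of $n$: one must check that the last surviving term of the inductive hypothesis for $K_{n-2}$, after multiplication by $s^2$, lines up with the correct final term of the expansion claimed for $K_n$. Verifying this is a short parity check (split into $n$ even and $n$ odd), but it is the step where a sloppy write-up could hide an off-by-one error; everything else is purely formal.
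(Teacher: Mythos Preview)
Your proposal is correct and follows essentially the same approach as the paper: both argue by induction on $n$, verify the base cases $n=1,2$ directly, and for the inductive step write $K_n = tK_{n-1} + s^2K_{n-2}$ and then substitute the inductive hypothesis applied to $K_{n-2}$. Your write-up is slightly more careful about the termination and parity of the sum, but the argument is the same.
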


\begin{proof}
We use induction on $n$. For $n=1$ and $n = 2$, the equality is trivial. It is straightforward to verify that if the statement holds for $n\in \{1,2,\dots,k\}$ and then it holds for $n=k+1$ where $k \geq 2$. By using the induction assumption: $$K_{k+1}=tK_{k}+s^2K_{k-1}=tK_{k}+ts^2K_{k-2}+ts^4K_{k-4}+ts^6K_{k-6}+\cdots $$ 
\end{proof}
 
 \begin{lemma}\label{lemma3.4}
 $K_n=t^2K_{n-2}+ts^2K_{n-3}+t^2s^2K_{n-4}+ts^4K_{n-5}+\cdots$ for all $n \geq 2$.
 \end{lemma}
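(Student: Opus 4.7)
The plan is to derive Lemma \ref{lemma3.4} directly from the one-step recursion $K_n = tK_{n-1} + s^2K_{n-2}$ by substituting the expansion furnished by Lemma \ref{lemma3.3} into \emph{both} summands on the right-hand side. This avoids a separate induction and reduces the statement to a routine rearrangement of series.

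Concretely, I would proceed in three steps. First, I apply Lemma \ref{lemma3.3} to $K_{n-1}$ (valid since $n-1 \geq 1$ whenever $n \geq 2$) to obtain
\begin{equation*}
tK_{n-1} \;=\; t^2 K_{n-2} \;+\; t^2 s^2 K_{n-4} \;+\; t^2 s^4 K_{n-6} \;+\; \cdots ,
\end{equation*}
which provides all the terms in the target expression whose coefficient starts with $t^2$. Second, I apply Lemma \ref{lemma3.3} to $K_{n-2}$ to obtain
\begin{equation*}
s^2 K_{n-2} \;=\; t s^2 K_{n-3} \;+\; t s^4 K_{n-5} \;+\; t s^6 K_{n-7} \;+\; \cdots ,
\end{equation*}
which produces the remaining terms, whose coefficients start with $t$ and carry odd powers of $s$. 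Third, I add the two displays and observe that the resulting expression interleaves precisely as claimed: the $t^2 s^{2j}K_{n-2-2j}$ terms come from the first display and the $t s^{2j+2} K_{n-3-2j}$ terms come from the second, producing
\begin{equation*}
K_n \;=\; t^2 K_{n-2} + t s^2 K_{n-3} + t^2 s^2 K_{n-4} + t s^4 K_{n-5} + t^2 s^4 K_{n-6} + \cdots .
\end{equation*}

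The only potentially delicate point is bookkeeping at the tail of the expansion, where the truncation depends on the parity of $n$ (the last term in the Lemma \ref{lemma3.3} expansion uses $K_0 = 1$ or $K_1 = t$ according to parity). I would handle this by noting that for $n = 2$ the identity reduces to $K_2 = t^2 K_0$, which holds, and for $n = 3$ it reduces to $K_3 = t^2 K_1 + t s^2 K_0 = t^3 + t s^2$, which matches $tK_2 + s^2 K_1$. For $n \geq 4$ the two expansions from Lemma \ref{lemma3.3} have matching lengths up to parity and their sum clearly terminates with the correct final term; a single sanity check for each parity class closes the argument. I expect this boundary check to be the only nontrivial piece of the proof, and it is essentially a verification rather than a genuine obstacle.
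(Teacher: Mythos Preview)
Your proposal is correct and follows essentially the same approach as the paper: handle $n=2$ directly, and for $n\geq 3$ write $K_n = tK_{n-1} + s^2K_{n-2}$ and apply Lemma~\ref{lemma3.3} to each of $K_{n-1}$ and $K_{n-2}$, then interleave the resulting series. The paper's argument is just a terser version of yours, omitting the explicit tail/parity discussion.
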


\begin{proof}
For $n=2$ it is clear. Using Lemma \ref{lemma3.3}, for any $n \geq 3$:
\begin{align*}
K_{n}&=tK_{n-1}+s^2K_{n-2}=\\ & =t \cdot \big(tK_{n-2}+ts^2K_{n-4}+ts^4K_{n-6}+\cdots\big) + s^2 \cdot \big(tK_{n-3}+ts^2K_{n-5}+ts^4K_{n-7}+\cdots\big)= \\ & =t^2K_{n-2}+ts^2K_{n-3}+t^2s^2K_{n-4}+ts^4K_{n-5}+\cdots
\end{align*}
\end{proof}

\begin{lemma}\label{lemma3.5}
$K_n \geq M_n$ for all $n \in \mathbb{N}$.
\end{lemma}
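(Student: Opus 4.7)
The plan is to argue by strong induction on $n$. For the base cases, I would verify directly that $M_0 = 1 = K_0$, $M_1 = t = K_1$, and $M_2 = t^2 = K_2$; the last equality holds because $|a_{11} a_{22} - s a_{12}| \leq t^2$ whenever $a_{ij} \in [0,t]$ and $s \in (0,t]$, with equality attained at $a_{11}=a_{22}=t$, $a_{12}=0$.

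For the inductive step, assume $M_k \leq K_k$ for every $k < n$. The key input is inequality \eqref{eq:5}, which was obtained by applying Lemma \ref{lemma3.2} and the triangle inequality termwise to the bracketed pairs in the Laplace expansion \eqref{eq:1}. Substituting $K_k$ for $M_k$ on the right-hand side of \eqref{eq:5} by the induction hypothesis produces
\begin{equation*}
M_n \leq t^2 K_{n-2} + t s^2 K_{n-3} + t^2 s^2 K_{n-4} + t s^4 K_{n-5} + \cdots,
\end{equation*}
and this is exactly the expansion of $K_n$ supplied by Lemma \ref{lemma3.4}. Hence $M_n \leq K_n$, closing the induction.

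The step I expect to require the most care, and effectively the only non-automatic part, is confirming that the truncations of the two series agree in each parity of $n$. For odd $n$, both series end with $t s^{n-3} M_0$ (respectively $t s^{n-3} K_0$), whereas for even $n$ both pick up an extra final term $t^2 s^{n-2} M_0$ (respectively $t^2 s^{n-2} K_0$); on the $M$ side this final term originates from the special estimate \eqref{eq:3} that handles the unpaired last bracket of the expansion \eqref{eq:1}, while on the $K$ side it is the $K_0$ tail of the identity in Lemma \ref{lemma3.4}. Once this parity bookkeeping is carried out, all the heavy lifting has already been done by Lemmas \ref{lemma3.2} and \ref{lemma3.4}, and no further estimates are needed.
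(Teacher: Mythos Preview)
Your proposal is correct and follows exactly the paper's approach: strong induction with base cases $K_i=M_i$ for $i=0,1,2$, then combine inequality \eqref{eq:5} with the identity of Lemma~\ref{lemma3.4}. One small slip in your parity bookkeeping: for odd $n$ the final term of both series is $t\,s^{\,n-1}M_0$ (respectively $t\,s^{\,n-1}K_0$), not $t\,s^{\,n-3}M_0$; with that correction the tails match exactly and the argument goes through as you describe.
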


\begin{proof}
Notice that we already know, by Lemma \ref{lemma3.4} and \eqref{eq:5}, that
$$K_0=M_0, \ K_1=M_1, \ K_2=M_2$$ $$K_n=t^2K_{n-2}+ts^2K_{n-3}+t^2s^2K_{n-4}+ts^4K_{n-5}+\cdots$$ $$M_n \leq t^2M_{n-2}+ts^2M_{n-3}+t^2s^2M_{n-4}+ts^4M_{n-5}+\cdots$$

By induction, it is clear that $K_n \geq M_n$, $\forall n \in \mathbb{N}$.
\end{proof}

\begin{lemma}\label{lemma3.6}
$K_n \leq M_n$ for all $n \in \mathbb{N}$.
\end{lemma}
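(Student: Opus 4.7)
The plan is to exhibit an explicit matrix $A_n \in \mathcal{G}_s^{n\times n}([0,t])$ with $|\det A_n| = K_n$, which immediately gives $M_n \geq K_n$ and, together with Lemma \ref{lemma3.5}, establishes $M_n = K_n$. Let $A_n = (a_{ij})_{1 \leq i, j \leq n}$ be defined by $a_{ii} = t$ on the diagonal, $a_{i+1,i} = s$ on the subdiagonal, and above the diagonal
\[
a_{ij} \;=\; \begin{cases} t, & j - i \geq 2 \text{ is even,} \\ 0, & j - i \geq 1 \text{ is odd.} \end{cases}
\]
Thus $A_n$ has a striped upper-triangular part: every second super-diagonal is filled with $t$, the rest with $0$. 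All upper-triangular entries lie in $\{0, t\} \subset [0, t]$, so $A_n \in \mathcal{G}_s^{n\times n}([0,t])$.

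I would prove $\det A_n = K_n$ by strong induction on $n$. The base cases $n = 1$ and $n = 2$ give $\det A_1 = t = K_1$ and $\det A_2 = t^2 = K_2$ directly. For the inductive step, expand $\det A_n$ along the first column, in which the only nonzero entries are $a_{11} = t$ and $a_{21} = s$:
\[
\det A_n \;=\; t \cdot \det B_1 \;-\; s \cdot \det B_2,
\]
where $B_1$ is the bottom-right $(n-1) \times (n-1)$ submatrix of $A_n$ and $B_2$ is the minor obtained by deleting row $2$ and column $1$. Shifting both row and column indices by $1$ preserves $j - i$, so $B_1$ reproduces $A_{n-1}$ and $\det B_1 = K_{n-1}$ by induction. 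In $B_2$ the first column has the single nonzero entry $s$ in position $2$ (the top entry is $a_{12} = 0$ by the parity rule, the remaining entries lie strictly below the subdiagonal of $A_n$ and vanish by the Hessenberg structure), so expanding along that column yields $\det B_2 = -s \cdot \det C$, where $C$ is a further $(n-2) \times (n-2)$ minor. Tracking the indices, $C$ is the submatrix of $A_n$ on rows $\{1, 4, 5, \ldots, n\}$ and columns $\{3, 4, \ldots, n\}$; a short parity check shows $C = A_{n-2}$, so $\det C = K_{n-2}$ by induction, and
\[
\det A_n \;=\; t K_{n-1} - s \cdot (-s K_{n-2}) \;=\; t K_{n-1} + s^2 K_{n-2} \;=\; K_n.
\]

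The only nontrivial step is the identification $C = A_{n-2}$: the row-index map from $C$ to $A_n$ is $1 \mapsto 1$ and $k \mapsto k + 2$ for $k \geq 2$, while the column map is $j \mapsto j + 2$ uniformly. For row indices $k \geq 2$ the difference $j - k$ is preserved exactly, so entries transfer verbatim; for the first row the relevant differences are $(j+2) - 1 = j + 1$ in $A_n$ and $j - 1$ in $A_{n-2}$, which share parity. Hence no analytic estimates enter, and the construction itself carries the full content of the lower bound.
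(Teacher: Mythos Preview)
Your proof is correct and follows essentially the same approach as the paper: your matrix $A_n$ is exactly the paper's $U_n$, and the paper likewise verifies the recurrence $|U_n| = t\,|U_{n-1}| + s^2\,|U_{n-2}|$ via Laplace expansion (though without spelling out the two-step expansion and the identification $C = A_{n-2}$ as carefully as you do).
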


\begin{proof}
It suffices to give an example $A \in \mathcal{G}_s^{n \times n} ([0,t])$ for which the absolute determinant value is equal to $K_n$. Define an $n \times n$ matrix as follows: 
\begin{equation}\label{e:7}
 U_n(s,t)= U_n \coloneqq \begin{footnotesize} \begin{bmatrix}t & 0 & t & 0 & \cdots && \cdots  \\ s & t& 0& t& \cdots && \cdots \\ 0 & s & t & 0 & \cdots && \cdots  \\ 0 & 0 & s & t & \ddots && \cdots  \\ \vdots & \ddots & \ddots & \ddots & \ddots &&\vdots \\ 0 & 0& 0& 0& \ddots &t & 0\\0 & 0& 0&0 & \cdots & s & t \end{bmatrix}_{n\times n}\end{footnotesize} 
\end{equation}
 i.e., $a_{ij}=t$ if $j \geq i$ and $j-i$ is even; $a_{ij}=0$ if $j > i$ and $j-i$ is odd. Note that $|U_1|=t=K_1$, $|U_2|=t^2=K_2$ and by using Laplace expansion it is easy to see that 
\begin{equation}\label{e:8}
|U_n|=t \cdot |U_{n-1}|+s^2 |U_{n-2}|
\end{equation} 
  which is the same recurrence relation as for $(K_n)_{n\geq 1}$. Hence, for all positive integers $n$, we have $K_n = |U_n| \leq M_n$. 
\end{proof}

 As a corollary of Lemma \ref{lemma3.5} and \ref{lemma3.6}, $M_n=|U_n|=K_n$. This completes the proof of Theorem \ref{theorem2.6}.  \begin{flushright}  QED.  \end{flushright}
 
  In the next section we discuss what happens when $s \geq t>0$.  We are able to say something in two regimes. \bigskip
\section{Case II}
   We consider first the case when $s \gtrsim t$, i.e., $s$ is slightly greater than $t$. 
  
  Define permutation matrices $P_n^{(r)}$ and $P_n^{(c)}$ in the following way:  Let $P_n^{(r)}$ be obtained by interchanging the first two rows of the $n \times n$ identity matrix and $P_n^{(c)}$ be obtained by interchanging the last two columns of the $n \times n$ identity matrix. And then define $U_n^{(r)}(s,t)=U_n^{(r)}, \ U_n^{(c)}(s,t)=U_n^{(c)}, \ U_n^{(rc)}(s,t)=U_n^{(rc)}$ in $\mathcal{G}_s^{n \times n} (\{0,t\})$ for $n \geq 4$ as follows:

 \begin{align*}
  P_n^{(r)} \cdot U_n^{(r)}=\footnotesize \begin{bmatrix}s & 0 & t & 0 & \cdots && \cdots  \\ t & t& 0& t& \cdots && \cdots \\ 0 & s & t & 0 & \cdots && \cdots  \\ 0 & 0 & s & t & \ddots && \cdots  \\ \vdots & \ddots & \ddots & \ddots & \ddots &&\vdots \\ 0 & 0& 0& 0& \ddots &t & 0\\0 & 0& 0&0 & \cdots & s & t \end{bmatrix}, \ U_n^{(c)} \cdot P_n^{(c)} = \footnotesize \begin{bmatrix}t & 0 & t & 0 & \cdots && \cdots  \\ s & t& 0& t& \cdots && \cdots \\ 0 & s & t & 0 & \cdots && \cdots  \\ 0 & 0 & s & t & \ddots && \cdots  \\ \vdots & \ddots & \ddots & \ddots & \ddots &&\vdots \\ 0 & 0& 0& 0& \ddots &t & 0\\0 & 0& 0&0 & \cdots & t & s \end{bmatrix} 
  \end{align*}
 \begin{equation} \label{e:12}
 \text{and} \  P_n^{(r)} \cdot U_n^{(rc)} \cdot P_n^{(c)} = \begin{footnotesize} \begin{bmatrix}s & 0 & t & 0 & \cdots && \cdots  \\ t & t& 0& t& \cdots && \cdots \\ 0 & s & t & 0 & \cdots && \cdots  \\ 0 & 0 & s & t & \ddots && \cdots  \\ \vdots & \ddots & \ddots & \ddots & \ddots &&\vdots \\ 0 & 0& 0& 0& \ddots &t & 0 \\ 0 & 0& 0&0 & \cdots & t & s \end{bmatrix} \end{footnotesize}
 \end{equation}
   
   We can determine the relation between the determinants of these matrices and the determinant of $U_n$ in the following way.
   
    Using the Laplace expansion for the first column of $P_n^{(r)} \cdot U_n^{(r)}$ we obtain
  \begin{equation}\label{e:13}
 \abs(|U_n^{(r)}|)=\abs\big(\big|P_n^{(r)} \cdot U_n^{(r)}\big|\big)= s\cdot |U_{n-1}| +ts\cdot  |U_{n-2}|
 \end{equation}
and similarly
\begin{equation}\label{e:14}
 \abs(|U_n^{(c)}|)=s\cdot |U_{n-1}| +ts\cdot  |U_{n-2}|
 \end{equation}
 for all $n \geq 4$.  And again using the Laplace expansion for both the first column and the last row of $P_n^{(r)} \cdot U_n^{(rc)} \cdot P_n^{(c)}$ we get
 \begin{align}
|U_n^{(rc)}|&=\big|P_n^{(r)} \cdot U_n^{(rc)} \cdot P_n^{(c)}\big|= s^2\cdot |U_{n-2}| +2ts^2\cdot  |U_{n-3}|+t^2s^2\cdot |U_{n-4}|  \label{e:15}
 \end{align}
 for all $n\geq4$, defining $ |U_{0}| \coloneqq 0$ for convenience.
   
  \begin{proposition}\label{proposition4.1}
  For the statement of Theorem \ref{theorem2.6}, $1$ is the exact upper bound for $\dfrac{s}{t}$. More explicitly, for any $s>t$ and $n \geq 4$, the matrix that gives the maximum absolute determinant is not $U_n$.
  \end{proposition}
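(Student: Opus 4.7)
The natural plan is to exhibit, for every $s>t$ and $n\ge 4$, an explicit matrix in $\mathcal{G}_s^{n\times n}([0,t])$ whose absolute determinant exceeds $|U_n|$. The most promising candidate is $U_n^{(rc)}$ defined in \eqref{e:12}, because its recursion \eqref{e:15} expresses $|U_n^{(rc)}|$ in terms of the quantities $|U_{n-2}|, |U_{n-3}|, |U_{n-4}|$ that also appear when $|U_n|$ is unfolded twice via \eqref{e:8}. I would therefore aim to prove the concrete claim
\begin{equation*}
|U_n^{(rc)}| - |U_n| \;=\; t(s-t)(s+t)\,|U_{n-3}|,
\end{equation*}
which, combined with $|U_k|>0$ for all $k\ge 1$, immediately gives $|U_n^{(rc)}|>|U_n|$ whenever $s>t$.

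The first step is purely mechanical: apply the recurrence \eqref{e:8} twice to rewrite
\begin{equation*}
|U_n| \;=\; (t^3+2ts^2)\,|U_{n-3}| + (t^2s^2+s^4)\,|U_{n-4}|,
\end{equation*}
and apply \eqref{e:8} once to \eqref{e:15} to obtain
\begin{equation*}
|U_n^{(rc)}| \;=\; 3ts^2\,|U_{n-3}| + (s^4+t^2s^2)\,|U_{n-4}|.
\end{equation*}
Subtracting cancels the $|U_{n-4}|$ terms exactly and leaves the coefficient $ts^2-t^3 = t(s-t)(s+t)$ on $|U_{n-3}|$, giving the displayed identity. For the base case $n=4$ one uses the convention $|U_0|\coloneqq 0$ together with $|U_1|=t$; a direct check confirms $|U_4^{(rc)}|-|U_4|=t^2(s-t)(s+t)$, consistent with the general formula.

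The second step is to verify $|U_{n-3}|>0$ for $n\ge 4$. This follows by an immediate induction from $|U_1|=t>0$, $|U_2|=t^2>0$, and the positive recurrence \eqref{e:8} (both coefficients $t$ and $s^2$ are positive). Consequently, for every $s>t$ and every $n\ge 4$, $|U_n^{(rc)}|>|U_n|>0$, so $U_n^{(rc)}\in\mathcal{G}_s^{n\times n}(\{0,t\})\subseteq\mathcal{G}_s^{n\times n}([0,t])$ witnesses that $U_n$ does not attain the maximum absolute determinant.

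I do not anticipate a real obstacle here: the only substantive content is the algebraic identity $|U_n^{(rc)}|-|U_n|=t(s-t)(s+t)|U_{n-3}|$, and that falls out of the two recursions \eqref{e:8} and \eqref{e:15} after two rewrites. The mild subtlety is handling $n=4$ correctly with the convention $|U_0|=0$, but this is a single line of verification rather than a genuine difficulty.
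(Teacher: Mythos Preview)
Your argument is correct and follows essentially the same route as the paper: both compare $|U_n^{(rc)}|$ with $|U_n|$ by expanding each via the recursions \eqref{e:8} and \eqref{e:15}. The only difference is cosmetic---the paper establishes $|U_n^{(rc)}|>|U_n|$ through a short chain of inequalities, whereas you push the algebra one step further to obtain the clean closed form $|U_n^{(rc)}|-|U_n|=t(s-t)(s+t)\,|U_{n-3}|$, which is slightly more informative but rests on the identical ingredients.
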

\begin{proof}
If $s>t$, by \eqref{e:8} and \eqref{e:15} we have the following when $ n \geq 4$:
 \begin{align}\label{e:16}
 |U_n^{(rc)}|=s^2|U_{n-2}|+2ts^2|U_{n-3}|+t^2s^2|U_{n-4}|&>s^2|U_{n-2}|+ts^2|U_{n-3}|+\big(t^3|U_{n-3}|+t^2s^2|U_{n-4}|\big) = \nonumber \\
 &= s^2|U_{n-2}|+ts^2|U_{n-3}|+t^2|U_{n-2}|= \nonumber \\
 &=s^2|U_{n-2}|+t|U_{n-1}|=|U_n|
 \end{align}
 That means Theorem \ref{theorem2.6} is not valid for any $s>t$ and $n\geq 4$.
\end{proof}  
  \begin{proposition}\label{proposition4.2}
  For any $n\geq4$, $s,t > 0$ consider the matrix in $\mathcal{G}_s^{n \times n} ([0,t])$ which has the maximum absolute determinant among this set.  We call it maximizing matrix. By Theorem \ref{theorem2.6} we know the maximizing matrix for $0< \dfrac{s}{t} \leq 1$.  Proposition \ref{proposition4.1} shows that the maximizing matrix changes at $\dfrac{s}{t}=1$. We claim that the maximizing matrix changes from $U_n$ to $U_n^{(rc)}$ at $\dfrac{s}{t}=1$.
  \end{proposition}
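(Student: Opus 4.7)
The plan is to combine multilinearity of the determinant with a continuity argument that reduces the problem to a finite, explicit comparison, and then to settle that comparison using determinant identities in the spirit of~\eqref{e:16}. Since $|A|$ is linear in each upper-triangular entry, the maximum absolute determinant over $\mathcal{G}_s^{n\times n}([0,t])$ is attained at some vertex matrix in the finite set $\mathcal{G}_s^{n\times n}(\{0,t\})$. Let $\mathcal{T}$ denote the subset of vertex matrices whose absolute determinant equals $|U_n(t,t)|$ at $s=t$; by Theorem~\ref{theorem2.6} we have $U_n\in\mathcal{T}$. Every vertex matrix outside $\mathcal{T}$ has strictly smaller absolute determinant than $|U_n^{(rc)}(s,t)|$ at $s=t$, and since $\mathcal{G}_s^{n\times n}(\{0,t\})$ is finite and each $|V(s,t)|$ is polynomial in $s$, continuity produces an $\epsilon(n)>0$ for which those strict inequalities persist throughout $t\le s\le t(1+\epsilon(n))$. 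The maximizer on this window must therefore lie in $\mathcal{T}$.

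Next I would rank the obvious elements of $\mathcal{T}$. Combining~\eqref{e:8} and~\eqref{e:15} with the identity $|U_{n-1}|=t|U_{n-2}|+s^2|U_{n-3}|$ yields
\[
|U_n^{(rc)}|-|U_n|=t(s^2-t^2)\,|U_{n-3}|,
\]
which is strictly positive for $s>t$ and $n\ge 4$; this is the positivity that underlies Proposition~\ref{proposition4.1}. A parallel computation combining~\eqref{e:13} with~\eqref{e:8} and substituting $|U_{n-2}|=t|U_{n-3}|+s^2|U_{n-4}|$ produces
\[
|U_n^{(rc)}|-|U_n^{(r)}|=s(s-t)\bigl[(2t-s)|U_{n-3}|+s(s-t)|U_{n-4}|\bigr],
\]
which is strictly positive for $t<s<2t$; the identity for $U_n^{(c)}$ follows identically from~\eqref{e:14}. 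Hence $U_n^{(rc)}$ strictly dominates each of $U_n$, $U_n^{(r)}$, and $U_n^{(c)}$ as soon as $s>t$.

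The main obstacle is to rule out any additional element of $\mathcal{T}$ that might surpass $U_n^{(rc)}$ on the window in question. I would address this by revisiting the equality cases inside the proof of Theorem~\ref{theorem2.6}: the bound $M_n\le K_n$ arises from the chain of inequalities in Lemma~\ref{lemma3.2}, and tight equality at $s=t$ forces any candidate matrix to coincide with $U_n$ except, at most, for a local row/column swap of the form encoded in~\eqref{e:12}. Cataloguing the admissible swaps gives a finite list, and polynomial identities analogous to the two displayed above show that none of them exceeds $|U_n^{(rc)}|$ on $[t,t(1+\epsilon(n))]$, once $\epsilon(n)$ is chosen as the minimum of the thresholds arising from the finitely many pairwise comparisons. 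This finite verification closes the argument.
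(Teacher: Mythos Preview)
Your overall architecture matches the paper's exactly: reduce to the finite vertex set $\mathcal{G}_s^{n\times n}(\{0,t\})$, use continuity to confine the maximizer on a small window $t\le s\le t(1+\epsilon(n))$ to the tied set $\mathcal{T}$ at $s=t$, and then rank the elements of $\mathcal{T}$. Your two displayed identities are correct and amount to the paper's inequalities \eqref{e:16} and \eqref{e:17}.

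The substantive difference is in how $\mathcal{T}$ is pinned down. The paper does not attempt to extract the equality cases from Lemma~\ref{lemma3.2}; it simply invokes Li Ching's theorem \cite{9}, which states that for $s=t=1$ the only matrices in $\mathcal{G}_1^{n\times n}(\{0,1\})$ attaining the maximum absolute determinant are precisely $U_n$, $U_n^{(r)}$, $U_n^{(c)}$, $U_n^{(rc)}$. That single citation immediately gives $\mathcal{T}=\{U_n,U_n^{(r)},U_n^{(c)},U_n^{(rc)}\}$, after which your ranking identities finish the proof.

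Your alternative---tracing equality through the case analysis of Lemma~\ref{lemma3.2}---is where the proposal has a real gap. You assert that ``tight equality at $s=t$ forces any candidate matrix to coincide with $U_n$ except, at most, for a local row/column swap of the form encoded in~\eqref{e:12},'' but this is exactly the content of Li Ching's classification and is not obvious from Lemma~\ref{lemma3.2} alone. That lemma bounds each bracketed pair in \eqref{eq:1} separately via four extreme cases, and at $s=t$ several of those cases collapse; showing that simultaneous equality across all brackets forces the structure of $U_n$ up to the two specific swaps is a genuine combinatorial argument (it is, in effect, a re-proof of \cite{9}), and you have not carried it out. Without either that argument or the citation, the claim that $\mathcal{T}$ contains no further elements is unsupported, and the ``finite verification'' in your last paragraph is only asserted, not performed.
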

  
\begin{proof}
Suppose that for an $n \geq 4$ the maximizing matrix changes from $U_n$ to a matrix $R_n=R_n(s,t) \in \mathcal{G}_s^{n \times n} ([0,t])$ at $\dfrac{s}{t}=1$. Then the absolute value of the determinant of $R_n(1,1)$ must be equal to the absolute value of the determinant of $U_n(1,1)$. In \cite{9}, Li Ching proved that $\abs(|R_n(1,1)|)=|U_n(1,1)|$ if and only if $R_n(1,1) \in \big\{ U_n(1,1), U_n^{(r)}(1,1), U_n^{(c)}(1,1), U_n^{(rc)}(1,1)\big\}$. Hence $R_n \in \big\{ U_n, U_n^{(r)}, U_n^{(c)}, U_n^{(rc)} \big\}$.
  
   Furthermore, we have the following inequalities for $2t>s>t$ with $n\geq 4$:
\begin{align}\label{e:17}
&|U_{n-3}|(2t-s)(s-t)s+s^2(s-t)^2|U_{n-4}|>0 \nonumber \\
&\Rightarrow 3ts^2|U_{n-3}|+(s^4+t^2s^2)|U_{n-4}| > (s^3+2st^2)|U_{n-3}|+2ts^3|U_{n-4}| \nonumber \\
&\Rightarrow s^2|U_{n-2}|+2ts^2|U_{n-3}|+t^2s^2|U_{n-4}| > s|U_{n-1}|+ts|U_{n-2}| \nonumber \\
&\Rightarrow |U_n^{(rc)}|> \abs(|U_n^{(c)}|)= \abs(|U_n^{(r)}|)
 \end{align} by \eqref{e:8}, \eqref{e:13}, \eqref{e:14} and \eqref{e:15}.

  So, in \eqref{e:16} and \eqref{e:17} we have shown that for $2>\dfrac{s}{t}>1$ the maximum absolute determinant is attained by $U_n^{(rc)}$ among these four matrices. Therefore $R_n=U_n^{(rc)}$.
\end{proof}
  
  In view of the fact that $U_n^{(rc)}$ has become an important matrix in our investigation, we establish the recurrence relation satisfied by its determinant  in the following proposition.
  \begin{proposition}\label{proposition4.3}
  $|U_4^{(rc)}|=3s^2t^2$, $|U_5^{(rc)}|=s^4t+4s^2t^3$ and $\big(|U_n^{(rc)}|\big)_{n\geq 4}$ satisfies the same recurrence relation as $M_n$:
   \begin{equation}
  |U_n^{(rc)}|=t\cdot|U_{n-1}^{(rc)}|+s^2\cdot|U_{n-2}^{(rc)}|
  \end{equation}
  for any $n\geq 6$.
  \end{proposition}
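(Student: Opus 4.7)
The plan is to take equation (4.8) as the starting point, i.e.,
\[
|U_n^{(rc)}| \;=\; s^2|U_{n-2}| + 2ts^2|U_{n-3}| + t^2s^2|U_{n-4}|,
\]
together with the convention $|U_0|=0$ and the recurrence $|U_n|=t|U_{n-1}|+s^2|U_{n-2}|$ established in Lemma \ref{lemma3.6} (equation (3.8)), and derive everything as a direct algebraic consequence. Thus all three claims of the proposition are really statements about the sequence $(|U_n|)$; no further matrix combinatorics is needed beyond what is already in equation (4.8).

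First I would verify the base cases. Using $|U_0|=0$, $|U_1|=t$, $|U_2|=t^2$, and $|U_3|=t|U_2|+s^2|U_1|=t^3+s^2t$, equation (4.8) gives $|U_4^{(rc)}|=s^2t^2+2ts^2\cdot t + 0 = 3s^2t^2$ and $|U_5^{(rc)}|=s^2(t^3+s^2t)+2ts^2\cdot t^2+t^2s^2\cdot t = s^4t+4s^2t^3$, matching the claimed values.

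For the recurrence at $n\geq 6$, I would apply equation (4.8) to each of the three terms $|U_n^{(rc)}|$, $|U_{n-1}^{(rc)}|$, $|U_{n-2}^{(rc)}|$, and compute
\[
t|U_{n-1}^{(rc)}| + s^2|U_{n-2}^{(rc)}| - |U_n^{(rc)}|
\]
as an $s^2$-multiple of a linear combination of $|U_{n-2}|,|U_{n-3}|,\ldots,|U_{n-6}|$. After grouping and using $|U_{n-2}|=t|U_{n-3}|+s^2|U_{n-4}|$ to eliminate $|U_{n-2}|$, the expression reduces (after another application of $|U_{n-3}|=t|U_{n-4}|+s^2|U_{n-5}|$) to a multiple of $|U_{n-4}|-t|U_{n-5}|-s^2|U_{n-6}|$, which vanishes by the $(n-4)$-th instance of the $|U_n|$-recurrence. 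The constraint $n\geq 6$ enters precisely here, because we need $n-4\geq 2$ for that recurrence to apply.

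The computation is routine and the main thing to watch is bookkeeping of the coefficients across the three applications of (4.8); there is no genuine obstacle since the identity ultimately collapses onto the known recurrence for $(|U_n|)$. In fact, the same argument shows a slightly stronger structural fact that I would state in passing: because $|U_n^{(rc)}|$ is an $s^2$-weighted fixed linear combination of $|U_{n-2}|,|U_{n-3}|,|U_{n-4}|$ and each of these terms satisfies the two-term recurrence $x_n=tx_{n-1}+s^2x_{n-2}$, any such fixed linear combination automatically satisfies the same recurrence once the indices are large enough to avoid the exceptional value $|U_0|=0$, which disrupts the recurrence at $n=4,5$ and is exactly why $n\geq 6$ is the correct range.
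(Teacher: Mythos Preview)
Your proposal is correct and follows essentially the same approach as the paper: both arguments use equation (4.8) to express $|U_n^{(rc)}|$ as a fixed linear combination of $|U_{n-2}|$, $|U_{n-3}|$, $|U_{n-4}|$ and then invoke the recurrence (3.8) for $(|U_n|)$ term by term. The paper does this slightly more directly---it substitutes $|U_k|=t|U_{k-1}|+s^2|U_{k-2}|$ into each of the three terms of (4.8) and regroups to recognise $t|U_{n-1}^{(rc)}|+s^2|U_{n-2}^{(rc)}|$---whereas you compute the difference and reduce it to zero; but this is only a cosmetic distinction, and your concluding structural observation (a fixed linear combination of solutions of a linear recurrence is again a solution) is exactly the principle underlying both versions.
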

  
\begin{proof}
By substituting \eqref{e:15} and using \eqref{e:8}
  \begin{align}
   |U_n^{(rc)}|&=s^2|U_{n-2}|+2ts^2|U_{n-3}|+t^2s^2|U_{n-4}|= \nonumber\\
   &=t\big(s^2|U_{n-3}|+2ts^2|U_{n-4}|+t^2s^2|U_{n-5}|\big)+s^2\big(s^2|U_{n-4}|+2ts^2|U_{n-5}|+t^2s^2|U_{n-6}|\big)= \nonumber\\
   &=t\cdot|U_{n-1}^{(rc)}|+s^2\cdot|U_{n-2}^{(rc)}| \nonumber
  \end{align}
\end{proof}
  
\begin{remark}\label{remark4.4}
{\rm By Proposition \ref{proposition4.2} we know that $U_n^{(rc)}$ has the maximum absolute determinant value in the case $1\leq \dfrac{s}{t} \leq 1+\epsilon(n)$ for sufficiently small $\epsilon(n)>0$ depending on $n$. So, Propositions \ref{proposition4.2} and \ref{proposition4.3} finish the proof of Theorem \ref{theorem2.8}. }
\end{remark}
\bigskip 
 \section{Case III}   The third case we consider is when  $s\gg t>0$, i.e. $s$ is sufficiently larger than $t$, depending on $n$. 

We start with an important observation which illustrates the reason why we are interested in $s\gg t$ instead of $s \geq t$. 
\begin{observation}
The case when $s \geq t > 0 $ is excessively general to reach a conclusion on $M_n$.  More explicitly, for this case, there is no fixed matrix structure that gives the maximum absolute determinant for all values of $s>t$.
\end{observation}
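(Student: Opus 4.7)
The plan is to show that no single zero/$t$ pattern is simultaneously optimal throughout the regime $s>t$, by pinpointing the maximizer at two well-separated values of the ratio $s/t$ and checking that they differ.  In the lower regime $s/t \to 1^+$, Proposition~\ref{proposition4.2} together with Remark~\ref{remark4.4} identifies the unique maximizer as $U_n^{(rc)}$, so any candidate universal optimizer would have to coincide with $U_n^{(rc)}$.  The goal is then to exhibit a regime within $s>t$ where $U_n^{(rc)}$ is strictly outperformed.

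For this I would move into Case~III and compare $|U_n^{(rc)}(s,t)|$ to the maximum asserted by Theorem~\ref{theorem2.9}, namely $s^{n-1}t+\lfloor n/2\rfloor\lfloor (n-1)/2\rfloor s^{n-3}t^3$.  The cleanest instance is $n=4$: Proposition~\ref{proposition4.3} gives $|U_4^{(rc)}|=3s^2t^2$, while Theorem~\ref{theorem2.9} yields the maximum $s^3t+2st^3$.  Their difference factors as $st(s-t)(s-2t)$, strictly positive for $s>2t$; since the Case~III hypothesis forces $s/t > \tfrac{4}{5}\cdot 16 = \tfrac{64}{5}$, the inequality $M_4 > |U_4^{(rc)}|$ holds a fortiori throughout Case~III.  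Thus $U_4^{(rc)}$, the maximizer in Case~II at $n=4$, is no longer optimal once $s/t$ enters Case~III, and the observation is established at $n=4$.

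For general $n$ the same strategy continues to work, provided one controls the $s$-degree of $|U_n^{(rc)}|$.  Using the recurrence $|U_n^{(rc)}|=t|U_{n-1}^{(rc)}|+s^2|U_{n-2}^{(rc)}|$ of Proposition~\ref{proposition4.3} together with its base cases, a straightforward induction shows that the $s$-degree of $|U_n^{(rc)}|$ is $n-2$ for even $n$ and $n-1$ for odd $n$, whereas Theorem~\ref{theorem2.9}'s maximum always has $s$-degree $n-1$.  For even $n\ge 4$ this degree mismatch already yields $|U_n^{(rc)}| = O(s^{n-2})$, which is dominated by $s^{n-1}t$ once $s/t$ is large enough, closing the argument.

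The main technical obstacle is the odd-$n$ case, where the leading $s^{n-1}t$ terms of $|U_n^{(rc)}|$ and of Theorem~\ref{theorem2.9}'s maximum can coincide (as indeed occurs at $n=5$), so one must descend to the subleading coefficients and track them through the recurrence to separate the two expressions.  However, this finer analysis is not strictly necessary: the observation asserts only failure of universality, and that failure is already witnessed at $n=4$ by the explicit calculation above.  One could therefore either record just the $n=4$ counterexample, or push through the induction on the subleading coefficient to give the analogous statement for each $n\ge 4$.
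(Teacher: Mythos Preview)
Your argument is correct and takes a genuinely different route from the paper.  The paper's proof is purely computational: it fixes $n=6$, $s=100$, $t=1$, runs an exhaustive MATLAB search over all $2^{21}$ matrices in $\mathcal{G}_{100}^{6\times 6}(\{0,1\})$, finds that the only two maximizers are a pair of explicit matrices $A_1,A_2$, and observes that neither coincides with $U_6^{(rc)}(100,1)$.  Since Proposition~\ref{proposition4.2} forces any universal maximizer to be $U_n^{(rc)}$, this computation rules out a fixed optimal structure at $n=6$.  You instead pit $U_n^{(rc)}$ directly against the Case~III competitor $W_n$ (via Theorem~\ref{theorem2.9} or, equivalently, the lower bound~\eqref{eq:6}), and for $n=4$ the comparison $\abs(|W_4|)-|U_4^{(rc)}|=st(s-t)(s-2t)$ is an elementary hand calculation.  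Your approach is cleaner---it requires no machine verification and extends in principle to every $n\ge 4$---whereas the paper's argument is self-contained at its location in the text and does not forward-reference the Case~III constructions you invoke.  Either value of $n$ suffices for the Observation as stated, so your closing remark that the $n=4$ computation alone already settles the matter is well taken.
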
\label{observation5.1}
\begin{proof}
Consider the case when $s=100$ and $t=1$ and $n=6$. By using MATLAB and testing all possible $2^{21}$ cases, it is not difficult to check that the maximum absolute determinant for this case is $10006000000$ and this value is attained by only two matrices:\\
 $$\footnotesize A_1=\begin{bmatrix}1 & 1& 0 & 0 & 0& 1 \\ 100 & 0 & 1 & 1 & 1& 0 \\ 0 & 100 & 1 & 1 & 1& 0 \\ 0 & 0 & 100 & 0 & 0& 1 \\ 0 & 0 & 0 & 100 & 0& 1 \\ 0 & 0 & 0 & 0 & 100& 1\end{bmatrix}, \ A_2=\begin{bmatrix}1 & 1& 1 & 0 & 0& 1 \\ 100 & 0 & 0 & 1 & 1& 0 \\ 0 & 100 & 0 & 1 & 1& 0 \\ 0 & 0 & 100 & 1 & 1& 0 \\ 0 & 0 & 0 & 100 & 0& 1 \\ 0 & 0 & 0 & 0 & 100& 1\end{bmatrix}$$

  If there is a unique matrix that maximizes the absolute determinant value for all $s>t$ in $\mathcal{G}_s^{6 \times 6} (\{0,t\})$, then it must be the matrix $U_6^{(rc)}$ by Proposition \ref{proposition4.2}. Hence $U_6^{(rc)}(100,1)$ must be the same with either $A_1$ or $A_2$, but clearly it is not. So, the maximizing matrix still depends on the ratio $\dfrac{s}{t}$.
\end{proof}

 The next theorem describes the case $s\gg t>0$ in Problem \ref{problem2.4}. 
 
 \begin{theorem}\label{theorem5.2}
 The maximum absolute determinant of an $n \times n$ upper-Hessenberg matrix with entries from the interval $[0,t]$ and subdiagonal entries fixed at $s$ such that $s \gg t>0 $ (i.e. $s$ can be taken sufficiently larger than $t$ for any case) is given by $s^{n-1}t+ \Big\lfloor \dfrac{n}{2}\Big\rfloor \Big\lfloor\dfrac{n-1}{2} \Big\rfloor s^{n-3}t^3$. (We shall go on afterwards to establish a precise lower bound for $\dfrac{s}{t}$ such that this statement holds.)
 \end{theorem}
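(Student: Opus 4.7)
\emph{Proof plan.} The approach is to expand $\det A$ by the Leibniz formula restricted to the Hessenberg-admissible permutations (those $\sigma$ with $\sigma(i)\geq i-1$ for all $i$), group the resulting $2^{n-1}$ terms by the power of $s$ they carry (i.e.\ by the number of subdiagonal entries used), and then optimise the entries of $A$ in the regime $s\gg t$.

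A direct enumeration handles the top three levels in $s$. Exactly one admissible $\sigma$ uses all $n-1$ subdiagonals: the $n$-cycle $\sigma(1)=n,\ \sigma(i+1)=i$. For each $i_0\in\{1,\dots,n-1\}$ there is a unique admissible $\sigma$ skipping only the $i_0$-th subdiagonal (forced to satisfy $\sigma(1)=i_0$, $\sigma(i_0+1)=n$), and for each pair $i_1<i_2$ in $\{1,\dots,n-1\}$ a unique admissible $\sigma$ skipping exactly those two (forced to satisfy $\sigma(1)=i_1$, $\sigma(i_1+1)=i_2$, $\sigma(i_2+1)=n$). A short cycle count shows that these three families carry signs $(-1)^{n-1}$, $(-1)^{n-2}$, $(-1)^{n-3}$ respectively, so the $s^{n-1}$ and $s^{n-3}$ contributions share a sign while the $s^{n-2}$ contributions have the opposite one, giving
\begin{equation*}
(-1)^{n-1}\det A = a_{1,n}\,s^{n-1} - \sum_{i_0=1}^{n-1} a_{1,i_0}\,a_{i_0+1,n}\,s^{n-2} + \sum_{i_1<i_2} a_{1,i_1}\,a_{i_1+1,i_2}\,a_{i_2+1,n}\,s^{n-3} + O(s^{n-4}).
\end{equation*}

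The optimisation then proceeds level by level. The leading $t\,s^{n-1}$ forces $a_{1,n}=t$. Since the $s^{n-2}$ sum has the opposite sign and non-negative summands, any maximiser must satisfy $a_{1,i_0}\,a_{i_0+1,n}=0$ for each $i_0$; writing $A'=\{i\in\{1,\dots,n-1\}:a_{1,i}=t\}$ and $\tilde B=\{i\in\{1,\dots,n-1\}:a_{i+1,n}=t\}$, this is exactly the disjointness $A'\cap\tilde B=\emptyset$. Under this constraint, setting the relevant middle entries $a_{i_1+1,i_2}$ to $t$ makes the $s^{n-3}t^3$ coefficient equal to $\#\{(i_1,i_2):i_1\in A',\ i_2\in\tilde B,\ i_1<i_2\}$, and a short combinatorial argument (place $A'$ to the left of $\tilde B$, then optimise over $|A'|+|\tilde B|\leq n-1$) shows this count is maximised at $\lfloor n/2\rfloor\lfloor(n-1)/2\rfloor$, attained for instance by $A'=\{1,\dots,\lfloor(n-1)/2\rfloor\}$ and $\tilde B=\{\lfloor(n-1)/2\rfloor+1,\dots,n-1\}$.

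The last step is to exhibit the extremal matrix so produced (row $1$ and column $n$ filled with $t$ on the positions given by $A'$, $\tilde B$ together with $a_{1,n}=t$; a rectangular block of middle entries equal to $t$; all other entries $0$) and check by direct inspection of the admissible permutations that \emph{all} contributions at levels $s^{n-2},s^{n-4},s^{n-5},\dots$ vanish identically for this specific matrix, so the claimed value $s^{n-1}t+\lfloor n/2\rfloor\lfloor(n-1)/2\rfloor s^{n-3}t^3$ is attained exactly. The main obstacle is the matching upper bound at a genuinely finite threshold on $s/t$: one must rule out clever trades where a maximiser pays a small $s^{n-2}t^2$ penalty to boost the $s^{n-3}$ or lower-order terms. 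Heuristically, one unit of $s^{n-2}t^2$ penalty costs a factor $s/t$ more than any $s^{n-3}t^3$ gain, and the entire $O(s^{n-4})$ tail is crudely bounded by a polynomial of modest degree in $n$ times $t^4 s^{n-4}$; tightening these crude estimates to the explicit threshold $s/t>\tfrac{4}{5}n^2$ used in Theorem \ref{theorem2.9} is the delicate quantitative part of the proof.
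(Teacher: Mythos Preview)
Your plan follows the same architecture as the paper's proof: expand the determinant over Hessenberg-admissible permutations, sort contributions by power of $s$, force the top coefficients one by one in the regime $s\gg t$, and exhibit the extremal matrix $W_n$. Your combinatorial argument for the $s^{n-3}$ level (maximising $\#\{(i_1,i_2):i_1\in A',\,i_2\in\tilde B,\,i_1<i_2\}$ over disjoint $A',\tilde B\subseteq\{1,\dots,n-1\}$) is a cleaner repackaging of the paper's chessboard/colouring count in Lemma~\ref{lemma5.6}, and the extremal configuration you describe is precisely $W_n$.

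There is, however, a genuine gap where you stop the level-by-level forcing. After pinning down the $s^{n-1}$, $s^{n-2}$, $s^{n-3}$ coefficients you fall back on a crude bound ``the entire $O(s^{n-4})$ tail is bounded by a polynomial in $n$ times $t^4 s^{n-4}$''. That only yields
\[
M_n \le s^{n-1}t+\Big\lfloor\tfrac{n}{2}\Big\rfloor\Big\lfloor\tfrac{n-1}{2}\Big\rfloor s^{n-3}t^3 + C_n\, s^{n-4}t^4,
\]
which does not match the exact lower bound $|\det W_n|$. The theorem asserts \emph{equality}, not an asymptotic. The paper closes this gap by pushing the forcing one step further: once the first three coefficients are fixed, the candidate matrix has the form \eqref{f:2} with two triangular blocks of undetermined entries; inequality \eqref{eq:24} then forces the $s^{n-4}$ coefficient to be $0$ as well, and Propositions~\ref{proposition5.7}--\ref{proposition5.8} show that any nonzero entry in those triangular blocks would create a nonzero $s^{n-4}$ term. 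Hence the maximiser is \emph{exactly} $W_n$, and $M_n=|\det W_n|$ on the nose. Your proposal needs this fourth-level step (or an equivalent argument that the alternating tail is nonpositive whenever the first three levels are optimal); framing it purely as a threshold-tightening issue understates what is missing.
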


Define $M_n \coloneqq \displaystyle\max_{A \in \mathcal{G}_s^{n \times n} ([0,t]) } \abs(|A|)$ again. The proof is in two parts; the first is giving an example to show that $M_n \geq s^{n-1}t+ \Big\lfloor \dfrac{n}{2}\Big\rfloor \Big\lfloor\dfrac{n-1}{2} \Big\rfloor s^{n-3}t^3$, and the second showing that this example has the maximum absolute determinant. We start with the first part. Define a matrix $V_n$ for $n \geq 2$ as follows:  
\begin{equation}\label{f:1}
V_n \coloneqq \footnotesize\begin{bmatrix}t&t&t&t&\cdots&t&0\\s&0&0&0&\cdots&0&t\\0&s&0&0&\cdots&0&t\\0&0&s&0&\cdots&0&t\\\vdots&\vdots&\ddots&\ddots&\ddots&\vdots&\vdots\\0&0&0&0&\cdots&0&t\\0&0&0&0&\cdots&s&t\end{bmatrix}_{n \times n}
\end{equation}

\begin{proposition}\label{proposition5.3}
$|V_n|=(-1)^n(n-1)t^2s^{n-2}$ for all $n \geq 2$.
\end{proposition}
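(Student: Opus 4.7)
The plan is to compute $|V_n|$ by a two-step Laplace expansion. First, observe that the last column of $V_n$ is $(0,t,t,\ldots,t)^T$, so cofactor expansion along it gives
\[
|V_n| \;=\; \sum_{i=2}^{n} (-1)^{i+n}\, t \cdot M_{i,n},
\]
where $M_{i,n}$ denotes the $(i,n)$-minor, i.e.\ the determinant of the $(n-1)\times(n-1)$ submatrix obtained by deleting row $i$ and column $n$ from $V_n$.

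Next, I compute each $M_{i,n}$ for $i\in\{2,\ldots,n\}$. After deleting row $i$ and column $n$, look at the column that originally was column $i-1$ of $V_n$: in the full matrix this column had only the subdiagonal entry $s$ at row $i$ and the entry $t$ at row $1$, so in the minor (since row $i$ is gone) it has exactly one nonzero entry, namely $t$ at the top. Expanding along this column gives
\[
M_{i,n} \;=\; (-1)^{1+(i-1)}\, t \cdot \det(W_i) \;=\; (-1)^{i}\, t \cdot \det(W_i),
\]
where $W_i$ is the $(n-2)\times(n-2)$ matrix obtained by further deleting row $1$ and column $i-1$.

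I then verify that $W_i$ is diagonal with every diagonal entry equal to $s$. Its rows are indexed by original rows $\{2,\ldots,i-1,\,i+1,\ldots,n\}$ and its columns by original columns $\{1,\ldots,i-2,\,i,\ldots,n-1\}$. Each remaining row $j$ carries its unique nonzero entry $s$ at original column $j-1$; a short check shows that, after shifting past the gap at position $i-1$ in both indexings, each such $s$ lands on the main diagonal of $W_i$. Hence $\det(W_i)=s^{n-2}$, and therefore $M_{i,n}=(-1)^{i} t\, s^{n-2}$.

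Substituting back,
\[
|V_n| \;=\; \sum_{i=2}^{n} (-1)^{i+n}\, t \cdot (-1)^{i} t\, s^{n-2} \;=\; \sum_{i=2}^{n} (-1)^{n} t^{2} s^{n-2} \;=\; (n-1)(-1)^{n} t^{2} s^{n-2},
\]
as claimed. The only real obstacle is bookkeeping: keeping the signs straight through two successive cofactor expansions and verifying the re-indexing that makes $W_i$ diagonal. No deeper ingredient is needed; the whole proof is a double application of Laplace expansion exploiting the sparsity of $V_n$.
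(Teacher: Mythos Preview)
Your proof is correct. It differs from the paper's argument, which proceeds by induction on $n$: the paper expands $|V_k|$ along the \emph{last row} to obtain the recurrence
\[
|V_k| \;=\; t\cdot\bigl((-1)^{k-2}t\,s^{k-2}\bigr) \;-\; s\cdot |V_{k-1}|,
\]
and then invokes the induction hypothesis $|V_{k-1}|=(-1)^{k-1}(k-2)t^2s^{k-3}$. Your approach avoids induction entirely: by expanding along the last column and then, for each minor $M_{i,n}$, along the sparse column $i-1$, you reduce directly to the determinant of a diagonal matrix of $s$'s. This is a cleaner explanation of where the factor $n-1$ comes from (it is simply the number of $t$'s in the last column), and the bookkeeping you describe for the re-indexing of $W_i$ checks out in both edge cases $i=2$ and $i=n$. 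The trade-off is that the paper's recurrence is reused almost verbatim in the next proposition (Proposition~\ref{proposition5.4}, computing $|W_n|$), so its inductive setup pays off a second time; your direct computation is self-contained but does not feed forward in the same way.
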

 \begin{proof}
 For $n=2$, it is clear. We use induction on $n$, specifically, we assume that the statement is valid for $n=2,3,\dots,k-1$ and show that it then follows for $n=k$. By using Laplace expansion for the last row, and the induction assumption:
\begin{align*}  \footnotesize |V_k |= \begin{vmatrix} t&t&t&t&\cdots&t&0\\  s&0&0&0&\cdots&0&t\\0&s&0&0&\cdots&0&t\\0&0&s&0&\cdots&0&t\\\vdots&\vdots&\ddots&\ddots&\ddots&\vdots&\vdots\\0&0&0&0&\cdots&0&t\\0&0&0&0&\cdots&s&t  \end{vmatrix}_{k \times k} &=  t \cdot \footnotesize \begin{vmatrix} \begin{array} {cccccc|c}t&t&t&t&\cdots&t&t\\ \hline s&0&0&0&\cdots&0&0\\  0&s&0&0&\cdots&0&0\\0&0&s&0&\cdots&0&0\\\vdots&\vdots&\ddots&\ddots&\ddots&\vdots&\vdots\\0&0&0&0&\cdots&0&0\\0&0&0&0&\cdots&s&0 \end{array} \end{vmatrix}_{(k-1) \times (k-1)}-s \cdot |V_{k-1}|=  
  \\ &= t\cdot (ts^{k-2}(-1)^{k-2}) - s \cdot ((-1)^{k-1}(k-2)t^2s^{k-3})= \\ & = t^2s^{k-2}(-1)^{k}+(k-2)t^2s^{k-2}(-1)^{k}= \\ & = (-1)^k(k-1)t^2s^{k-2}
  \end{align*}
  
 \end{proof}
 
 Now, we define a new sequence of matrices $(W_n)_{n \geq 2}$ depending on the parity of $n$.
  
\begin{equation}\label{e:10}
 \footnotesize W_{2k+1}\coloneqq \begin{bmatrix} \overmat{k}{t&t&\cdots&t}& \overmat{k}{0&0&\cdots&0}&t\\s&0&\cdots&0&t&t&\cdots&t&0\\0&s&\cdots&0&t&t&\cdots&t&0\\ \vdots&&\ddots&&\vdots&\vdots&\ddots&\vdots&\vdots \\ 0&0&\cdots&s&t&t&\cdots&t&0\\ 0&0&\cdots&0&s&0&\cdots&0&t\\ 0&0&\cdots&0&0&s&\ddots&0&t\\\vdots&&\ddots&&\ddots&&\ddots&\vdots&\vdots\\0&0&\cdots&&\cdots&&\cdots&s&t\end{bmatrix}  \ \, \ \
  \footnotesize W_{2k+2} \coloneqq \begin{bmatrix} \overmat{k}{t&t&\cdots&t}& \overmat{k+1}{0&0&0&\cdots&0&t} \\s&0&\cdots&0&t&t&t&\cdots&t&0\\0&s&\cdots&0&t&t&t&\cdots&t&0\\ \vdots&&\ddots&&\vdots&\vdots&\ddots&\ddots&\vdots&\vdots \\ 0&0&\cdots&s&t&t&t&\cdots&t&0\\ 0&0&\cdots&0&s&0&0&\cdots&0&t\\ 0&0&\cdots&0&0&s&0&\ddots&0&t\\ \vdots&&\ddots&&\ddots&&\ddots&&\vdots&\vdots\\ \vdots&&\ddots&&\ddots&&\ddots&&\vdots&\vdots\\ 0&0&\cdots&&\cdots&&\cdots&&s&t \end{bmatrix}
\end{equation}  
  
   Notice that $W_{2k+1}$ contains a $k \times k$ block full of $t$'s whereas $W_{2k+2}$ contains a $k \times (k+1)$ block; and note that if we define $W_{2k+2}$ such that it would contain a $(k+1) \times k$ block of $t$'s instead of a $k \times (k+1)$ block it would not change the determinant value. For later convenience we denote the alternative version $W_{2k+2}'$

\begin{proposition}\label{proposition5.4}
$|W_n|=\Big\lfloor\dfrac{n-1}{2} \Big\rfloor(-s)^{n-3}t^3-s|W_{n-1}|$ for all $n \geq 3$.
\end{proposition}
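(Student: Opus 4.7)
The plan is to expand $|W_n|$ along the first column---which has only two nonzero entries, $t$ at $(1,1)$ and $s$ at $(2,1)$---to obtain $|W_n| = tM - sN$, where $M$ and $N$ are the $(1,1)$- and $(2,1)$-minors. The $-s|W_{n-1}|$ piece of the recurrence should come from $-sN$: for odd $n=2k+1$, one verifies by a direct entry-by-entry comparison that deleting row $2$ and column $1$ from $W_n$ produces exactly the matrix $W_{n-1} = W_{2k}$ (in its $W_{2(k-1)+2}$ parametrisation), so $N = |W_{n-1}|$ cleanly. For even $n = 2k+2$ the $t$-block inside $N$ is shifted by one column relative to that of $W_{n-1}$, and the correct identity is $N = |W_{n-1}| - t^{3} s^{2k-2}$; the correction term is obtained by applying to $N$ itself the reduction procedure described below, and it ultimately merges with $tM$ to form the floor-coefficient on the right-hand side of the recurrence.

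The main computation is to evaluate $M$. Setting $k = \lfloor (n-1)/2 \rfloor$, I would perform the row operations $R_i \to R_i - R_1$ on $M$ for $i = 2, \ldots, k$. Because rows $1$ through $k$ of $M$ share an identical block of $t$'s in their middle columns (inherited from the $t$-block of $W_n$), these operations cancel the block cleanly and leave each of rows $2$ through $k$ with only a single nonzero entry---namely the subdiagonal $s$. Iterated Laplace expansion along these sparse rows contributes a factor of $(-s)^{k-1}$ and leaves behind a $(k+1)\times(k+1)$ submatrix when $n$ is odd, or a $(k+2)\times(k+2)$ submatrix when $n$ is even; direct inspection identifies these as $V_{k+1}$ and $V_{k+2}$ respectively. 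Substituting the closed-form $|V_m| = (-1)^m (m-1) t^2 s^{m-2}$ from Proposition~\ref{proposition5.3} then gives $tM = k\,t^3 s^{2k-2}$ in the odd case and $tM = -(k+1)\, t^3 s^{2k-1}$ in the even case, and combining with $-sN$ produces exactly $\lfloor (n-1)/2 \rfloor (-s)^{n-3} t^3 - s|W_{n-1}|$ in both parities.

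The main obstacle will be the even-parity identity $N = |W_{n-1}| - t^{3} s^{2k-2}$, the only place where the block structure of $N$ fails to match that of a genuine $W_{n-1}$. I expect this to follow from an analogous reduction applied to $N$ itself: using the row operations $R_i \to R_i - R_2$ for $i = 3, \ldots, k$ followed by a column adjustment that absorbs the resulting off-diagonal $s$'s, $N$ reduces to a matrix whose subsequent Laplace expansion along the first column splits into two pieces---one involving a $V_{k+2}$ determinant and one involving a lower-triangular residual block with $s$'s on the diagonal. The $V_{k+2}$ piece matches the leading term in the reduction of a genuine $W_{n-1}$, while the extra residual piece differs from it by exactly $t^3 s^{2k-2}$, producing the required correction and completing the proof.
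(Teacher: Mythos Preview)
Your first-column expansion is the mirror image of the paper's proof, which instead expands along the \emph{last row}.  For odd $n=2k+1$ the two arguments are completely parallel: your $(2,1)$-minor is exactly $W_{n-1}$, and your row operations reduce the $(1,1)$-minor to $(-s)^{k-1}|V_{k+1}|$, just as the paper's iterated last-row expansions reduce the $t$-cofactor to $(-s)^{k}|V_{k+1}|$.

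The real difference is in the even case $n=2k+2$.  When the paper expands along the last row, the $s$-cofactor is (up to the $W'$-variant, which has the same determinant) exactly $|W_{2k+1}|$, so one obtains $|W_{2k+2}|=tM'-s|W_{2k+1}|$ with \emph{no correction term}; the remaining $t$-cofactor $M'$ has its bottom $k$ rows each containing a single subdiagonal $s$, and $k$ further last-row expansions give $M'=(-s)^{k}|V_{k+1}|$ directly.  Your first-column route does not enjoy this symmetry: the $(2,1)$-minor $N$ is genuinely not a $W$-matrix, and you are forced into the side identity $N=|W_{2k+1}|-t^{3}s^{2k-2}$.  That identity is correct, but your sketch for it---row operations $R_i\to R_i-R_2$ followed by an unspecified ``column adjustment''---is the only part of the proposal that is not yet a proof; after $R_i-R_2$ each of rows $3,\dots,k$ carries \emph{two} nonzero entries ($-s$ in column~$1$ and $s$ in column $i-1$), and the subsequent reduction needs to be spelled out.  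The paper's last-row expansion simply avoids this complication.
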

\begin{proof}
We need to show $|W_{2k+1}|=k t^3(-s)^{2k-2}-s|W_{2k}|$ and $|W_{2k+2}|=k t^3(-s)^{2k-1}-s|W_{2k+1}|$. We are going to show only one, because the proofs in both cases are essentially identical.
  
   Using Laplace expansion for the last row of $W_{2k+2}$ (and by iteratively -- exactly $k$ times -- doing it to the last row of each the resulting matrices), and by using Proposition \ref{proposition5.3} we get:
\begin{align*}
|W_{2k+2}|&=t \cdot s^k\cdot (-1)^k \cdot |V_{k+1}|-s|W_{2k+1}|= \\ & = t \cdot s^k\cdot (-1)^k \cdot (-1)^{k+1}kt^2s^{k-1}-s|W_{2k+1}|= \\ & = kt^3(-s)^{2k-1}-s|W_{2k+1}|
\end{align*}

\end{proof}
 
\begin{proposition}\label{proposition5.5}
$|W_n|=(-1)^{n-1} \Big( s^{n-1}t+ \Big\lfloor \dfrac{n}{2}\Big\rfloor \Big\lfloor\dfrac{n-1}{2} \Big\rfloor s^{n-3}t^3 \Big)$ for all $n \geq 3$.
\end{proposition}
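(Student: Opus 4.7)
The plan is to prove Proposition 5.5 by strong induction on $n \geq 3$, with Proposition 5.4 as the driving recurrence. The base case handles $n = 3$ by direct computation: expanding the $3\times 3$ matrix $W_3$ along its first column gives $|W_3| = t^3 + s^2 t$, which matches $(-1)^{3-1}\bigl(s^2 t + \lfloor 3/2\rfloor \lfloor 2/2\rfloor s^0 t^3\bigr) = s^2 t + t^3$. (If a separate base at $n=4$ is needed to start Proposition 5.4 cleanly without reference to $|W_2|$, a direct expansion there gives $|W_4| = -(s^3 t + 2 s t^3)$, matching the formula.)

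For the inductive step, assume the formula holds at $n-1$, namely
\begin{equation*}
|W_{n-1}| = (-1)^{n-2}\Bigl(s^{n-2}t + \lfloor (n-1)/2\rfloor \lfloor (n-2)/2\rfloor s^{n-4} t^3\Bigr).
\end{equation*}
Substitute this into Proposition 5.4:
\begin{equation*}
|W_n| = \lfloor (n-1)/2\rfloor (-s)^{n-3} t^3 - s\,|W_{n-1}|.
\end{equation*}
Since $(-s)^{n-3} = (-1)^{n-3}s^{n-3} = (-1)^{n-1}s^{n-3}$ and $-s\cdot(-1)^{n-2} = (-1)^{n-1}s$, both terms carry the same sign $(-1)^{n-1}$, and one can factor it out. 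The contribution $-s\cdot(-1)^{n-2}\cdot s^{n-2}t = (-1)^{n-1}s^{n-1}t$ gives the leading term; the remaining $s^{n-3}t^3$ terms combine with total coefficient
\begin{equation*}
\lfloor (n-1)/2\rfloor + \lfloor (n-1)/2\rfloor \lfloor (n-2)/2\rfloor = \lfloor (n-1)/2\rfloor \bigl(1 + \lfloor (n-2)/2\rfloor\bigr).
\end{equation*}

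The only content beyond algebra is the identity $1 + \lfloor (n-2)/2\rfloor = \lfloor n/2 \rfloor$, which I would verify by splitting into the parities $n=2k$ and $n=2k+1$: in each case both sides equal $k$. Feeding this identity in yields exactly $\lfloor n/2\rfloor \lfloor (n-1)/2\rfloor s^{n-3}t^3$, completing the induction.

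No serious obstacle is expected; the whole proof is driven by Proposition 5.4 and the elementary floor identity. The only care needed is sign bookkeeping (tracking the $(-1)^{n-3}$ versus $(-1)^{n-2}$ factors to confirm that both terms align with the overall $(-1)^{n-1}$) and checking that the base case at $n=3$ (and, for safety, $n=4$) actually matches the formula so that Proposition 5.4 can be iterated without encountering an ill-defined $W_2$.
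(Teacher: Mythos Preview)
Your proposal is correct and follows essentially the same approach as the paper, which simply states that the result ``follows straightforwardly by induction'' from Proposition~5.4. Your write-up supplies the details the paper omits---the base case computation, the sign bookkeeping, and the floor identity $1+\lfloor(n-2)/2\rfloor=\lfloor n/2\rfloor$---and your caution about verifying $n=3$ (and $n=4$) directly so as not to rely on an explicit $|W_2|$ is well placed.
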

 \begin{proof}
 Using the previous proposition, this follows straightforwardly by induction.
 \end{proof}
   
 As a corollary of the last proposition, we have \begin{equation}\label{eq:6}
 M_n \geq \abs(|W_n|)=s^{n-1}t+ \Big\lfloor \dfrac{n}{2}\Big\rfloor \Big\lfloor\dfrac{n-1}{2} \Big\rfloor s^{n-3}t^3
 \end{equation} 
 for all $n \geq 2$.
 
 So the first part of the proof is done. Next, we prove the reverse implication. 
 
  Since we are looking for the maximum absolute determinant it suffices to check $\mathcal{G}_s^{n \times n} (\{0,t\})$ instead of $\mathcal{G}_s^{n \times n} ([0,t])$. 
  
   Notice that the determinant value of any $A \in\mathcal{G}_s^{n \times n} (\{0,t\})$ is a polynomial in $s$ and $t$. More explicitly, the determinant of $A$ is an element of the following set:
$$(-1)^{n-1}s^{n-1}t \cdot \{0,1\}+(-1)^{n-2}s^{n-2}t^2 \cdot \Big\{0,1, \dots , {{n-1}\choose{1}} \Big\}+(-1)^{n-3}s^{n-3}t^3\cdot \Big\{0,\dots , {{n-1}\choose{2}} \Big\}+\cdots+t^n\cdot \{0,1\}$$

 Recall that we assumed $s\gg t$ and already know \eqref{eq:6}. Then
 \begin{align}\label{eq:7}
M_n \geq s^{n-1}t+ & \Big\lfloor \dfrac{n}{2}\Big\rfloor \Big\lfloor\dfrac{n-1}{2} \Big\rfloor s^{n-3}t^3 >s^{n-2}t^2 \cdot {{n-1}\choose{1}}+s^{n-4}t^4 \cdot {{n-1}\choose{3}}+s^{n-6}t^6 \cdot {{n-1}\choose{5}}+\cdots= \nonumber \\ &= \abs \Big( (-1)^{n-2}s^{n-2}t^2 \cdot {{n-1}\choose{1}}+(-1)^{n-4}s^{n-4}t^4 \cdot {{n-1}\choose{3}}+\cdots \Big)
 \end{align}

This means that the sign of the determinant which gives the maximum absolute determinant cannot be $(-1)^{n-2}$. So, its sign is $(-1)^{n-1}$. 

 By \eqref{eq:6} and using $s\gg t$ again,
 \begin{equation}\label{eq:8}
  M_n \geq s^{n-1}t +\Big\lfloor \dfrac{n}{2}\Big\rfloor \Big\lfloor\dfrac{n-1}{2} \Big\rfloor s^{n-3}t^3 > s^{n-3}t^3\cdot {{n-1}\choose{2}}+s^{n-5}t^5\cdot {{n-1}\choose{4}}+ \cdots 
\end{equation}  

 So, the maximum absolute determinant must contain the term $s^{n-1}t$, i.e., the maximum absolute determinant is of the form $s^{n-1}t+\cdots$.

Moreover, we also have the following 
\begin{align} \label{eq:9}
M_n &\geq s^{n-1}t+ \Big\lfloor \dfrac{n}{2}\Big\rfloor \Big\lfloor\dfrac{n-1}{2} \Big\rfloor s^{n-3}t^3 > \nonumber \\  & > s^{n-1}t + (-1) s^{n-2}t^2+s^{n-3}t^3\cdot {{n-1}\choose{2}}+s^{n-5}t^5\cdot {{n-1}\choose{4}}+ \cdots
\end{align}

 This means that the maximum absolute determinant cannot contain the term $s^{n-2}t^2$, i.e., the maximum absolute determinant is of the form $1 \cdot s^{n-1}t+0 \cdot s^{n-2}t^2 +\cdots$. 
 
 Again by $s\gg t$ and \eqref{eq:6}, we can state:  \begin{align} \label{eq:10}
 M_n &\geq s^{n-1}t+ \Big\lfloor \dfrac{n}{2}\Big\rfloor \Big\lfloor\dfrac{n-1}{2} \Big\rfloor s^{n-3}t^3> \nonumber \\ & >s^{n-1}t+ \Big( \Big\lfloor \dfrac{n}{2}\Big\rfloor \Big\lfloor\dfrac{n-1}{2} \Big\rfloor - 1\Big) s^{n-3}t^3+s^{n-5}t^5\cdot {{n-1}\choose{4}}+s^{n-7}t^7\cdot {{n-1}\choose{6}}+ \cdots
 \end{align} 
 
  Therefore, the coefficient of  $s^{n-3}t^3$ in the maximum absolute determinant must be at least $\Big\lfloor \dfrac{n}{2}\Big\rfloor \Big\lfloor\dfrac{n-1}{2} \Big\rfloor$.
  
  \begin{lemma}\label{lemma5.6}
  For any matrix $A \in \mathcal{G}_s^{n \times n} (\{0,t\})$, if the coefficient of $s^{n-2}t^2$ in $|A|$ is zero, then the absolute value of the coefficient of $s^{n-3}t^3$ is at most $\Big\lfloor \dfrac{n}{2}\Big\rfloor \Big\lfloor\dfrac{n-1}{2} \Big\rfloor$.
  \end{lemma}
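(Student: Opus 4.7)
The plan is to expand $|A|$ by the Leibniz formula over Hessenberg-admissible permutations, enumerate exactly which permutations contribute to the coefficients of $s^{n-2}t^2$ and of $s^{n-3}t^3$, and then translate the vanishing hypothesis into a disjointness property of two subsets of $\{1,\dots,n-1\}$ which bounds the second coefficient immediately.

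Concretely, I would write $|A|=\sum_{\sigma}\operatorname{sgn}(\sigma)\prod_{i=1}^{n}a_{i,\sigma(i)}$, the sum ranging over permutations $\sigma$ of $\{1,\dots,n\}$ with $\sigma(i)\ge i-1$ for all $i$ (the upper Hessenberg support). Each such $\sigma$ contributes a signed monomial $\pm s^{k}t^{n-k}$, where $k=|\{i:\sigma(i)=i-1\}|$ counts the subdiagonal entries used. A short case analysis using the constraint $\sigma(i)\ge i-1$ shows: the $\sigma$'s with $k=n-2$ are in bijection with $j\in\{1,\dots,n-1\}$ via $\sigma(1)=j$, $\sigma(j+1)=n$, and $\sigma(i)=i-1$ otherwise, each carrying sign $(-1)^{n-2}$ and producing the weight $a_{1,j}a_{j+1,n}$; the $\sigma$'s with $k=n-3$ are in bijection with ordered pairs $1\le j_{1}<j_{2}\le n-1$ via $\sigma(1)=j_{1}$, $\sigma(j_{1}+1)=j_{2}$, $\sigma(j_{2}+1)=n$, and $\sigma(i)=i-1$ otherwise, each carrying sign $(-1)^{n-3}$ and weight $a_{1,j_{1}}a_{j_{1}+1,j_{2}}a_{j_{2}+1,n}$. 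Because all contributions at each of these two levels share a common sign, there is no cancellation: the coefficient of $s^{n-2}t^{2}$ in $|A|$ equals $(-1)^{n-2}N_{1}$ with $N_{1}=\#\{j\in\{1,\dots,n-1\}:a_{1,j}=a_{j+1,n}=t\}$, and the coefficient of $s^{n-3}t^{3}$ equals $(-1)^{n-3}N$ with $N$ counting those pairs $(j_{1},j_{2})$ whose three weights all equal $t$.

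Setting $S_{1}=\{j\in\{1,\dots,n-1\}:a_{1,j}=t\}$ and $S_{2}=\{j\in\{1,\dots,n-1\}:a_{j+1,n}=t\}$, the vanishing hypothesis $N_{1}=0$ is precisely the statement that $S_{1}\cap S_{2}=\emptyset$. Dropping the middle weight condition $a_{j_{1}+1,j_{2}}=t$ gives the crude estimate
\begin{equation*}
N\le\#\{(j_{1},j_{2}):j_{1}\in S_{1},\ j_{2}\in S_{2},\ j_{1}<j_{2}\}\le|S_{1}|\cdot|S_{2}|,
\end{equation*}
and since $S_{1},S_{2}$ are disjoint subsets of an $(n-1)$-element set, $|S_{1}|+|S_{2}|\le n-1$. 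A standard AM-GM style optimization then shows that $|S_{1}|\cdot|S_{2}|$ is maximized when the two sizes are as balanced as possible, producing exactly $\lfloor n/2\rfloor\lfloor(n-1)/2\rfloor$. The main work in writing this out rigorously is the permutation-level bookkeeping in the second paragraph — verifying uniqueness of $\sigma$ from the choice of non-drop positions and tracking the induced cycle structure to pin down signs — but once those bijections are established the combinatorial bound is immediate.
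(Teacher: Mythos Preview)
Your argument is correct and matches the paper's approach in spirit: both enumerate the Hessenberg-admissible permutations at levels $k=n-2$ and $k=n-3$, arrive at the same explicit expressions for the two coefficients, and use non-negativity of the entries to convert the vanishing hypothesis into a zero pattern on the first row and last column. Where you diverge is in the final combinatorial step. The paper sets up a triangular ``chessboard'' picture, handles the corner products $a_{11}a_{2n}$ and $a_{1(n-1)}a_{nn}$ separately, lower-bounds the number of forced zero (black) cells by optimizing over how many rows versus columns are coloured, and then subtracts from $\binom{n-1}{2}$. Your route---recording $S_1\cap S_2=\emptyset$, hence $|S_1|+|S_2|\le n-1$, hence $N\le |S_1|\cdot|S_2|\le \lfloor n/2\rfloor\lfloor(n-1)/2\rfloor$---is more direct and avoids the case-splitting and geometric bookkeeping entirely, at no loss of sharpness. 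The paper's version does a little more work because it is simultaneously identifying the \emph{structure} of the extremal matrix (which zeros must sit where), information it uses immediately afterwards; your argument yields the bound cleanly but would need a short extra remark to recover that equality forces $S_1$ and $S_2$ to be the balanced initial and terminal blocks.
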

  
\begin{proof}
Consider the determinant as a polynomial with variable $s$. It is easy to see that for $$\footnotesize A=\begin{bmatrix} a_{11} & a_{12} & a_{13}& \cdots  & a_{1(n-1)} & a_{1n} \\ s & a_{22} & a_{23} & \cdots & a_{2(n-1)} & a_{2n} \\ 0 & s & a_{33} & \cdots & a_{3(n-1)} & a_{3n} \\  0 & 0 & s & \ddots & a_{4(n-1)} & a_{4n}\\ \vdots & \ddots & \ddots &  \ddots & \ddots & \vdots \\ 0 & 0 & 0 &   \cdots & s & a_{nn} \end{bmatrix},$$
  the determinant of $A$ can be expressed as follows:
\begin{align}\label{eq:11}
|A|= & (-1)^{n-1}s^{n-1} \cdot a_{1n} + (-1)^{n-2} s^{n-2} \cdot  \big(a_{11}a_{2n}+a_{12}a_{3n}+a_{13}a_{4n}+\cdots+a_{1(n-1)}a_{nn}\big)+ \nonumber \\ & + (-1)^{n-3} s^{n-3} \cdot \Big( \smash{\displaystyle\sum_{1\leq i <j \leq (n-1)}} {a_{1i}a_{(i+1)j}a_{(j+1)n}} \Big)+ \cdots
\end{align} 
 
 Because the coefficient of $s^{n-2}$ in \eqref{eq:11} is zero by the assumption of the lemma, \begin{equation}\label{eq:12}
 a_{11}a_{2n}+a_{12}a_{3n}+a_{13}a_{4n}+\cdots+a_{1(n-1)}a_{nn}=0
 \end{equation}
 \begin{equation} \label{eq:13}
 \Rightarrow a_{12}a_{3n}+a_{13}a_{4n}+\cdots+a_{1(n-2)}a_{(n-1)n}=0 \ \text{and} \ a_{11}a_{2n}+a_{1(n-1)}a_{nn}=0
 \end{equation}

 Notice that we can express the coefficient of $s^{n-3}$ in \eqref{eq:11} as follows: \begin{align} \label{eq:14}
 \smash{\displaystyle\sum_{1\leq i <j \leq (n-1)}} {a_{1i}a_{(i+1)j}a_{(j+1)n}} =  \nonumber\\ \nonumber\\
=\ a_{11} \cdot a_{22} \cdot a_{3n}+a_{11}  \cdot a_{23} \cdot a_{4n}+a_{11}  \cdot a_{24} \cdot a_{5n}+a_{11}  \cdot a_{25} \cdot a_{6n}  +\cdots&+a_{11}  \cdot a_{2(n-1)} \cdot a_{nn} + \nonumber\\
+a_{12} \cdot a_{33} \cdot a_{4n}+a_{12}  \cdot a_{34} \cdot a_{4n}+a_{12}  \cdot a_{35} \cdot a_{6n}+ \cdots&+a_{12}  \cdot a_{3(n-1)} \cdot a_{nn}+ \nonumber\\ 
+a_{13} \cdot a_{44} \cdot a_{5n}+a_{13}  \cdot a_{45} \cdot a_{6n}+ \cdots & +a_{13}  \cdot a_{4(n-1)} \cdot a_{nn}+\nonumber\\
+a_{14} \cdot a_{55} \cdot a_{6n}+ \cdots &+a_{14}  \cdot a_{5(n-1)} \cdot a_{nn}+ \nonumber\\
 \ddots \hspace{1cm} & \ddots \hspace{1cm} \vdots \nonumber\\
 &+a_{1(n-2)} \cdot a_{(n-1)(n-1)} \cdot a_{nn}
 \end{align}

 The first equality at \eqref{eq:13} means that there are at least $n-3$ zeros among the set $$\{a_{12},a_{13},\dots,a_{1(n-2)}\}\cup\{a_{3n},a_{4n},\dots,a_{(n-1)n} \}$$
 
  Suppose that there are $k_1$ and $k_2$ zeros in $\{a_{12},a_{13},\dots,a_{1(n-2)}\}$ and $\{a_{3n},\dots,a_{(n-1)n} \}$ respectively. Note that we can see the expansion \eqref{eq:14} as an upper triangular half of an $(n-2) \times (n-2)$ chessboard by observing that setting $a_{1i}=0$ corresponds to colouring the $i^{th}$ row (from the top) black, and setting $a_{jn}=0$ corresponds to colouring the $(n-j+1)^{st}$ column (from the right) black for $i \in\{2,3,\dots,n-2\}$ and $j\in\{3,4,\dots,(n-1)\}$. (Notice that we do not colour for the cases $a_{11}=0$ and $a_{nn}=0$) (See Figure 1). 

\begin{figure}[H] 
\centering
\begin{tikzpicture}
\matrix (A) [matrix of nodes,
    nodes={draw, fill, minimum size=9mm}]
    {|[fill=white]|~&~&|[fill=white]|~&|[fill=white]|~&~&|[fill=white]|~&|[fill=white]|~\\
     &~&|[fill=white]|~&|[fill=white]|~&~&|[fill=white]|~&|[fill=white]|~\\
     &&|[fill=white]|~&|[fill=white]|~&~&|[fill=white]|~&|[fill=white]|~\\
     &&&~&~&~&~\\
     &&&&~&|[fill=white]|~&|[fill=white]|~\\
     &&&&&~&~\\
     &&&&&&~\\};
    \foreach \i [count=\ni from 3, count=\nj from 1] in {1,...,7}{ 
        \node[right=1mm of A-\i-7.east]{$a_{1\nj}$};
        \node[above=1mm of A-1-\i.north]{$a_{\ni9}$};
    }  
\end{tikzpicture} 
\caption{Case $n=9$ and $a_{14}=a_{16}=a_{17}=a_{49}=a_{79}=0$} 
\end{figure}

  The number of black squares is less than or equal to the number of zero terms in the expansion \eqref{eq:14}. We know that the total number of colored rows and columns is at least $(n-3)$ since $k_1+k_2 \geq n-3$. It is clear that to minimize the number of the black squares, coloured columns must be the leftmost ones and coloured rows must be lowermost ones, and $|k_1-k_2|$ must be $0$ or $1$ (depending on the parity of $n$). See Figure 2 for the minimizing example in the case $n=9$.
  
  It is easy to calculate that the minimum number of black squares is at least \begin{equation} \label{eq:15}
2\cdot \dfrac{n-3}{2}\cdot \dfrac{n-1}{2}\cdot \dfrac{1}{2}=\dfrac{n^2-4n+3}{4}
\end{equation}
 if $n$ is odd, and \begin{equation} \label{eq:16}
 \dfrac{n-4}{2}\cdot \dfrac{n-2}{2}\cdot \dfrac{1}{2}+\dfrac{n-2}{2}\cdot \dfrac{n}{2}\cdot \dfrac{1}{2}=\dfrac{n^2-4n+4}{4}
 \end{equation}
 if $n$ is even.

Hence, in the expansion \eqref{eq:14}, we know that at least $\dfrac{n^2-4n+3}{4}$ or $\dfrac{n^2-4n+4}{4}$ terms are zero. And the number of terms in \eqref{eq:14}, is ${{n-1}\choose{2}}=\dfrac{n^2-3n+2}{2}$.

\begin{figure}[H]
\centering
\begin{tikzpicture}
\matrix (A) [matrix of nodes,
    nodes={draw, fill, minimum size=9mm}]
    {~&~&~&|[fill=white]|~&|[fill=white]|~&|[fill=white]|~&|[fill=white]|~\\
     &~&~&|[fill=white]|~&|[fill=white]|~&|[fill=white]|~&|[fill=white]|~\\
     &&~&|[fill=white]|~&|[fill=white]|~&|[fill=white]|~&|[fill=white]|~\\
     &&&|[fill=white]|~&|[fill=white]|~&|[fill=white]|~&|[fill=white]|~\\
     &&&&~&~&~\\
     &&&&&~&~\\
     &&&&&&~\\};
    \foreach \i [count=\ni from 3, count=\nj from 1] in {1,...,7}{ 
        \node[right=1mm of A-\i-7.east]{$a_{1\nj}$};
        \node[above=1mm of A-1-\i.north]{$a_{\ni9}$};
    }

\end{tikzpicture} 
\caption{Case $n=9$ and $a_{15}=a_{16}=a_{17}=a_{39}=a_{49}=a_{59}=0$} 
\end{figure}
 
 Therefore, by \eqref{eq:15} and \eqref{eq:16} the number of nonzero terms in the \eqref{eq:14} is less than or equal to \begin{equation} \label{r:1}
 \dfrac{n^2-3n+2}{2}-\dfrac{n^2-4n+3}{4}=\dfrac{n^2-2n+1}{4}=\Big\lfloor \dfrac{n}{2}\Big\rfloor \Big\lfloor\dfrac{n-1}{2} \Big\rfloor
\end{equation}    if $n$ is odd, and 
\begin{equation}\label{r:2}
\dfrac{n^2-3n+2}{2}-\dfrac{n^2-4n+4}{4}=\dfrac{n^2-2n}{4}=\Big\lfloor \dfrac{n}{2}\Big\rfloor \Big\lfloor\dfrac{n-1}{2} \Big\rfloor
\end{equation}  if $n$ is even.
  As a consequence, by \eqref{eq:11}, \eqref{r:1} and \eqref{r:2} the absolute value of the coefficient of $s^{n-3}t^3$ is at most $\Big\lfloor \dfrac{n}{2}\Big\rfloor \Big\lfloor\dfrac{n-1}{2} \Big\rfloor$.
\end{proof}
 Hence, the maximum absolute determinant is $$1 \cdot s^{n-1}t+0 \cdot s^{n-2}t^2 +\Big\lfloor \dfrac{n}{2}\Big\rfloor \Big\lfloor\dfrac{n-1}{2} \Big\rfloor s^{n-3}t^3+\cdots$$
 by \eqref{eq:9}, the previous lemma and \eqref{eq:10}.
 
  Now we are going to consider the entries of the matrix that makes the coefficient of $s^{n-3}t^3$ equal to $\Big\lfloor \dfrac{n}{2}\Big\rfloor \Big\lfloor\dfrac{n-1}{2} \Big\rfloor$.
 
 If $n$ is odd, say $n=2k+1$, we have a $(2k-1) \times (2k-1)$ half chessboard, and we colour $2k-2$ rows and columns in total. To make the coefficient $\Big\lfloor \dfrac{n}{2}\Big\rfloor \Big\lfloor\dfrac{n-1}{2} \Big\rfloor$, we colour exactly $(k-1)$ rows (the lowermost ones) and $(k-1)$ columns (the leftmost ones). This means we set \begin{equation}\label{eq:17}
  a_{1(2k-1)}=a_{1(2k-2)}=\cdots=a_{1(k+1)}=0
\end{equation}  
 \begin{equation}\label{eq:18}
a_{3(2k+1)}=a_{4(2k+1)}=\cdots=a_{(k+1)(2k+1)}=0
\end{equation}

 Additionally all the nonzero terms in \eqref{eq:14} must be $t^3$, which means
\begin{equation}\label{eq:19}
a_{11}=a_{12}=\cdots=a_{1k}=t
\end{equation} 
 \begin{equation}\label{eq:20}
a_{(k+2)(2k+1)}=a_{(k+3)(2k+1)}=\cdots=a_{(2k+1)(2k+1)}=t
\end{equation}
 \begin{equation}\label{eq:21}
 a_{ij}=t \  \text{for all} \ (i,j) \in\{2,3,\dots,(k+1)\} \times \{(k+1),(k+2),\dots,(2k)\}
 \end{equation}
because there is a term $a_{1(i-1)} \cdot a_{ij} \cdot a_{(j+1)(2k+1)}$ that corresponds to a white square in the chessboard representation for each $ (i,j) \in\{2,3,\dots,(k+1)\} \times \{(k+1),(k+2),\dots,(2k)\}$.
 
  Moreover, recall the second equation at $\eqref{eq:13}$ and we already have $a_{11}=a_{(2k+1)(2k+1)}=t$ by \eqref{eq:19} and \eqref{eq:20}, then \begin{equation}\label{eq:23}
  a_{2(2k+1)}=a_{1(2k)}=0
  \end{equation}
 
We already know that the coefficient of $s^{n-1}$ in \eqref{eq:11} is not zero, so \begin{equation} \label{r:4}
   a_{1(2k+1)}=t
   \end{equation}
  
  So far we have arrived at the following matrix structure by \eqref{eq:17}-\eqref{r:4}\vspace{2mm} 
   \begin{equation}\label{f:2}
   \begin{bmatrix} \overmat{k}{t&t&\cdots&t}& \overmat{k}{0&0&\cdots&0}&t\\s&?&?&?&t&t&\cdots&t&0\\0&s&?&?&t&t&\cdots&t&0\\ \vdots&&\ddots&?&\vdots&\vdots&\ddots&\vdots&\vdots \\ 0&0&\cdots&s&t&t&\cdots&t&0\\ 0&0&\cdots&0&s&?&?&?&t\\ 0&0&\cdots&0&0&s&?&?&t\\\vdots&&\ddots&&\ddots&&\ddots&?&\vdots\\0&0&\cdots&&\cdots&&\cdots&s&t\end{bmatrix}
   \end{equation}

  For the case when $n$ is even, we have the same, but according to the choice of the difference between the number of zero terms in the sets $\{a_{12},a_{13},\dots,a_{1(n-2)}\}$ and $\{a_{3n},\dots,a_{(n-1)n} \}$, i.e. $(k_1-k_2)$, as $-1$ or $+1$; we are going to have $W_{2k+2}$ or $W_{2k+2}'$ as defined at $\eqref{e:10}$. From now on, we just consider the odd case, the even case can be done in exactly the same way.

 Recall that we know that the maximum absolute determinant is $$ 1 \cdot s^{n-1}t+0 \cdot s^{n-2}t^2 +\Big\lfloor \dfrac{n}{2}\Big\rfloor \Big\lfloor\dfrac{n-1}{2} \Big\rfloor s^{n-3}t^3+\cdots$$
 
  Because of \eqref{eq:6} and $s\gg t$ we have the following inequality, 
   $$M_n \geq s^{n-1}t+ \Big\lfloor \dfrac{n}{2}\Big\rfloor \Big\lfloor\dfrac{n-1}{2} \Big\rfloor s^{n-3}t^3>$$ \begin{equation} \label{eq:24}
   >s^{n-1}t+ \Big\lfloor \dfrac{n}{2}\Big\rfloor \Big\lfloor\dfrac{n-1}{2} \Big\rfloor s^{n-3}t^3+(-1) \cdot s^{n-4}t^4+s^{n-5}t^5\cdot {{n-1}\choose{4}}+s^{n-7}t^7\cdot {{n-1}\choose{6}}+ \cdots
   \end{equation}
   
    Hence, the coefficient of $s^{n-4}t^4$ must be $0$ to have the maximum absolute determinant. Now we are going to show that this fact forces all entries with a question mark in \eqref{f:2} to be filled with $0$. 
    
    \begin{proposition}\label{proposition5.7}
    Recall that we have defined $V_n$ in \eqref{f:1}. If any of the entries of the triangular block of $0$'s in the upper half is $t$ instead of $0$, then the determinant contains the term $ s^{n-3}t^3$.
    \end{proposition}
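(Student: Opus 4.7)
The plan is to exploit the very sparse structure of $V_n$ together with the Leibniz formula for the determinant. First I would record that the only nonzero entries of $V_n$ are $(1,j)$ for $1\leq j\leq n-1$ (all equal to $t$), the subdiagonal $(k,k-1)$ for $2\leq k\leq n$ (equal to $s$), and the last column $(k,n)$ for $2\leq k\leq n$ (equal to $t$). Consequently, in the Leibniz expansion only permutations $\sigma$ with $\sigma(1)\in\{1,\dots,n-1\}$ and $\sigma(k)\in\{k-1,n\}$ for every $k\geq 2$ can give a nonzero product, and each such product is a monomial $s^a t^{n-a}$ where $a$ counts the rows $k\geq 2$ that pick the subdiagonal entry.

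Fix $(i,j)$ with $2\leq i\leq j\leq n-1$ and let $\tilde V_n$ denote $V_n$ with its $(i,j)$-entry changed from $0$ to $t$. By multilinearity in that entry,
\[
\det(\tilde V_n)=\det(V_n)+t\cdot(-1)^{i+j}\det V_n^{(i,j)},
\]
so the new nonzero permutations compared with $V_n$ are precisely those with $\sigma(i)=j$. Because $\det V_n$ consists only of the term $(-1)^n(n-1)t^2 s^{n-2}$ by Proposition \ref{proposition5.3}, every $s^{n-3}t^3$ contribution to $\det(\tilde V_n)$ must come from such a $\sigma$. To achieve degree three in $t$, I would argue that exactly three rows must contribute a $t$: row $1$ always does, the new entry supplies one via $\sigma(i)=j$, and since column $n$ can be used by at most one row, the third $t$ must come from $\sigma(k)=n$ for a unique $k\in\{2,\dots,n\}\setminus\{i\}$, with every remaining row $k'\geq 2$ forced to pick its subdiagonal, $\sigma(k')=k'-1$.

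The column-consistency check then closes the argument. The columns used by the rows $\geq 2$ form the set $\bigl(\{1,\dots,n-1\}\setminus\{i-1,k-1\}\bigr)\cup\{j,n\}$, which must have cardinality $n-1$; since $j\leq n-1$ and $j\neq i-1$ (as $j\geq i$), this forces $j=k-1$, i.e.\ $k=j+1$, and then $\sigma(1)=i-1$ is determined. The hypotheses $2\leq i\leq j\leq n-1$ automatically ensure $1\leq i-1\leq n-1$, $j+1\leq n$ and $j+1\neq i$, so there is exactly one contributing permutation, yielding coefficient $\pm 1$ for the $s^{n-3}t^3$ monomial, which is what the proposition asserts. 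The main subtlety I expect is precisely this column-bookkeeping; it relies on the crucial bottleneck that column $n$ can host only one entry of the permutation, which rules out any alternative way of producing the three required $t$-factors.
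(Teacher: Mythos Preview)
Your argument is correct and follows essentially the same route as the paper: both identify the single permutation that picks the entry $a_{1,i-1}$ from the first row, the new entry $a_{i,j}$, the last-column entry $a_{j+1,n}$, and subdiagonal $s$'s elsewhere. The only difference is in how non-vanishing is justified: the paper notes that all $s^{n-3}$ contributions carry the same sign (so they cannot cancel), whereas you prove directly that this permutation is the \emph{unique} one producing an $s^{n-3}t^3$ monomial, which is a slightly more self-contained way to reach the same conclusion.
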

   \begin{proof}
   Consider the following permutation: 
$$\footnotesize\begin{bmatrix}t&\cdots&&\boxed{t}&\cdots&&\cdots&t&0 \\ \boxed{s}&&&\vdots &&&&&t\\ &\boxed{\ddots}&&\vdots&&&&&\\&&\boxed{\ddots}&\vdots&&&&&\vdots\\&&&s&\cdots&\boxed{t}&&&\\&&&&\boxed{\ddots}&\vdots&&&\vdots\\&&&&&s&\cdots&\cdots&\boxed{t}\\&&&&&&\boxed{\ddots}&&\vdots\\&&&&&&&\boxed{s}&t\end{bmatrix}$$

 This permutation gives the term $s^{n-3}t^3$, and because all the terms with $s^{n-3}$ have the same sign, $s^{n-3}t^3$ does not vanish.
   \end{proof}
 \begin{proposition}\label{proposition5.8}
 Consider the matrix in \eqref{f:2}, if there is $t$ in any of the entries that are filled with a question mark, then the determinant has the term $s^{n-4}t^4$.
 \end{proposition}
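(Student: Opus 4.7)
The plan is to argue exactly as in the proof of Proposition \ref{proposition5.7}, via a sign-consistency observation followed by the explicit exhibition of a witnessing permutation. Every admissible Hessenberg permutation (i.e.\ one with $\sigma(i)\ge i-1$ for every $i$) decomposes into disjoint interval cycles $[r_{p-1}+1,r_p]$ of length $l_p=r_p-r_{p-1}$; each such cycle contributes exactly one ``jump'' entry $A_{r_{p-1}+1,r_p}$ from the upper triangle together with $l_p-1$ subdiagonal $s$'s. Hence the contributions to $|A|$ proportional to $s^{n-4}$ are indexed by partitions of $\{1,\dots,n\}$ into four intervals $[1,a]\sqcup[a+1,b]\sqcup[b+1,c]\sqcup[c+1,n]$ with $1\le a<b<c\le n-1$, and each such contribution equals
\[
A_{1,a}\,A_{a+1,b}\,A_{b+1,c}\,A_{c+1,n}\cdot s^{n-4}
\]
times the sign $(-1)^{(a-1)+(b-a-1)+(c-b-1)+(n-c-1)}=(-1)^{n-4}$, which is independent of $a,b,c$. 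Consequently all contributions to the coefficient of $s^{n-4}t^4$ share a common sign and cannot cancel.

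Once this is in hand, it is enough to produce, for each $?$-position $(i_0,j_0)$ set to $t$, one four-interval partition whose four jump entries are all equal to $t$ in \eqref{f:2}. I would treat the odd case $n=2k+1$ in detail (the even case $n=2k+2$ is handled identically with the appropriate dimensions). If $(i_0,j_0)$ lies in the upper-left $?$-block, so $2\le i_0\le k$ and $i_0\le j_0\le k$, I would take $a=i_0-1$, $b=j_0$, $c=k+1$: the four jump entries $A_{1,i_0-1}$, $A_{i_0,j_0}$, $A_{j_0+1,k+1}$, $A_{k+2,2k+1}$ come respectively from the top-left row segment, the switched $?$-entry, the middle $t$-block, and the right-column segment of \eqref{f:2}, and each therefore equals $t$. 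If instead $(i_0,j_0)$ lies in the lower-right $?$-block, so $k+2\le i_0\le 2k$ and $i_0\le j_0\le 2k$, I would take $a=1$, $b=i_0-1$, $c=j_0$: the jump entries $A_{1,1}$, $A_{2,i_0-1}$, $A_{i_0,j_0}$, $A_{j_0+1,2k+1}$ again lie in the same three fixed $t$-regions plus the switched $?$. Verifying the inequalities $1\le a<b<c\le n-1$ in both cases is routine from the $?$-block definitions.

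The only step that requires real care is the sign computation showing that every four-cycle contribution carries the common sign $(-1)^{n-4}$, which is what rules out cancellation; once this is established, the rest of the argument is the direct construction above, which exhibits a single nonzero $s^{n-4}t^4$ contribution regardless of how the remaining $?$-entries have been assigned.
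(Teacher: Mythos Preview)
Your argument is correct. Both your proof and the paper's rest on the same two-step idea: first observe that all $s^{n-4}$ contributions carry the common sign $(-1)^{n-4}$ so no cancellation can occur, then exhibit one nonzero $s^{n-4}t^4$ term. The execution differs, however. The paper argues via a block decomposition: it splits the determinant along the $(k+1)$st row/column, invokes Proposition~\ref{proposition5.7} to find a $t^3s^{k-2}$ term in the upper-left $(k+1)\times(k+1)$ block, and pairs it with an obvious $ts^{k-1}$ contribution from the lower-right block; the second $?$-block is handled by a WLOG symmetry remark. You instead work directly at the level of the permutation expansion, writing down explicit interval breakpoints $(a,b,c)$ for each $?$-location and checking that the four jump entries land in the prescribed $t$-regions of \eqref{f:2}. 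Your route is more self-contained (it does not appeal to Proposition~\ref{proposition5.7}) and treats both $?$-blocks on equal footing, at the cost of a little index-checking; the paper's route is shorter once Proposition~\ref{proposition5.7} is in hand, but leans on the block structure and the symmetry claim.
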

 \begin{proof}
 Let $n=2k+1$ be odd, the other case can be done by the same way. Suppose that the $t$ is placed in the top-left triangular block of $?$'s WLOG. Then consider the determinant as follows:
  
 $$\scriptsize \begin{vmatrix} \begin{array} {ccccc|cccc} t&t&\cdots&t& 0&0&\cdots&0&t\\s&?&?&?&t&t&\cdots&t&0\\0&s&?&?&t&t&\cdots&t&0\\ \vdots&&\ddots&?&\vdots&\vdots&\ddots&\vdots&\vdots \\ 0&0&\cdots&s&t&t&\cdots&t&0\\ \hline 0&0&\cdots&0&s&?&?&?&\boxed{t}\\ 0&0&\cdots&0&0&\boxed{s}&?&?&t\\\vdots&&\ddots&&\ddots&&\boxed{\ddots}&?&\vdots\\0&0&\cdots&&\cdots&&\cdots&\boxed{s}&t\end{array} \end{vmatrix}$$
 
 The determinant of the upper-left $(k+1) \times (k+1)$ square contains the term $t^3s^{k-2}$ by Proposition \ref{proposition5.7}. And as we have boxed, there is a permutation that gives a $ts^{k-1}$ term from the bottom-right square. When we consider these two together we get the term $t^4s^{2k-3}$ which is exactly what we are looking for.
\end{proof}  
   
  As a corollary, we can state that all the entries with question mark must be filled with $0$ to have the maximum absolute determinant. Therefore the matrix which has the maximum absolute determinant is the matrix $W_{2k+1}$ as we have defined at \eqref{e:10}. \hspace{9.1cm}  QED.  
     
      Now we can discuss the condition $s\gg t$. We have used this in our proof in the following ways in \eqref{eq:7}, \eqref{eq:8}, \eqref{eq:9}, \eqref{eq:10} and \eqref{eq:24}. We can rewrite these inequalities setting $x\coloneqq \dfrac{s}{t}$:
   \begin{align}
   \label{eq:25 }&x^{n-1}+ \Big\lfloor \dfrac{n}{2}\Big\rfloor \Big\lfloor\dfrac{n-1}{2} \Big\rfloor x^{n-3} >x^{n-2} \cdot {{n-1}\choose{1}}+x^{n-4} \cdot {{n-1}\choose{3}}+x^{n-6} \cdot {{n-1}\choose{5}}+\cdots \\
  \label{eq:26} & x^{n-1}+ \Big\lfloor \dfrac{n}{2}\Big\rfloor \Big\lfloor\dfrac{n-1}{2} \Big\rfloor x^{n-3} > x^{n-3}\cdot {{n-1}\choose{2}}+x^{n-5}\cdot {{n-1}\choose{4}}+ \cdots \\
 \label{eq:27}  & x^{n-2}+ \Big\lfloor \dfrac{n}{2}\Big\rfloor \Big\lfloor\dfrac{n-1}{2} \Big\rfloor x^{n-3} > x^{n-3}\cdot {{n-1}\choose{2}}+x^{n-5}\cdot {{n-1}\choose{4}}+ \cdots \\
  \label{eq:28} &x^{n-3}>x^{n-5}\cdot {{n-1}\choose{4}}+x^{n-7}\cdot {{n-1}\choose{6}}+ \cdots \\
 \label{eq:29}  & x^{n-4}>x^{n-5}\cdot {{n-1}\choose{4}}+x^{n-7}\cdot {{n-1}\choose{6}}+ \cdots
   \end{align}
      
   Note that except in the last inequality, these allow us to take $x$ asymptotic to $n^2$.  However, the last one requires $n^4$. Fortunately we can overcome this issue using another way to tackle the problem. Recall that we deduced \eqref{eq:29} from \eqref{eq:24}. Before stating \eqref{eq:24}, we found that the maximum absolute determinant has the form $ 1 \cdot s^{n-1}t+0 \cdot s^{n-2}t^2 +\Big\lfloor \dfrac{n}{2}\Big\rfloor \Big\lfloor\dfrac{n-1}{2} \Big\rfloor s^{n-3}t^3+\cdots$ and the matrix with this determinant has the form \eqref{f:2} 
\begin{lemma}\label{lemma5.9}
Let $$ A = \scriptsize\begin{bmatrix} t&t&\cdots&t& 0&0&\cdots&0&t\\s&a_{22}&\cdots&a_{2k}&t&t&\cdots&t&0\\0&s&\ddots&\vdots&t&t&\cdots&t&0\\ \vdots&&\ddots&a_{kk}&\vdots&\vdots&\ddots&\vdots&\vdots \\ 0&0&\cdots&s&t&t&\cdots&t&0\\ 0&0&\cdots&0&s&a_{(k+2)(k+2)}&\cdots&a_{(k+2)(2k)}&t\\ 0&0&\cdots&0&0&s&\ddots&\vdots&t\\\vdots&&\ddots&&\ddots&\ddots&\ddots&a_{(2k)(2k)}&\vdots\\0&0&\cdots&&\cdots&\cdots&0&s&t\end{bmatrix},$$
where $A \in \mathcal{G}_s^{n \times n} (\{0,t\})$. And let \begin{equation} \label{eq:30}
\abs(|A|)=1 \cdot s^{n-1}t+\Big\lfloor \dfrac{n}{2}\Big\rfloor \Big\lfloor\dfrac{n-1}{2} \Big\rfloor s^{n-3}t^3-b_4 \cdot s^{n-4}t^4+b_5 \cdot s^{n-5}t^5-b_6 \cdot s^{n-6}t^6+\cdots
\end{equation}
Then, $b_m \cdot n \geq b_{m+1}$ for all $m\in\{4,5,\dots,n-1\}$.
\end{lemma}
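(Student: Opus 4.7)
The plan is to reinterpret $b_m$ combinatorially, use the rigid $0/t$ pattern of $A$ to force a ``left--cross--right'' structure on the admissible partitions, and finish with an elementary convolution estimate.

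First, I would expand $\det A$ by the Leibniz formula. Since $A$ is upper Hessenberg with constant subdiagonal $s$, the standard block decomposition of non-vanishing permutations shows that each nonzero term of $|A|$ corresponds uniquely to a decomposition of $[1,n]$ into consecutive intervals $[a_1,b_1],\ldots,[a_m,b_m]$ contributing $(-1)^{n-m}s^{n-m}\prod_i A_{a_i,b_i}$. Matching signs with \eqref{eq:30} identifies $b_m$ (for $m\ge 2$) with the number $N_m$ of \emph{valid} $m$-partitions, i.e.\ those with $A_{a_i,b_i}=t$ for every $i$.

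Next, I would exploit the zero pattern of $A$. Because $A_{1,j}=0$ for $j\in\{k+1,\ldots,2k\}$ and $A_{i,n}=0$ for $i\in\{2,\ldots,k+1\}$, while every $A_{i,j}$ with $2\le i\le k+1$ and $k+1\le j\le 2k$ equals $t$ and $A_{i,n}=t$ for $i\ge k+2$, any valid $m$-partition with $m\ge 2$ decomposes uniquely into an increasing tuple $1\le v_1<\cdots<v_p\le k$ (endpoints of the ``left'' blocks), a single ``cross'' block $[v_p+1,u_1]$ whose marker always lies in the fixed-$t$ upper-right block, and an increasing tuple $k+1\le u_1<\cdots<u_q\le n-1$ (endpoints of the ``right'' blocks), with $p+q=m-1$ and $p,q\ge 1$. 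Only the internal edges $v_i\to v_{i+1}$ and $u_j\to u_{j+1}$ depend on the ``?'' entries; the cross edge and the two endpoint edges are automatically valid. Letting $L_p$ (resp.\ $R_q$) count the valid tuples in $\{1,\ldots,k\}$ (resp.\ $\{k+1,\ldots,n-1\}$), this gives $b_m=\sum_{p+q=m-1,\,p,q\ge 1}L_p R_q$.

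The key elementary bound is $L_{p+1}\le (k-1)L_p$ for $p\ge 1$ (and symmetrically $R_{q+1}\le(k-1)R_q$): a valid $(p+1)$-tuple restricts to a valid $p$-tuple, and the new largest entry $v_{p+1}$ lies in $\{v_p+1,\ldots,k\}$, which has size at most $k-v_p\le k-1$ since $v_p\ge p\ge 1$. Splitting $b_{m+1}=L_1R_{m-1}+\sum_{p\ge 2}L_pR_{m-p}$ and applying the two bounds term-by-term yields
\[
b_{m+1}\;\le\;(k-1)L_1R_{m-2}+(k-1)b_m\;\le\;2(k-1)\,b_m\;=\;(n-3)\,b_m\;\le\;n\,b_m,
\]
where I use that $L_1R_{m-2}$ is one of the nonnegative summands of $b_m$. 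The hardest step is the structural decomposition in the second paragraph---verifying that the $0$-pattern in row~$1$ and column~$n$ really does force every valid partition to cross the upper-right fixed-$t$ block in exactly one step; once that is in hand the rest is a routine convolution estimate and the conclusion follows immediately.
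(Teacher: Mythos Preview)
Your proof is correct. The identification $b_m=N_m$, the left--cross--right decomposition forced by the zero pattern in row~$1$ and column~$n$, the resulting convolution formula $b_m=\sum_{p+q=m-1,\;p,q\ge 1}L_pR_q$, and the growth bounds $L_{p+1}\le(k-1)L_p$, $R_{q+1}\le(k-1)R_q$ are all valid; your splitting of $b_{m+1}$ then cleanly gives $b_{m+1}\le 2(k-1)b_m=(n-3)b_m\le nb_m$.

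The paper's proof reaches the same numerical bound but is organised differently. Rather than first extracting the product structure $b_m=\sum L_pR_q$, it works directly on the set $B_m$ of non-vanishing length-$m$ products $a_{1i_1}a_{(i_1+1)i_2}\cdots a_{(i_{m-1}+1)n}$ and writes down a single map $\phi\colon B_{m+1}\to B_m$: if $i_2\le k$ it deletes the first index $i_1$ (using $a_{1i_2}\ne 0$), and if $i_2\ge k+1$ it deletes the last index $i_m$ (using $a_{(i_{m-1}+1)n}\ne 0$); a direct count of the possible reinsertions then bounds each fibre by $(k-1)+(k-1)$. At bottom both arguments are the same ``forget one cut point and count the choices'' idea, and both hinge on exactly the structural fact you isolate as the hardest step, namely that every valid partition must cross the fixed $t$-block in a single step. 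Your convolution viewpoint makes the left/right factorisation explicit and is arguably more transparent; the paper's direct map is slightly quicker to write down since it treats both halves simultaneously without ever naming $L_p$ or $R_q$.
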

\vspace{-3mm}
\hspace{5mm}	\textit{Proof.} Recall that in the permutation definition, if we consider the determinant as a function of $s$, the absolute value of the coefficient of $s^{n-l}$ is: \begin{equation} \label{eq:31}
 \smash{\displaystyle\sum_{1\leq i_1<\cdots<i_{l-1} \leq (n-1)}} {a_{1i_1}a_{(i_1+1)i_2}a_{(i_2+1)i_3}\cdots a_{(i_{l-1}+1)n}} 
  \end{equation}
 
 Then \begin{equation} \label{eq:32}
 b_m \cdot t^m=\smash{\displaystyle\sum_{1\leq i_1<\cdots<i_{m-1} \leq (n-1)}} {a_{1i_1}a_{(i_1+1)i_2}a_{(i_2+1)i_3}\cdots a_{(i_{m-1}+1)n}}
 \end{equation}
  \\ 
  
   and 
\begin{equation}\label{eq:33}
b_{m+1} \cdot t^{m+1}=\smash{\displaystyle\sum_{1\leq i_1<\cdots<i_m \leq (n-1)}} {a_{1i_1}a_{(i_1+1)i_2}a_{(i_2+1)i_3}\cdots a_{(i_m+1)n}}
\end{equation}
  
  Clearly some of the terms are going to vanish such as the ones starting with $a_{1(k+1)}$ because $a_{1(k+1)}$ is already determined as $0$. Now we are going to define a function from non-vanishing terms in  \eqref{eq:33} to non-vanishing terms in \eqref{eq:32}. (From now on we consider them not as numbers, but as sequences of $a_{ij}$'s) Define the function $$\phi:  B_{m+1} \coloneqq \Big\{a_{1i_1}a_{(i_1+1)i_2}a_{(i_2+1)i_3}\cdots a_{(i_{m}+1)n} \neq 0 \big| 1 \leq i_1<\cdots<i_{m} \leq (n-1)\Big\} \rightarrow $$ $$
~\ \hspace{2.5cm}\rightarrow B_m  \coloneqq \Big\{a_{1i_1}a_{(i_1+1)i_2}a_{(i_2+1)i_3}\cdots a_{(i_{m-1}+1)n} \neq 0 \big| 1 \leq i_1<\cdots<i_{m-1} \leq (n-1)\Big\} $$ as follows:
\begin{itemize}
\item If $i_2 \leq k (=\frac{n-1}{2})$, we have $a_{1i_2} \neq 0$, $$\phi \big(a_{1i_1}a_{(i_1+1)i_2}a_{(i_2+1)i_3}\cdots a_{(i_{m}+1)n} \big) \coloneqq a_{1i_2}a_{(i_2+1)i_3}a_{(i_3+1)i_4}\cdots a_{(i_{m}+1)n}$$
\item If $i_2 \geq k+1$, then $i_{m-1} \geq i_2 \geq k+1$, we have $a_{(i_{m-1}+1)n} \neq 0$,  $$\phi \big(a_{1i_1}a_{(i_1+1)i_2}\cdots a_{(i_{m-1}+1)i_m}a_{(i_{m}+1)n} \big) \coloneqq a_{1i_1}a_{(i_1+1)i_2}\cdots a_{(i_{m-2}+1)i_{m-1}}a_{(i_{m-1}+1)n}$$
\end{itemize}
   
   We are going to show that the preimage of any element in the range has cardinality less than or equal to $n$.
   
   Consider the preimage of an arbitrary element in the range of $\phi$, $$\phi ^{-1} \Bigg[ \phi \big(a_{1i_1}a_{(i_1+1)i_2}\cdots a_{(i_{m-1}+1)i_m}a_{(i_{m}+1)n} \big) \Bigg] $$
   
   Define a set $P_1 \coloneqq \emptyset$ if $i_2 \geq k+1$, otherwise $P_1 \coloneqq$\\
\scalebox{1}{ $ \coloneqq \Big\{a_{11}a_{2i_2}a_{(i_2+1)i_3}\cdots a_{(i_{m-1}+1)i_m}a_{(i_{m}+1)n}, a_{12}a_{3i_2}a_{(i_2+1)i_3}\cdots a_{(i_{m-1}+1)i_m}a_{(i_{m}+1)n}, \dots  $ }\\
\scalebox{1}{ \hspace{8cm}$\dots, a_{1(i_2-1)}a_{i_2i_2}a_{(i_2+1)i_3}\cdots a_{(i_{m-1}+1)i_m}a_{(i_{m}+1)n} \Big\} $ }

Similarly define $P_2 \coloneqq \emptyset$ if $i_{m-1} \leq k$, otherwise $P_2 \coloneqq$\\
\scalebox{0.8}{ $ \coloneqq \Big\{a_{1i_1}a_{(i_1+1)i_2}\cdots a_{(i_{m-2}+1)i_{m-1}}a_{(i_{m-1}+1)(i_{m-1}+1)}a_{(i_{m-1}+2)n}, a_{1i_1}a_{(i_1+1)i_2}\cdots a_{(i_{m-2}+1)i_{m-1}}a_{(i_{m-1}+1)(i_{m-1}+2)}a_{(i_{m-1}+3)n}, \dots $ }

\scalebox{0.8}{ \hspace{10cm}$\dots, a_{1i_1}a_{(i_1+1)i_2}a_{(i_2+1)i_3}\cdots a_{(i_{m-2}+1)i_{m-1}}a_{(i_{m-1}+1)(n-1)}a_{nn} \Big\} $ }

It is not difficult to see that  $$\phi ^{-1} \Bigg[ \phi \big(a_{1i_1}a_{(i_1+1)i_2}\cdots a_{(i_{m-1}+1)i_m}a_{(i_{m}+1)n} \big) \Bigg] \subseteq P_1 \cup P_2$$ and for the cardinality of $P_1\cup P_2$ we have $$|P_1 \cup P_2| \leq |P_1| + |P_2| \leq (k-1) + (k-1) \leq n $$

Therefore, we get that the preimage of any element in $B_m$ has cardinality less than or equal to $n$ and this means $|B_m|\cdot n \geq |B_{m+1}|$.

 Note that:
 $$b_m \cdot t^m=\smash{\displaystyle\sum_{1\leq i_1<\cdots<i_{m-1} \leq (n-1)}} {a_{1i_1}a_{(i_1+1)i_2}a_{(i_2+1)i_3}\cdots a_{(i_{m-1}+1)n}}=t^m\cdot |B_m| \Rightarrow b_m=|B_m|$$
  and similarly $$b_{m+1}=|B_{m+1}|$$
 Hence \begin{equation*}
\hspace{5.6cm} b_{m+1}= |B_{m+1}| \leq n \cdot |B_m|=n\cdot b_m  \hspace{5.4cm} \cvd
  \end{equation*} 
  
  As a corollary of Lemma \ref{lemma5.9}, we can write the following inequality for \eqref{eq:30} if $\dfrac{s}{t} \geq n$:
\begin{align} \label{eq:34}
\abs(|A|)&=1 \cdot s^{n-1}t+\Big\lfloor \dfrac{n}{2}\Big\rfloor \Big\lfloor\dfrac{n-1}{2} \Big\rfloor s^{n-3}t^3-b_4 \cdot s^{n-4}t^4+b_5 \cdot s^{n-5}t^5-b_6 \cdot s^{n-6}t^6+\cdots= \nonumber\\ &
= s^{n-1}t+\Big\lfloor \dfrac{n}{2}\Big\rfloor \Big\lfloor\dfrac{n-1}{2} \Big\rfloor s^{n-3}t^3-s^{n-5}t^4\big(s\cdot b_4-t \cdot b_5 \big)-s^{n-7}t^6\big(s\cdot b_6-t \cdot b_7 \big)+ \cdots \leq \nonumber\\ &
\leq s^{n-1}t+\Big\lfloor \dfrac{n}{2}\Big\rfloor \Big\lfloor\dfrac{n-1}{2} \Big\rfloor s^{n-3}t^3-s^{n-5}t^4\big(s\cdot b_4-t \cdot n\cdot b_4 \big)-s^{n-7}t^6\big(s\cdot b_6-t \cdot n \cdot b_6 \big)+ \cdots = \nonumber\\ &
= s^{n-1}t+\Big\lfloor \dfrac{n}{2}\Big\rfloor \Big\lfloor\dfrac{n-1}{2} \Big\rfloor s^{n-3}t^3-s^{n-5}t^5b_4 \big(\dfrac{s}{t} -  n  \big)-s^{n-7}t^7b_6\big(\dfrac{s}{t} - n  \big)+ \cdots \leq \nonumber\\ & \leq  s^{n-1}t+\Big\lfloor \dfrac{n}{2}\Big\rfloor \Big\lfloor\dfrac{n-1}{2} \Big\rfloor s^{n-3}t^3
\end{align}

 So having $\dfrac{s}{t} \geq n$ is sufficient for the last case, namely \eqref{eq:29} is not a requirement anymore if $\dfrac{s}{t} \geq n$. Then there are four inequalities left that we still need to deal with: \eqref{eq:25 }-\eqref{eq:28}. 
 
 \begin{lemma}\label{lemma5.10}
 If $\dfrac{s}{t}=x> \dfrac{1}{\cosh^{-1}(2)}\cdot n^2$ these inequalities hold for $n \geq 2$.
 \end{lemma}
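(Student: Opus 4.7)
The plan is to set $y = 1/x$, so the hypothesis becomes $y < \cosh^{-1}(2)/n^2$, and to observe that in each of \eqref{eq:25 }--\eqref{eq:28} the right-hand side is a parity-separated truncation of the binomial expansion of $(1+y)^{n-1}$. Define
\[
S_e := \sum_{k \text{ even}} \binom{n-1}{k} y^k, \qquad S_o := \sum_{k \text{ odd}} \binom{n-1}{k} y^k.
\]
Using the elementary bound $\binom{n-1}{k} \leq (n-1)^k/k!$, one gets $S_e \leq \cosh(u)$ and $S_o \leq \sinh(u)$ where $u := (n-1)y$. The hypothesis forces $u \leq \cosh^{-1}(2)(n-1)/n^2 \leq \cosh^{-1}(2)/4$ for all $n \geq 2$ (maximum at $n=2$, since $(n-1)/n^2 \leq 1/4$), so in particular $\cosh(u) < 2$. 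This is exactly the role played by the constant $\cosh^{-1}(2)$ in the statement.

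With these bounds, \eqref{eq:26} is immediate: dividing by $x^{n-1}$ turns it into $1 + \lfloor n/2\rfloor\lfloor(n-1)/2\rfloor y^2 > S_e - 1$, and since $S_e - 1 < 1 \leq \text{LHS}$ we are done. For \eqref{eq:25 }, the same division gives RHS $= S_o \leq \sinh(u) \leq u\cosh(u) \leq 2u \leq \cosh^{-1}(2)/2 < 1 \leq \text{LHS}$.

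Inequality \eqref{eq:27} is more delicate, since after dividing by $x^{n-3}$ its LHS has leading term only $x$ (not $x^2$). The plan is to rewrite the required inequality as
\[
x \;>\; \binom{n-1}{2} - \lfloor n/2 \rfloor \lfloor (n-1)/2 \rfloor \;+\; y^{-2}\left(S_e - 1 - \binom{n-1}{2}y^2\right),
\]
and then handle the two pieces separately. A short parity computation shows the first difference equals $(n-2)^2/4$ when $n$ is even and $(n-1)(n-3)/4$ when $n$ is odd, both $\leq n^2/4$. For the tail I would apply the quantitative refinement $\cosh(u) - 1 - u^2/2 \leq u^4 \cosh(u)/24 \leq u^4/12$ (the second inequality uses $\cosh(u) < 2$), giving at most $(n-1)^4 y^2/12 \leq \cosh^{-1}(2)^2/12 < 0.15$. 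Adding the two pieces yields something well below $n^2/\cosh^{-1}(2) < x$.

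The main obstacle is inequality \eqref{eq:28}, where dividing by $x^{n-3}$ reduces the claim to $1 > y^{-2}(S_e - 1 - \binom{n-1}{2}y^2)$. Here the crude bound $S_e \leq \cosh(u) < 2$ is of no use, because it leaves RHS unbounded as $y \to 0$ relative to what we need; one must instead retain the fourth-order term and apply $\cosh(u) - 1 - u^2/2 \leq u^4/12$. This yields $\text{RHS} \leq (n-1)^4 y^2/12 \leq \cosh^{-1}(2)^2/12 < 1$. This is the technical heart of the lemma: the constant $1/\cosh^{-1}(2)$ is chosen precisely so that both $\cosh(u) < 2$ (which powers the $u^4\cosh(u)/24 \leq u^4/12$ step) and the resulting $(n-1)^4 y^2/12$ bound remain strictly below $1$ throughout the entire range $n \geq 2$.
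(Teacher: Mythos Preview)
Your argument is correct but organised differently from the paper's. The paper bounds $\binom{n-1}{k}$ by $n^k/k!$ (not $(n-1)^k/k!$) and uses the hypothesis in the form $n^2/x<\cosh^{-1}(2)$ to replace every factor $n^{2}/x$ by $\cosh^{-1}(2)$. For the critical inequality \eqref{eq:27} this collapses the right-hand side to at most $x^{n-2}\bigl[\cosh(\cosh^{-1}(2))-1\bigr]=x^{n-2}$, and \eqref{eq:28} is handled identically to give at most $x^{n-3}$; \eqref{eq:25 } is done via the cruder bound $n/x<2/3$ and a $\sinh(1)$ estimate, and \eqref{eq:26} is deduced from \eqref{eq:25 }. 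No fourth-order Taylor remainder appears. Your route, with $u=(n-1)y<\cosh^{-1}(2)/4$, stays far from the threshold, which is why you need the refinement $\cosh(u)-1-u^2/2\le u^4\cosh(u)/24\le u^4/12$ to handle \eqref{eq:28}. Both proofs are valid; the paper's makes the role of the constant transparent (it is essentially tight at \eqref{eq:27} in that argument), whereas in yours none of the estimates comes close to saturating --- so your closing remark that the constant is ``chosen precisely'' for your bounds is a misreading of its purpose: with your inequalities a noticeably smaller constant would also work.
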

\begin{proof}
Start with the first one, \eqref{eq:25 }, we know that $$\dfrac{3}{2} \cdot n< \dfrac{1}{\cosh^{-1}(2)} \cdot n^2<x \ ,$$   then
  \begin{align*}
  x^{n-2}{{n-1}\choose{1}}+x^{n-4}{{n-1}\choose{3}}+x^{n-6}{{n-1}\choose{5}}+\cdots \ & < \ x^{n-2}\cdot n+x^{n-4}\cdot \dfrac{n^3}{3!}+x^{n-6}\cdot\dfrac{n^5}{5!}+\cdots \ < \\ &
< \ \dfrac{2}{3}\cdot x^{n-1}+\big(\dfrac{2}{3}\big)^3 \cdot \dfrac{x^{n-1}}{3!}+ \big(\dfrac{2}{3}\big)^5 \cdot \dfrac{x^{n-1}}{5!}+\cdots \ < \\ & < \ \dfrac{2}{3} \cdot x^{n-1} \cdot \big( 1+ \dfrac{1}{3!} + \dfrac{1}{5!}+ \cdots \big) \ < \\ & < \ \dfrac{2}{3} \cdot x^{n-1} \cdot \sinh(1) \ <  \ x^{n-1} \ < \\ & < \  x^{n-1}+ \Big\lfloor \dfrac{n}{2}\Big\rfloor \Big\lfloor\dfrac{n-1}{2} \Big\rfloor x^{n-3} \ \checkmark
  \end{align*}

For the second one \eqref{eq:26}, use the first inequality \eqref{eq:25 },
\begin{align*}
x^{n-1}+ \Big\lfloor \dfrac{n}{2}\Big\rfloor \Big\lfloor\dfrac{n-1}{2} \Big\rfloor x^{n-3} &> x^{n-2}{{n-1}\choose{1}}+x^{n-4}{{n-1}\choose{3}}+x^{n-6}{{n-1}\choose{5}}+\cdots \  > \\& > \ x^{n-3}\cdot n \cdot {{n-1}\choose{1}}+x^{n-5} \cdot n \cdot {{n-1}\choose{3}}+\cdots > \\ & > x^{n-3}\cdot {{n-1}\choose{2}}+x^{n-5}\cdot {{n-1}\choose{4}}+ \cdots  \ \checkmark 
\end{align*}

For the third case \eqref{eq:27},
\begin{align*}
x^{n-3}\cdot {{n-1}\choose{2}}+x^{n-5}\cdot {{n-1}\choose{4}}+ \cdots \ < \ x^{n-3} \cdot \dfrac{n^2}{2!}+ x^{n-5} \cdot \dfrac{n^4}{4!}+x^{n-7} \cdot \dfrac{n^6}{6!}+ \cdots \ & < \\ < \ 
x^{n-2}  \cdot \dfrac{(\cosh^{-1}(2))^1}{2!}+x^{n-3} \cdot  \dfrac{(\cosh^{-1}(2))^2}{4!}+x^{n-4} \cdot  \dfrac{(\cosh^{-1}(2))^3}{6!}+ \cdots \ &< \\
< \ x^{n-2}  \cdot \dfrac{(\cosh^{-1}(2))^2}{2!}+x^{n-3} \cdot  \dfrac{(\cosh^{-1}(2))^4}{4!}+x^{n-4} \cdot  \dfrac{(\cosh^{-1}(2))^6}{6!}+ \cdots \ &< \\
< \ x^{n-2}  \cdot \dfrac{(\cosh^{-1}(2))^2}{2!}+x^{n-2} \cdot  \dfrac{(\cosh^{-1}(2))^4}{4!}+x^{n-2} \cdot  \dfrac{(\cosh^{-1}(2))^6}{6!}+ \cdots \ &< \\
< \ x^{n-2} \cdot \bigg[\cosh\big(\cosh^{-1}(2) \big) -1 \bigg] \ = \ x^{n-2} \ < \ x^{n-2}+ \Big\lfloor \dfrac{n}{2}\Big\rfloor \Big\lfloor\dfrac{n-1}{2} \Big\rfloor x^{n-3} \ & \checkmark
\end{align*}

And for the last inequality \eqref{eq:28},
\begin{align*}
x^{n-5}\cdot {{n-1}\choose{4}}+x^{n-7}\cdot {{n-1}\choose{6}}+ \cdots \  & < \ x^{n-5} \cdot \dfrac{n^4}{4!} + x^{n-7}\cdot   \dfrac{n^6}{6!}+\cdots \ < \\ &
< x^{n-3}  \cdot \dfrac{(\cosh^{-1}(2))^2}{4!} +  x^{n-4}  \cdot \dfrac{(\cosh^{-1}(2))^3}{6!}+ \cdots \ < \\ &
<  x^{n-3}  \cdot  \dfrac{(\cosh^{-1}(2))^4}{4!} +  x^{n-4}  \cdot \dfrac{(\cosh^{-1}(2))^6}{6!}+ \cdots \ < \\ &
<  x^{n-3}  \cdot  \dfrac{(\cosh^{-1}(2))^4}{4!} +  x^{n-3}  \cdot \dfrac{(\cosh^{-1}(2))^6}{6!}+ \cdots \ < \\ &
< x^{n-3}  \cdot \bigg[\cosh\big(\cosh^{-1}(2) \big) -1 \bigg] \ = \ x^{n-3} \ \checkmark
\end{align*}
\end{proof}
 
\begin{remark}\label{remark5.11}
{\rm For simplicity we can write $\dfrac{4}{5}$ instead of $\dfrac{1}{\cosh^{-1}(2)}$ since we are not interested in the strict lower bound. Thus Theorem \ref{theorem5.2} and Lemma \ref{lemma5.10} finish the proof of Theorem \ref{theorem2.9}. }
\end{remark}
\begin{remark}\label{remark5.12}
{\rm The lower bound in Theorem \ref{theorem2.9} is not sharp, but it is clear from the inequality \eqref{eq:28} that it is asymptotic to $n^2$ for our proof to hold. }
\end{remark}
\bigskip
\section{Concluding Remarks}   We found the maximum absolute determinants (and the matrices giving the corresponding values) for the cases $\dfrac{s}{t}>\dfrac{4}{5} \cdot n^2$ (Theorem \ref{theorem2.9}) and $0 \leq \dfrac{s}{t} \leq 1$ (Theorem \ref{theorem2.6}); in addition we already know the case $\dfrac{s}{t} \leq 0$ (Theorem \ref{theorem2.5}). Furthermore we showed that $1$ is the exact upper bound of $\dfrac{s}{t}$ for Theorem \ref{theorem2.6} (Proposition \ref{proposition4.1}), and found the first maximizing matrix after $\dfrac{s}{t}=1$ (Proposition \ref{proposition4.2}) with the recurrence relation satisfied by its absolute determinant value (Proposition \ref{proposition4.3}).  Nevertheless, for the most part the problem as to what happens if $1<\dfrac{s}{t}<\dfrac{4}{5} \cdot n^2$ still remains open. Note that as $\dfrac{s}{t}$ goes to $1$ from $\dfrac{4}{5} \cdot n^2$, the matrix which gives the maximum absolute determinant changes from having a localized to a much more uniform structure. In view of these observations, further investigations might be based on the following questions:
 \begin{itemize}
 \item[1.] Which matrices maximize the absolute determinant as $\dfrac{s}{t}$ goes to $1$ from $n^2$, in other words, what are the transition forms? 
 \item[2.] It is reasonable to work on the case $s=2$ and $t=1$ as a first step to understanding the transition form. Is $U_n^{(rc)}$ the maximizing matrix in this case when $n \geq 4$?
 \item[3.] How many times do transitions occur depending on the size of the matrices?
 \item[4.] We know that for $n\geq4$ the first transition occurs at $\dfrac{s}{t}=1$ from $U_n$ to $U_n^{(rc)}$. When does the second transition occur, in other words what is the maximum possible value of $\epsilon(n)$ in Remark \ref{remark4.4}? 
 \item[5.] What should be the precise lower bound in Theorem \ref{theorem5.2}? Is it asymptotic to $n^2$?
 \item[6.] When do other transitions occur?
 \item[7.] Note that the number of $t$'s in the matrices $U_n$ \eqref{e:7}, $U_n^{(rc)}$ \eqref{e:12} and $W_n$ \eqref{e:10} are equal. Is it true that for any fixed $n$, all maximizing matrices have the same number of $t$'s? 
 \item[8.] Moreover, there is no alternating sign among the nonzero permutations in the determinants for any of these three matrices. Is this condition valid for all maximizing matrices?

\end{itemize}  
 \bigskip
 
\section*{Acknowledgements}

We thank Professor N. J. Higham for drawing our attention to questions concerning Bohemian Matrices.  The work reported here was carried out as part of an undergraduate summer research project by Mr Ahmet Abdullah Kele\c{s} at the University of Bristol, June--July 2019.  Mr Kele\c{s} is grateful for financial support and the hospitality of the School of Mathematics at the University of Bristol during his visit.  JPK was supported by a Royal Society Wolfson Research Merit Award, EPSRC Programme Grant EP/K034383/1 LMF: $L$-Functions and Modular Forms, and by ERC Advanced Grant 740900 (LogCorRM).

%%%%%%%%%%%%%%%%%%%%%%%%%%%%%%%%%%%%%%%%%%%%%%%%%%%%%%%%%%%%%


\begin{thebibliography}{1}
\bibitem{1}
Sylvester J. J. ``Thoughts on inverse orthogonal matrices, simultaneous sign successions, and
tessellated pavements in two or more colours, with applications to Newton's rule, ornamental
tile-work, and the theory of numbers." \textit{Philos Mag}. 1867; 34: 461-475.

\bibitem{2}
Hadamard J. ``Resolution d'une question relative aux determinants." \textit{Bull Sci Math}. 1893; 17: 240--
246.

\bibitem{3}
Bressoud, D. \textit{Proofs and Confirmations: The Story of the Alternating Sign Matrix Conjecture}. Cambridge, England: Cambridge University Press, 1999.

\bibitem{4}
Vu V. ``Random discrete matrices." \textit{Bolyai Soc Math Stud 17} (2008), 257-280.

\bibitem{5}
Chan, E. Y. S., Corless. R. M., Gonzalez-Vega, L., Sendra, J. R., Sendra, J. and Thornton, S.E. ``Bohemian Upper Hessenberg Matrices." arXiv:1809.10653, September 2018.

\bibitem{6}
Thornton, Steven E. ``Algorithms for Bohemian Matrices" (2019). \textit{Electronic Thesis and Dissertation Repository}. 6069. https://ir.lib.uwo.ca/etd/6069

\bibitem{7}
Thornton, Steven E. ``The Characteristic Polynomial Database." published electronically at http://bohemianmatrices.com/cpdb, 08.07.2019.

\bibitem{8}
Fasi, Massimiliano and Negri Porzio, Gian Maria. ``Determinants of Normalized Bohemian Upper Hessemberg Matrices." \textit{MIMS EPrint} 2019.7, Manchester Institute for Mathematical Sciences, The University of Manchester, UK, 2019. To appear in Electron. J. Linear Algebra. 

\bibitem{9}
Ching L. ``The maximum determinant of an $n\times n$ lower Hessenberg $(0, 1)$ matrix." \textit{Linear Algebra and its Applications}, 183 (1993): 147-153.

\bibitem{10}
Harary, F. ``The determinant of the adjacency matrix of a graph." \textit{SIAM Rev.}, 4 (1962): 202-210.

\end{thebibliography}
\end{document}